\tikzset{>=stealth}
\DeclareMathAlphabet{\mathcal}{OMS}{cmsy}{m}{n}
\newcommand{\ignore}[1]{}
   \newenvironment{my-enumerate}[1][(1)]{
\vspace{1ex}
\begin{adjustwidth}{1em}{}
    \begin{enumerate}[#1]
}{
    \end{enumerate}
\end{adjustwidth}
\vspace{1ex}
}
\newcommand{\X}{\mathcal{X}}
\newcommand{\eqq}{=^?}
\DeclareMathOperator{\pos}{\mathcal{P}\mathit{os}}
\DeclareMathOperator{\fpos}{\mathcal{F\!P}\mathit{os}}
\DeclareMathOperator{\var}{\mathcal{V}\mathit{ar}}
\DeclareMathOperator{\RHS}{\mathit{RHS}}
\DeclareMathOperator{\Mgu}{\mathit{mgu}}
\newtheorem{defn}{Definition}
\newcommand{\mylongrightarrow}{\xrightarrow{\hspace*{0.75cm}}}
\theoremstyle{plain}
\newtheorem{theorem}{Theorem}[section]
\newtheorem{Lemma}[theorem]{Lemma}
\newtheorem{corollary}[theorem]{Corollary}
\newtheorem{claim}[theorem]{Claim}
\theoremstyle{definition}
\newtheorem{proposition}[theorem]{Proposition}
\title{Notes on Lynch-Morawska Systems}
\titlerunning{Notes on Lynch-Morawska Systems}
\author{
    Daniel S. Hono II\inst{1}
    \and
    Namrata Galatage\inst{1}
    \and
    Kimberly A. Gero\inst{2}
    \and
    Paliath Narendran\inst{1}
    \and 
    Ananya Subburathinam\inst{1}
    
}
\authorrunning{D. S. Hono II, N. Galatage, K.A. Gero, P. Narendran, and A. Subburathinam}
\institute{
    University at Albany---SUNY (USA)  \\
    \email{\{dhono,ngalatage,dran,asubburathinam\}@albany.edu}\\
    \and 
    The College of Saint Rose (USA) \\
    \email{gerok@strose.edu}
    
}
\date{}
\begin{document}

\begin{titlepage}

{\vspace*{-1in}\hspace*{-.5in}
\parbox{7.25in}{
\setlength{\baselineskip}{13pt}
\makebox{\ }\hfill {\footnotesize College of Engineering and Applied Sciences} \\
\makebox{\ }\hfill {\footnotesize Computer Science Department} \\
\makebox{\ }\\
}

\vspace{-.775in}}

\epsfxsize=3.15in
\epsfclipon
\hspace{-0.5in}{\raggedright{
\epsffile{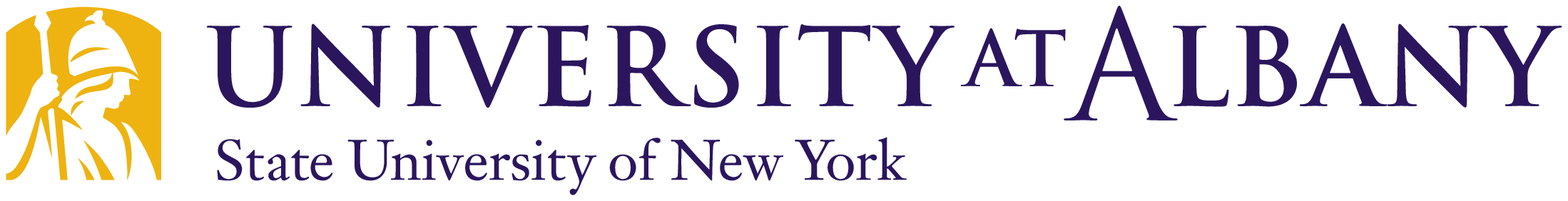}
}}

\vspace{2in}

\begin{center}
{\huge\bf Notes on Lynch-Morawska Systems}\\[+25pt]
\end{center}

\vspace{1.5in}

\begin{center}
{\large\bf 
Daniel S. Hono II\\[+3pt]
Namrata Galatage\\[+3pt]
Kimberly A. Gero\\[+3pt]
Paliath Narendran\\[+3pt]
Ananya Subburathinam}\\

\end{center}
\end{titlepage}

\date{}

\thispagestyle{plain}

\begin{abstract}
In this paper we investigate convergent term rewriting systems that
conform to the criteria set out by Christopher Lynch and Barbara
Morawska in their seminal paper \emph{``Basic Syntactic Mutation.''}  
The equational
unification problem modulo such a rewrite system is solvable in
polynomial-time. In this paper, we derive properties of such a system
which we call an $LM$-system.  We show, in particular, that the
rewrite rules in an $LM$-system have no left-\ or right-overlaps.

We also show that despite the restricted nature of an $LM$-system,
there are important undecidable problems, such as the \emph{deduction
  problem} in cryptographic protocol analysis (also called the
\emph{cap problem}) that remain undecidable for $LM$-systems.

\medskip{}
{\em Keywords:} Equational unification, Term rewriting, 
Polynomial-time complexity, \textbf{NP}-completeness.

\end{abstract}

\section{Introduction}
Unification modulo an equational theory $E$ (equational unification
or $E$-unification) is an undecidable problem
in general. Even in cases where it is decidable, it is often
of high complexity.
In their seminal paper ``Basic Syntactic Mutation'' Christopher Lynch and
Barbara Morawska present syntactic criteria on
equational axioms~$E$
that guarantee a polynomial time algorithm for the 
corresponding $E$-unification
problem. As far as we know these are the only purely syntactic
criteria that ensure a polynomial-time algorithm for unifiability.

\ignore{
We also exhibit
theories whose
unification problems do not fit all of the constraints but still
have polynomial time algorithms. 
}

In~\cite{NotesOnBSM} it was shown that relaxing any of the constraints imposed upon a term-rewriting system 
$R$ by the conditions given by Lynch and Morawska results in an unification problem that is $NP$-Hard. 
Thus, these conditions are tight in the sense that relaxing any of them leads to an intractable unification problem (assuming $P \not = NP$). In this work, we continue to investigate the consequences of the syntactic criteria given in ``Basic Syntactic Mutation". As in ~\cite{NotesOnBSM} we consider the case where $E$ forms a convergent and forward closed term-rewriting system, which we call $LM$-Systems. This definition differs from that of~\cite{LynchMorawska} in which $E$ is saturated by paramodulation and not necessarily convergent. The criteria of \emph{determinism} introduced in ~\cite{LynchMorawska} remains essentially unchanged.

We give a structural characterization of these systems by showing that
if $R$ is an $LM$-System, then there are no overlaps between the
left-hand sides of any rules in $R$ and there are no forward-overlaps
between any right-hand side and a left-hand side. This
characterization shows that $LM$-Systems form a very restricted
subclass of term-rewriting systems. Any term-rewriting system that
contains overlaps of these kinds cannot be an $LM$-System. Using these
results, we show that saturation by paramodulation is equivalent to
forward-closure when considering convergent term-rewriting systems
that satisfy all of the remaining conditions for $LM$-Systems.

Despite their restrictive character, we show in Section~$5$ that the
cap problem, which is undecidable in general, remains undecidable when
restricted to $LM$-Systems. The cap problem (also called the deduction
problem) originates from the field of cryptographic protocol
analysis. This result shows that $LM$-Systems are yet strong enough to
encode important undecidable problems. The reduction considered is
essentially the same as that given in~\cite{NotesOnBSM} to show that
determining if a term-rewriting system is subterm-collapsing given all
the other conditions of $LM$-Systems is undecidable.

\ignore{
In the interest of brevity we have omitted most of the proofs of results appearing in this paper. The interested reader can
find all of the details in the technical report.\footnote{arxiv link here}
}

\section{Notation and Preliminaries}
We assume the reader is familiar with the usual notions and concepts in term
rewriting systems~\cite{Term} and equational unification~\cite{BaaderSnyd-01}.
We consider rewrite systems over ranked signatures, usually denoted $\Sigma$,
and a possibly infinite set of variables, usually denoted $\X$. The set of all
terms over~$\Sigma$ and~$\X$ is denoted as $T(\Sigma, \X)$. An \emph{equation}
is an ordered pair of terms $(s, \, t)$, usually written as $s \approx t$. Here
$s$~is the left-hand side and $t$~is the right-hand side of the
equation~\cite{Term}. A rewrite rule is an equation $s \approx t$ where $\var(t)
\subseteq \var(s)$, usually written as~$s \to t$. A term rewriting system is a
set of rewrite rules.

A set of equations $E$ is
\emph{subterm-collapsing}\footnote{Non-subterm-collapsing theories are called
\emph{simple} theories in~\cite{BHSS}} if and only if there are terms $t$ and
$u$ such that $t$ is a proper subterm of $u$ and $E$ $\vdash$ $t \approx u$ (or
$t =_E^{} u$)~\cite{BHSS}.
A set of equations $E$ is
\emph{variable-preserving}\footnote{Variable-preserving theories are also called
\emph{non-erasing} or \emph{regular} theories~\cite{Term}.}
if and only if for every equation $t \approx u$ in $E$,
$\var(t) = \var(u)$~\cite{Ohlebusch95}.
A term rewriting system is \emph{convergent} if and only
if it is confluent and terminating~\cite{Term}. If $R$ is a term rewriting system denote by $IRR(R)$ the set of all $R$-irreducible terms, i.e. the set of all $R$ normal forms. 

A term $t$ is \emph{$\bar{\epsilon}$-irreducible} modulo a rewriting system~$R$ if and only if
every \emph{proper} subterm of~$t$ is irreducible.
A term $t$ is an {\em innermost redex\/} of a rewrite system $R$ if all
proper subterms of $t$ are irreducible and $t$ is \emph{reducible},
i.e., $t$ is \emph{$\bar{\epsilon}$-irreducible} and $t$ is
an instance of the
left-hand-side of a rule in $R$. 

The following definition is used in later sections to simplify the 
exposition. It is related to the above notion of $\bar{\epsilon}$-irreducible. 
Specifically, we define a normal form of a term $t$ that depends
also on the reduction path used. More formally, we have: 

\begin{defn} Let $R$ be a rewrite-system. A term~$t$ is said to be
an \textbf{$\bar{\epsilon}$-normal-form} of a term~$s$ if and only if
$s \, \rightarrow_{R}^* \, t$, no proper subterm of~$t$ is reducible
and none of the rewrites in the sequence of reduction steps is at the
root.
\end{defn}

Given a set of equations $E$, the set of ground instances of $E$ is denoted by
$Gr(E)$. We assume a reduction order $\prec$ on $E$ which is total on ground
terms.  We extend this order to equations as $(s \approx t) \prec (u \approx
v)$ iff $\{s, t\} \prec_{mul} \{u, v\}$, where $\prec_{mul}$ is the multiset
order induced by $\prec$.  An equation~$e$ is \emph{redundant in $E$} if and
only if every ground instance $\sigma(e)$~of~$e$ is a consequence of equations
in~$Gr(E)$ which are smaller than $\sigma(e)$ modulo~$\prec$~\cite{LynchMorawska}.

\subsection{Paramodulation}
Lynch and Morawska define \emph{paramodulation}, which is an extension to the critical pair rule. 
Since our focus is only on convergent term rewriting
systems, this definition can be modified for rewrite rules as the following inference rule:
\[\infer{\sigma(u[t]_p) \approx \sigma(v)}{u[s']_p \approx v &\quad s \to t}\]
where $\sigma = \Mgu(s \eqq s')$ and $p \in \fpos(u)$. A set of equations
$E$ is \emph{saturated by paramodulation} if all inferences among equations in $E$
using the above rule are redundant.

In this work, we consider rewrite systems that are \emph{forward-closed} as
opposed to saturated by paramodulation. A later result of section 4 will
show that saturation by paramodulation and forward-closure are equivalent
properties when all of the other $LM$-conditions obtain.

\section{Forward Closures}

Following Hermann~\cite{Hermann}
the {\em forward-closure\/} of a convergent term rewriting system~$R$ is
defined in terms of the following operation on rules in~$R$: let
$\rho_1^{} : l_1^{} \rightarrow r_1^{}$ and
$\rho_2^{} : l_2^{} \rightarrow r_2^{}$
be two rules in~$R$ and let $p \in {\fpos}(r_1^{})$. Then
\[ \rho_1^{} \, \rightsquigarrow_p^{} \, \rho_2^{} ~ = ~
\sigma (l_1^{} \rightarrow r_1^{}[r_2^{}]_p^{}) \] where
$\sigma = mgu( r_1^{}|_p^{} =_{}^? l_2^{} )$.
Given rewrite systems $R_1^{}$ and $R_2^{}$ such that $R_2^{} \subseteq R_1^{}$,
we define $R_1^{} \rightsquigarrow R_2^{}$ as the rules in: 

\begin{center}
$\left\{ (l_1^{} \rightarrow r_1^{}) \, \rightsquigarrow_p^{} \, (l_2^{} \rightarrow r_2^{}) ~ 
\big| ~ (l_1^{} \rightarrow r_1^{}) \in R_1^{}, \;
(l_2^{} \rightarrow r_2^{}) \in R_2^{} \; \mathrm{and} \;
p \in {\fpos}(r_1^{}) \right\}$ 
\end{center}

which are not redundant in~$R_1^{}$.

We now define
\begin{align*}
    {FC}_0^{} (R) &= R
    \shortintertext{and}
    {FC}_{k+1}^{} (R) &= {FC}_k^{}(R) \; \cup \; ({FC}_k^{}(R) \rightsquigarrow R)
    \shortintertext{for all $k \ge 0$. Finally,}
    FC(R) &= \bigcup\limits_{i=1}^{\infty} \, FC_i^{} (R)
    \intertext{Note that $FC_j^{} (R) \subseteq FC_{j+1}^{} (R)$ for all $j \ge 0$. A set of rewrite rules $R$ is forward-closed if and only if $FC(R) = R$. We
    also define the sets of ``new rules''}
    NR_0(R) &= R
    \shortintertext{and}
    NR_{k+1}(R) &= FC_{k+1}(R) \smallsetminus FC_k(R)
    \intertext{Then}
    FC(R) &= \bigcup\limits_{i=1}^{\infty} \, NR_i^{} (R)
\end{align*}

\ignore{
\begin{proposition}
    Given a convergent rewrite system $R$ and innermost redex $t$ with normal
    form $\widehat{t}$, if $t \to_R^k \widehat{t}$, then $t \to_{FC_{k'}(R)} \widehat{t}$,
    for any $k' \geq k - 1$.
    \label{lemma-fc-compress}
\end{proposition}
}

\ignore{
\begin{proposition}
    Let $R$ be a convergent rewrite system, and $t$, $t'$ be terms, where $t$
    is an innermost redex and $t \to_R^k t'$ for some $k$. Then $t
    \to_{FC_{k'}(R)} t'$ for any $k' \geq k - 1$.
    \label{prop-fc-compress}
\end{proposition}

\begin{proof}
We will prove this by induction. When $k = 1$, 
$t \to_R^{} t'$ and thus $t \to_{FC_{0}^{} (R)} t'$. Since
$FC_i^{} (R) \subseteq FC_{i+1}^{} (R)$ for all natural numbers~$i$
the result follows.

Suppose that, for all $k \leq n$, given $t \to_R^k t'$, then 
$t \to_{FC_{k'}(R)} t'$ for any $k' \geq k - 1$. We will show that, given 
$t \to_R^{n+1} t'$, then $t \to_{FC_{k''}(R)} t'$ for any $k'' \geq n$.

If $t \to_R^{n+1} t'$, then there is a term $t''$ such that \[t
\longrightarrow_R^n t'' \longrightarrow_R t'\] By our inductive
hypothesis, $t \to_{FC_{k'}(R)} t''$ for any $k' \geq n - 1$. Let $k'$
be any such number. 
Let $\rho_1 = (l_1^{} \to r_1^{})$ be the rule in $FC_{k'}(R)$ such that
$t \to_{\rho_1}^{} t''$. Thus $t = \_1^{} \sigma$
and $t'' = r_1^{} \sigma$, for some substitution~$\sigma$. The
substitution
$\sigma$ has to be a normalized
substitution since otherwise $t$ would not have been an innermost redex.
Let $\rho_2 = (l_2^{} \to r_2^{})$ 
be the rule in $R$ such that $t'' = r_1^{} \sigma \to_{\rho_2}^{} t'$.
and let $p$ be the position in $t''$ where $\rho_2$
is applied, i.e., $t' = r_1^{} \sigma [r_2^{} \delta]_p^{}$
where $\delta$ is a substitution that matches
$l_2^{}$ with $(r_1^{} \sigma)|_p^{}$.
The position~$p$ has to belong to $\mathcal{FP}os ( r_1^{} )$
since $\sigma$, as mentioned above, is a normalized substitution.
Thus $\rho_1 \rightsquigarrow_p \rho_2$ is defined (constructible)
and it is
either a rule
in $FC_{k' + 1}(R)$ or redundant in $FC_{k'}$. 
Thus, $t \to_{FC_{k'+1}(R)} t'$. Let $k'' = k' + 1$. 
Then $t \to_{FC_{k''}(R)} t'$.
\end{proof}

This is the intuition behind forward closure; each sequence of rewrite rules
from $R$ of length $k$ or less is combined into a single rewrite rule in
$FC_k(R)$.  Then, in $FC(R)$, all (possibly infinite) sequences of rewrite
rules are represented as a single rule each.

\begin{corollary}
If $R$ is a convergent rewrite system and $t$ an innermost redex with normal
form~$\widehat{t}$, then $t \to_{FC(R)}^{} \widehat{t}$.
\end{corollary}
}

\ignore{

\begin{Lemma}
Let $R$ be a convergent rewrite system and $l \approx r$ be an
equation such that $l \rightarrow_{R}^! r$, i.e., $r$ is the
$R$-normal-form of~$l$. Then $l \approx r$ is redundant in~$R$ if
and only if a proper subterm of $l$ is reducible.
\end{Lemma}

\begin{proof}
For a given ground equation $s \approx t$ such that $s \succ t$, 
we define the following
(possibly infinite) ground term rewriting system: \[
\mathcal{G}_{}^{\prec (s \approx t)} =
\left\{ \sigma (l) \to \sigma(r) ~ \big| ~ (l \to r) \in R \mathrm{~ and ~}
(\sigma (l) \to \sigma(r)) \prec (s \approx t) \right\} \] We can now prove\\

\noindent
{\bf Claim 1}: $\mathcal{G}_{}^{\prec (s \approx t)}$ is convergent.\\

\noindent
{\bf Claim 2}: An equation $e = (s_1^{} \approx s_2^{})$ is redundant in~$R$ 
if and only if for
every ground instance~$\delta (s_1^{}) \approx \delta (s_2^{})$ of~$e$,
$\delta (s_1^{})$ and $\delta (s_2^{})$ are joinable
modulo~$\mathcal{G}_{}^{\prec (s_1 \approx s_2)}$.\\

\noindent
{\bf Claim 3}: Let $t$, $l$ and $r$ be terms such that
$t \succ l \succ r$. If $l \downarrow_R^{} r$,  then every term that appears 
in the rewrite proof (``valley proof'') is~$\prec l$.\\

If a proper subterm of $l$ is reducible, then there must be a rule
$l' \to r'$, a position~$p \neq \epsilon$ and a substitution~$\sigma$ such that
$l|_p^{} = \sigma(l')$ and $l \to_R^{} l[\sigma(r')]_p^{}$. Now
$\sigma (l') \prec l$ because of the subterm property
and thus $\sigma (r') \prec \sigma (l') \prec l$. The terms
$l[\sigma(r')]_p^{}$ and~$r$ are joinable modulo~$R$ and by Claim~3 we are done.

Suppose $l \approx r$ is redundant in~$R$.
Let $\theta$ be a substitution that replaces every variable in~$l$ with
a distinct free constant. Then $\theta (l)$ and $\theta(r)$ are
joinable
modulo~$\mathcal{G}_{}^{\prec (\theta(l) \approx \theta(r))}$ by Claim~2.
If a proper subterm of $\theta (l)$ is reducible, then we are done.
Otherwise, there must be a ground rule $l_g^{} \to r_g^{}$
in~$\mathcal{G}_{}^{\prec (s_1 \approx s_2)}$ such that $\theta (l) = l_g^{}$.
But~$r_g^{}$ cannot be lower than~$\theta(r)$ since~$\theta(r)$ 
is in normal form.
\end{proof}

\begin{Lemma}
Let $R$ be a convergent rewrite system and $t, t'$ be terms where
$t$ is an innermost redex.
If $t \to_{FC_{k'}(R)} t'$ 
then $t \to_R^k t'$ for some $k \le k' + 1$.
\end{Lemma}

}

The following theorem, shown in~\cite{BGLN}, gives
necessary and sufficient conditions for a rewrite-system
to be forward-closed. This property will be used repeatedly
in the sequel below. 

\begin{theorem} \emph{\cite{BGLN}}
\label{thm-fc-irb}
A convergent rewrite system $R$ is forward-closed if and only if every
innermost redex can be reduced to its $R$-normal~form in one step.
\end{theorem}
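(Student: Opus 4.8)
The plan is to prove the two implications separately, in both cases routing through the characterization of redundancy for rules whose right-hand side is the normal form of the left-hand side: such a rule $l \to r$ is redundant in $R$ exactly when some proper subterm of $l$ is reducible.

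For the forward direction, assume $R$ is forward-closed, so $FC(R) = R$, and let $t$ be an innermost redex with $R$-normal form $\widehat{t}$, say $t \to_R^k \widehat{t}$. First I would establish the ``compression'' fact that any innermost redex reaches its normal form in a single $FC(R)$-step: because $t$ is $\bar{\epsilon}$-irreducible the first rewrite must be at the root, using some $\rho_1 : l_1 \to r_1$ with a \emph{normalized} matcher $\sigma$; since $\sigma$ is normalized, every subsequent redex of $r_1\sigma$ sits at a position in $\fpos(r_1)$, which is exactly the shape required to fold two consecutive steps into one application of $\rightsquigarrow_p$. Iterating by induction on $k$ collapses the whole derivation into one rule of $FC_{k-1}(R) \subseteq FC(R)$ (and even if that particular composed rule is redundant and hence discarded, some rule of $FC(R)$ still performs the step in one shot). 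As $FC(R) = R$, this single-step rule lies in $R$, giving $t \to_R \widehat{t}$.

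For the converse, assume every innermost redex reduces to its normal form in one step. I would first reduce the goal to the very first closure level: if every rule produced by $R \rightsquigarrow R$ is redundant in $R$ then $NR_1(R) = \emptyset$, and a short induction using $FC_{k+1}(R) = FC_k(R) \cup (FC_k(R) \rightsquigarrow R)$ shows $FC_k(R) = R$ for all $k$, hence $FC(R) = R$. So it suffices to show that an arbitrary composition $\rho_1 \rightsquigarrow_p \rho_2 = \sigma(l_1 \to r_1[r_2]_p)$, with $\sigma = \Mgu(r_1|_p \eqq l_2)$, is redundant in $R$. Write $L = \sigma(l_1)$. By the redundancy characterization, if $L$ has a reducible proper subterm the rule is already redundant and we are done; otherwise $L$ is $\bar{\epsilon}$-irreducible, and instantiating its variables with distinct fresh constants yields a ground innermost redex $\gamma L$ (fresh constants cannot create redexes at proper positions). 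By hypothesis $\gamma L \to_R \widehat{\gamma L}$ in one step, while the composition also gives $\gamma L \to_R^+ \gamma\sigma(r_1[r_2]_p) \to_R^* \widehat{\gamma L}$.

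It remains to turn this into redundancy of $L \approx \sigma(r_1[r_2]_p)$, and this is where the main obstacle lies. The naive witness --- the two ground steps of $\rho_1$ then $\rho_2$ --- does \emph{not} work, because the instance of $\rho_1$ has left-hand side $\gamma L$ and right-hand side $\gamma\sigma(r_1) \succ \gamma\sigma(r_1[r_2]_p)$, so in the multiset order it is \emph{larger} than the equation we are trying to justify. The one-step hypothesis is exactly what repairs this: the single step $\gamma L \to_R \widehat{\gamma L}$ produces the equation $\{\gamma L,\, \widehat{\gamma L}\}$, which is strictly smaller than $\{\gamma L,\, \gamma\sigma(r_1[r_2]_p)\}$ since $\widehat{\gamma L} \preceq \gamma\sigma(r_1[r_2]_p)$, and the valley $\gamma\sigma(r_1[r_2]_p) \to_R^* \widehat{\gamma L}$ contributes only equations below $\gamma L$; together (by the standard rewrite-proof argument that every term in such a valley stays below the peak) these smaller equations entail $\gamma L \approx \gamma\sigma(r_1[r_2]_p)$, establishing redundancy and hence $FC(R) = R$. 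The delicate point to nail down is the boundary case in which $\gamma\sigma(r_1[r_2]_p)$ is already the normal form $\widehat{\gamma L}$: here the composed-rule instance coincides with a one-step $R$-reduction, and I would argue it is then an instance of an existing rule and so contributes nothing new, which is the one place where the ordering argument must be supplemented by a subsumption check.
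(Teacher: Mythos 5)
The paper itself does not prove this theorem --- it is imported from the cited reference [BGLN] --- so there is no in-text proof to match you against; what can be said is that your architecture is exactly the intended one (it coincides with draft material the authors left commented out in the source): for the forward direction, the compression induction showing that an innermost redex reaches its normal form in one $FC(R)$-step, with the key observation that the matcher of the first (root) step is normalized and hence every later redex lies at a position in $\fpos(r_1^{})$, so consecutive steps fold into a single $\rightsquigarrow_p^{}$-composition; for the converse, the characterization ``a rule $l \to r$ whose right-hand side is the normal form of $l$ is redundant iff a proper subterm of $l$ is reducible,'' combined with grounding by fresh constants and the multiset-ordering/valley argument. Your identification of the ordering obstacle (the two-step witness $\gamma(l_1^{}\sigma) \approx \gamma\sigma(r_1^{})$ is \emph{larger} than the composed equation, and it is precisely the one-step hypothesis that supplies a smaller witness) is the heart of the converse, and you got it right.

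Three soft spots remain, all patchable with tools you already invoke. First, in the forward direction the parenthetical ``even if that composed rule is redundant, some rule of $FC(R)$ still performs the step in one shot'' is not justified as stated: redundancy is a proof-theoretic condition on ground instances and does not by itself yield a one-step rewrite by another rule. The repair is the characterization you cite for the converse: the final composed rule has left-hand side matched by the innermost redex $t$ and right-hand side the normal form, so if its left-hand side had a reducible proper subterm then $t$ would too; hence that rule is \emph{not} redundant and genuinely survives into $FC(R) = R$. Second, redundancy quantifies over \emph{every} ground instance, but you discharge only the fresh-constant instance $\gamma$: for an arbitrary ground $\delta$, the term $\delta(L)$ can acquire proper redexes by a rule overlapping into the substitution even when $L$ is $\bar{\epsilon}$-irreducible (fresh constants block this, arbitrary irreducible terms do not), so the case split must be run per ground instance --- if $\delta(L)$ is an innermost redex, apply your ordering argument; otherwise apply the reducible-proper-subterm branch to that instance --- rather than once at the rule level. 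Third, the boundary case $\gamma\sigma(r_1^{}[r_2^{}]_p^{}) = \widehat{\gamma L}$ that you flag is genuinely delicate: under the paper's strictly-smaller definition of redundancy, an instance of an existing rule is \emph{not} literally redundant (its ground instances are not consequences of strictly smaller ones), so this is exactly the point where a subsumption check must be built into, or read into, the closure construction; your instinct to supplement the ordering argument there is correct and should not be waved away.
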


We next show that there are ways to reduce 
a rewrite system $R$ while still maintaining the properties
that we are interested in. More precisely, given a convergent,
forward-closed rewrite system $R$ we can reduce the right-hand
sides of rules in $R$ while maintaining convergence, forward-closure
and the equational theory generated by $R$. 

Let $R$ be a convergent rewrite system. Following~\cite{gramlich}
we define \[ R\!\downarrow ~ = ~ \left\{ l \rightarrow r\!\downarrow
~ | ~ (l \rightarrow r) \in R \vphantom{b_b^b} \right\} \]

\begin{Lemma}
\label{RightReduceEquiv}
Let $R$ be a convergent, forward-closed rewrite system. Then $R\!\downarrow$ is
convergent, equivalent to~$R$ (i.e., they generate the same 
congruence), and forward-closed.
\end{Lemma}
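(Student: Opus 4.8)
The plan is to verify the three assertions in the order convergence, equational equivalence, forward-closure, since the later parts reuse the earlier ones. Everything is driven by two elementary observations. First, $R$ and $R\!\downarrow$ have \emph{exactly the same left-hand sides}, so reducibility (which depends only on matching a left-hand side) coincides for the two systems and $IRR(R) = IRR(R\!\downarrow)$. Second, a one-step simulation holds: if $s \to_{R\!\downarrow} t'$ by $l \to r{\downarrow}$ at position $p$ with substitution $\sigma$, then, since rewriting is closed under substitution, $s \to_R s[\sigma(r)]_p \to_R^* s[\sigma(r{\downarrow})]_p = t'$, so $\to_{R\!\downarrow} \,\subseteq\, \to_R^+$.

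For convergence, termination of $R\!\downarrow$ is immediate from $\to_{R\!\downarrow} \subseteq \to_R^+$ and termination of $R$. For confluence I would prove the key claim that every term $s$ satisfies $s \to_{R\!\downarrow}^* s{\downarrow}_R$, by well-founded induction on the order given by $\to_R^+$ (well-founded since $R$ terminates): if $s$ is irreducible there is nothing to do; otherwise $s$ is $R\!\downarrow$-reducible, so pick any step $s \to_{R\!\downarrow} t'$, note $s \to_R^+ t'$ so the induction hypothesis gives $t' \to_{R\!\downarrow}^* t'{\downarrow}_R$, and $t'{\downarrow}_R = s{\downarrow}_R$ by confluence of $R$. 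Conversely, any $R\!\downarrow$-normal form $n$ of $s$ is $R$-irreducible and satisfies $s \to_R^* n$, so $n = s{\downarrow}_R$ by confluence of $R$; hence $R\!\downarrow$-normal forms are unique. A terminating system with unique normal forms is confluent, so $R\!\downarrow$ is convergent, and moreover $s{\downarrow}_{R\!\downarrow} = s{\downarrow}_R$ for every term $s$.

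Equational equivalence then follows at once: $\to_{R\!\downarrow} \subseteq \to_R^*$ gives $=_{R\!\downarrow} \,\subseteq\, =_R$, and since both systems are convergent with $s{\downarrow}_{R\!\downarrow} = s{\downarrow}_R$ for all $s$, we get $s =_R t \iff s{\downarrow}_R = t{\downarrow}_R \iff s =_{R\!\downarrow} t$. For forward-closure I would invoke Theorem~\ref{thm-fc-irb}: it suffices to show every $R\!\downarrow$-innermost redex $t$ reduces to its $R\!\downarrow$-normal form in one $R\!\downarrow$-step. Such a $t$ has all proper subterms $R\!\downarrow$-irreducible, hence $R$-irreducible, and matches some left-hand side, so $t$ is also an $R$-innermost redex; since $R$ is forward-closed, Theorem~\ref{thm-fc-irb} supplies a root step $t = \sigma(l) \to_R \sigma(r) = t{\downarrow}_R$ with $l \to r \in R$ and $\sigma(r)$ irreducible. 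The crux is that $\sigma(r)$ is \emph{already} normal: from $r \to_R^* r{\downarrow}$ we get $\sigma(r) \to_R^* \sigma(r{\downarrow})$, and since $\sigma(r)$ is $R$-irreducible this sequence is empty, forcing $\sigma(r{\downarrow}) = \sigma(r)$. Applying the companion rule $l \to r{\downarrow} \in R\!\downarrow$ therefore gives $t \to_{R\!\downarrow} \sigma(r{\downarrow}) = \sigma(r) = t{\downarrow}_{R\!\downarrow}$ in one step, and forward-closure follows.

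The main obstacle, and the step I would spend the most care on, is confluence of $R\!\downarrow$: the naive route through critical pairs stalls because an $R$-joinable peak need not be immediately $R\!\downarrow$-joinable, as $\to_{R\!\downarrow}$ simulates $\to_R^+$ only in one direction. Routing around this via the claim that every term $R\!\downarrow$-reduces to its $R$-normal form---so that $R$ and $R\!\downarrow$ literally share normal forms---is what makes both the confluence and the equivalence arguments go through cleanly. I would also double-check the two easy but essential points used silently above, namely that reducibility is determined solely by left-hand sides and that rewriting is closed under substitution.
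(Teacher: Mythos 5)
Your proof is correct, and its core --- the forward-closure part --- is essentially the argument the paper gives: an innermost redex of $R\!\downarrow$ is an innermost redex of $R$ because the left-hand sides are unchanged; Theorem~\ref{thm-fc-irb} applied to $R$ yields a root step $t = \sigma(l) \to_R \sigma(r)$ with $\sigma(r)$ irreducible; and then $\sigma(r) \to_R^* \sigma(r{\downarrow})$ together with irreducibility of $\sigma(r)$ forces $\sigma(r{\downarrow}) = \sigma(r)$, so the rule $l \to r{\downarrow}$ of $R\!\downarrow$ does the job in one step. Where you genuinely diverge is on convergence and equivalence: the paper does not prove these at all, but delegates them to Gramlich's results on interreduction of semi-complete systems (\cite{gramlich}), whereas you give a self-contained argument via the key claim that every term $s$ satisfies $s \to_{R\!\downarrow}^* s{\downarrow}_R$ (proved by well-founded induction on $\to_R^+$), from which unique normal forms, confluence, and $s{\downarrow}_{R\!\downarrow} = s{\downarrow}_R$ all follow. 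Your route buys a complete, elementary proof and the explicitly stated byproduct that $R$ and $R\!\downarrow$ share normal forms (which you then reuse, cleanly, to identify $t{\downarrow}_{R\!\downarrow}$ with $t{\downarrow}_R$ in the forward-closure step --- a point the paper leaves tacit); the paper's route buys brevity by citing known results. Your warning that a naive critical-pair approach to confluence of $R\!\downarrow$ stalls is well taken, and your detour through shared normal forms is exactly the right fix.
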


\begin{proof}
The cases of convergence and equivalence of $R\!\downarrow$ are
handled in~\cite{gramlich}; it remains to show that
$R\!\downarrow$ is forward-closed. By Theorem~\ref{thm-fc-irb} it
suffices to show that every innermost redex
modulo~$R\!\downarrow$ is reducible to its normal form in a
single step.

Let $t$ be an innermost redex of $R\!\downarrow$. Since the
passage from $R$ to $R\!\downarrow$ preserved the left-hand sides
of the rules, $t$ must also be an innermost redex of~$R$. Thus,
$\exists l \rightarrow r \in R$ such that $\sigma(r) \in
IRR(R)$. Then $l \rightarrow r\!\downarrow \in
R\!\downarrow$. Since $r \rightarrow^{!} r\!\downarrow$ and
$\rightarrow$ is closed under substitutions, we have that
$\sigma(r) \rightarrow^{!} \sigma(r\!\downarrow)$, but
$\sigma(r)$ is irreducible, thus $\sigma(r) =
\sigma(r\!\downarrow)$. Therefore, $t$ reduces to its normal form
modulo $R\!\downarrow$ in a single step.
\end{proof}

A convergent rewrite system~$R$ is \emph{right-reduced} if and only if
$R ~ = ~ R\!\downarrow$.

From the above lemma, it is clear that right-reduction does not affect
forward-closure.  However, full interreduction, where one also deletes
rules whose left-hand sides are reducible by \emph{other} rules, will
not preserve forward-closure. The following example illustrates this:
\begin{eqnarray*}
f(x, i(x)) & \rightarrow & g(x)\\
g(b) & \rightarrow & c\\
f(b, i(b)) & \rightarrow & c
\end{eqnarray*}

The last rule can be deleted since its left-hand side is reducible by the
first rule. This will preserve convergence, but forward-closure will be lost since
$f(b, i(b))$, an innermost redex, cannot be reduced in \emph{one step} to~$c$
in the absence of the third rule.

However, the following lemma enables us to do a restricted deletion of superfluous rules:
\begin{Lemma}
\label{semileftreduced}
 Let $R$ be a convergent, forward-closed, term rewriting system. Let 
 $l_i \rightarrow r_i \in R$ for $i \in \{1, 2\}$ such that 
 $\exists p \in \fpos(l_1) : p \not = \epsilon \text{ and } l_{1}|_{p} = \sigma(l_2)$
 for some substitution $\sigma$. That is, $l_1$ contains a proper subterm that is an instance
 of the left-hand side of another rule in $R$. Then, 
$R' = R \smallsetminus \left\{ l_1 \rightarrow r_1 \right\}$
 is convergent, forward-closed and equivalent to~$R$. 
\end{Lemma}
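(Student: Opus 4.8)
The plan is to make everything rest on a single structural fact, namely that $R$ and $R'$ share exactly the same normal forms, $IRR(R') = IRR(R)$, and then to feed this into Theorem~\ref{thm-fc-irb} for the forward-closure part. Termination of $R'$ is immediate: since $R' \subseteq R$ we have $\rightarrow_{R'} \,\subseteq\, \rightarrow_R$, and any subrelation of a terminating relation is terminating. So the work is in the normal-form identity, in deriving confluence and equivalence from it, and in the forward-closure argument.

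First I would establish $IRR(R') = IRR(R)$. The inclusion $IRR(R) \subseteq IRR(R')$ is trivial, because $\rightarrow_{R'} \,\subseteq\, \rightarrow_R$ means an $R$-irreducible term admits no $R'$-step either. For the reverse inclusion, suppose for contradiction that $t$ is $R'$-irreducible but $R$-reducible. Since $l_1 \rightarrow r_1$ is the only rule removed in passing to $R'$, the reduction must use it: there is a position $q$ and a substitution $\tau$ with $t|_q = \tau(l_1)$. Using the hypothesis $l_1|_p = \sigma(l_2)$, the subterm at position $q.p$ satisfies $t|_{q.p} = \tau(l_1|_p) = \tau\sigma(l_2)$, an instance of $l_2$, hence reducible by $l_2 \rightarrow r_2 \in R'$. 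This contradicts the $R'$-irreducibility of $t$, so $IRR(R') \subseteq IRR(R)$. This identity is the workhorse of the whole lemma.

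From $IRR(R') = IRR(R)$ and termination, convergence of $R'$ follows by uniqueness of normal forms: any $R'$-normal form $u$ of a term $t$ lies in $IRR(R') = IRR(R)$ and is reachable by $t \rightarrow_R^* u$, so it equals the unique $R$-normal form $\widehat{t}$; thus $R'$-normal forms are unique, and with termination $R'$ is convergent. Equivalence is then automatic, since both systems compute the same normal-form map: for all $s,t$ we have $s =_R t$ iff $\widehat{s} = \widehat{t}$ iff $s =_{R'} t$, so $R$ and $R'$ generate the same congruence.

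For forward-closure I would invoke Theorem~\ref{thm-fc-irb} and show that every innermost redex $t$ of $R'$ reduces to its $R'$-normal form in one step. Since the proper subterms of $t$ are $R'$-irreducible, hence $R$-irreducible by the identity above, while $t$ itself is $R$-reducible, the term $t$ is also an innermost redex of $R$; forward-closure of $R$ then yields a one-step root reduction $t \rightarrow_R \widehat{t}$. The decisive point, and the reason the hypothesis $p \neq \epsilon$ is indispensable, is that this step cannot use $l_1 \rightarrow r_1$: if it did we would have $t = \tau(l_1)$, whence the \emph{proper} subterm $t|_p = \tau\sigma(l_2)$ (proper precisely because $p \neq \epsilon$) would be $R'$-reducible, contradicting that $t$ is an innermost redex of $R'$. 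Hence the step uses a rule of $R'$, and since $\widehat{t} \in IRR(R') $ this exhibits a one-step $R'$-reduction to the $R'$-normal form, so $R'$ is forward-closed. I expect the main subtlety to be exactly this last observation: it is the condition $p \neq \epsilon$ that forbids $l_1 \rightarrow r_1$ from ever being the rule needed at an innermost redex, and this is what fails in the preceding counterexample, where the removed rule's \emph{entire} left-hand side was an instance of another left-hand side, so the analogous reducible subterm sits at the root rather than at a proper position.
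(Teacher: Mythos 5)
Your proof is correct and takes essentially the same approach as the paper's: the same two key facts (the identity $IRR(R') = IRR(R)$, obtained by propagating reducibility from $l_1 \rightarrow r_1$ down to $l_2 \rightarrow r_2$, and the observation that $p \neq \epsilon$ forbids $l_1 \rightarrow r_1$ from firing at any innermost redex of $R'$) combined with Theorem~\ref{thm-fc-irb}. The only difference is organizational---you establish the normal-form identity first and route both convergence and forward-closure through it, which is marginally tidier than the paper's ordering (which argues forward-closure first) but not a different method.
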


\begin{proof}
For the sake of deriving a contradiction, assume that
$R'$ is not forward-closed. Then there must exist
some term $t$ such that $t$ is an innermost redex
modulo $R'$ and $t$ is not reducible to its normal
form in a single step. 

Since $R$ is forward-closed $t$ must be reducible to
its normal form in a single step modulo $R$, and since 
$R$ differs from $R'$ by the rule $l_1 \rightarrow r_1$, 
then it must be that $t = \theta(l_1)$ for some substitution
$\theta$.   

However, since $l_{1}|_p = \sigma(l_2)$ we have that $\theta(l_{1})|_p
= \theta(\sigma(l_2))$, but since $\theta(l_{1}) = t$ we also have
that $\theta(l_1)|_{p} = t|_{p}$. Thus, $t|_p = \theta(\sigma(l_2))$,
but this contradictions the assumption that $t$ is an innermost redex
as $p \not = \epsilon$. Therefore, $R'$ must be forward-closed. 

The termination of $R'$ follows from the fact that $R$ is terminating
and $R' \subset R$. Towards showing confluence and equivalence we show that $IRR(R') = IRR(R)$. First, it is clear that $IRR(R) \subseteq IRR(R')$ 
since $R' \subseteq R$. For the reverse containment, suppose that $t \in IRR(R')$ 
but $t \not \in IRR(R)$. That is, $t$ must be reducible modulo $R$. However, since $R$ 
and $R'$ differ by only a single rule, it must be the case that $t$ is reducible by
the rule $l_1 \rightarrow r_1$, but $l_1$ is reducible by $l_2 \rightarrow r_2$, thus
$t$ would also be reducible by $l_2 \rightarrow r_2 \in R'$. But this contradicts the assumption that $t \in IRR(R')$. 

Now, suppose that $R'$ is not confluent. Then there must a term $t$
with two distinct $R'$-normal forms $t'$ and $t''$, but $t' \downarrow_{R}
t''$ as $R$ is confluent. However, this contradicts the above fact
that $IRR(R) = IRR(R')$ as at least one of $t', t''$ must be
reducible. Thus, $R'$ is confluent, and thus we have established that
$R'$ is convergent.

Finally, we show that $R'$ is equivalent to $R$. Since $R' \subset R$
we have that $\leftrightarrow_{R'}^* \, \subseteq \,
\leftrightarrow_{R}^*$. For the reverse containment, suppose there are
two distinct $R'$-irreducible terms $s$ and $t$ such that $s
\leftrightarrow_{R}^* t$, but then since $IRR(R) = IRR(R')$, $s = t$,
which contradicts the assumption that they are distinct terms.
Thus $R'$ is equivalent to $R$.  
\end{proof}

We call systems that have no rules such that the conditions of
Lemma~\ref{semileftreduced} obtain \textit{almost-left-reduced}.
Note, however, that they are not fully left-reduced as there is still
the possibility of overlaps at the \emph{root.} If a convergent,
forward-closed and right-reduced $R$ is not almost-left-reduced, then
the above lemma tells us that we may delete such rules and obtain an
equivalent system.

\section{Lynch-Morawska Conditions}

In this section we define the Lynch-Morawska conditions. We also derive some
preliminary results on convergent term rewriting systems that satisfy
the Lynch-Morawska conditions. 

A new concept introduced by Lynch and Morawska is that of
a {\em Right-Hand-Side Critical Pair,\/} defined
as follows:

\vspace{2 mm} 

{\large
\begin{center} 
\fbox{
\begin{minipage}{3.75in}
\centerline{$\infer{s\sigma \approx u\sigma}
        {s \approx t ~ \qquad \qquad ~ u \approx v}$}

\vspace{0.1in}
{$\mathrm{where ~} s\sigma \nprec t\sigma, u\sigma \nprec v\sigma, \sigma 
= mgu(v, t) ~ \mathrm{and} ~ s\sigma \neq u\sigma$}
\end{minipage}
}
\end{center}
}

Since our focus is only on convergent term rewriting systems, this
definition can be modified as follows:

\vspace{2 mm} 

{\large
\begin{center} 
\fbox{
\begin{minipage}{3.75in}
\centerline{$\infer{s\sigma \approx u\sigma}
        {s \rightarrow t ~ \qquad \qquad ~ u \rightarrow v}$}

\vspace{0.1in}
{$\mathrm{where ~} \sigma = mgu(v, t) ~ \mathrm{and} ~ s\sigma \neq u\sigma$}
\end{minipage}
}
\end{center}
}

\vspace{2 mm} 

For instance the right-hand-side critical pair
$f(x, s(y)) \rightarrow s(f(x, y))$ and
$f(s(x), y) \rightarrow s(f(y, x))$ is
$f(x, s(x)) \approx f(s(x), x)$. Also note (as pointed out in~\cite{LynchMorawska})
that the rule $f(x, x) \rightarrow 0$ has a right-hand-side critical pair:
$f(x, x) \approx f(x', x')$.

For an equational theory $E$, {\em RHS(E)\/} = $\left\{ \, e ~ \big| ~ e \right.$
is the conclusion of a Right-Hand-Side Critical Pair inference of two
members of $E$ $\left. \vphantom{\big|} \right\}$ $\cup$ $E$~\cite{LynchMorawska}.

A set of equations~$E$ is \emph{quasi-deterministic} if and only if
\begin{enumerate}
    \item No equation in~$E$ has a variable as its left-hand side or right-hand
        side,

    \item No equation in $E$ is \emph{root-stable}---i.e., no equation has the
        same root symbol on its left- and right-hand side, and

    \item $E$ has no \emph{root pair repetitions}---i.e., no two equations in
        $E$ have the same pair of root symbols on their sides.
\end{enumerate}

The following lemma was proved in~\cite{NotesOnBSM}.

\begin{Lemma}
    Suppose $R$ is a variable-preserving convergent rewrite system and $R$ is
    quasi-deterministic.  Then $\RHS(R)$ is not quasi-deterministic if and only
    if $\RHS(R)$ has a root pair repetition.
    \label{lemma-quasi}
\end{Lemma}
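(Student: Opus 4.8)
The plan is to prove the two directions separately, with essentially all of the work going into the forward (``only if'') direction; the reverse direction is immediate. A root pair repetition is precisely the failure of clause~(3) in the definition of quasi-determinism, so if $\RHS(R)$ has a root pair repetition then it is by definition not quasi-deterministic, which settles the ``if'' part.

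For the ``only if'' direction I would show that, under the hypotheses on $R$, the set $\RHS(R)$ \emph{always} satisfies clauses~(1) and~(2) of quasi-determinism; once this is established, the only remaining way for $\RHS(R)$ to fail to be quasi-deterministic is the failure of clause~(3), i.e.\ a root pair repetition, which is exactly what must be shown. Every equation of $\RHS(R)$ is either an original rule of $R$ or the conclusion $s\sigma \approx u\sigma$ of an RHS-critical-pair inference between rules $s \to t$ and $u \to v$ of $R$ with $\sigma = \Mgu(t,v)$. The original rules satisfy (1) and (2) because $R$ is quasi-deterministic, so only the conclusions need attention. For clause~(1): since $R$ satisfies~(1), the left-hand sides $s$ and $u$ are non-variable terms, and applying a substitution does not change the root symbol of a non-variable term; hence $s\sigma$ and $u\sigma$ are non-variables and no conclusion has a variable side.

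For clause~(2) I would argue by contradiction, supposing some conclusion $s\sigma \approx u\sigma$ is root-stable, so $\mathrm{root}(s) = \mathrm{root}(u) = f$. Because $R$ satisfies~(1), the right-hand sides $t$ and $v$ are non-variables, and since $\sigma = \Mgu(t,v)$ exists they must share a root symbol $g$. Thus the two premise rules carry the \emph{same} root pair $(f,g)$. If these are distinct rules of $R$, this is a root pair repetition in $R$, contradicting the quasi-determinism of $R$, and we are done with that case.

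The main obstacle, and the only place the variable-preserving hypothesis is used, is the remaining sub-case in which the inference pairs a rule $s \to t$ with a renamed copy $s' \to t'$ of itself. Here I would show the side condition $s\sigma \neq u\sigma$ of the inference cannot hold, so no such conclusion arises. Writing the renaming as $\pi$ with $s' = s\pi$, $t' = t\pi$ and $\var(s) \cap \var(s') = \emptyset$, the most general unifier $\sigma$ of the variable-disjoint renamings $t$ and $t'$ may be taken to identify each $x \in \var(t)$ with $\pi(x)$, i.e.\ $x\sigma = \pi(x)\sigma$. Since $R$ is variable-preserving we have $\var(s) = \var(t)$, so $\sigma$ identifies every variable of $s$ with its $\pi$-image, giving $s'\sigma = s\pi\sigma = s\sigma$; that is $s\sigma = u\sigma$, contradicting the side condition. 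Hence a root-stable conclusion is impossible in all cases, clause~(2) holds for $\RHS(R)$, and the forward direction follows. I expect this self-overlap sub-case to be the delicate point, precisely because dropping variable-preservation (for example $f(x,x) \to 0$, whose self-inference yields the root-stable $f(x,x) \approx f(x',x')$ with no accompanying root pair repetition) makes the statement fail.
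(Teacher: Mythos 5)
Your proof is correct. One caveat on the comparison: the paper itself contains no proof of Lemma~\ref{lemma-quasi} --- it is imported from the earlier workshop paper \emph{Some Notes on Basic Syntactic Mutation} --- but your argument is the natural one and matches the intended route: clauses (1) and (2) of quasi-determinism are shown to be preserved under RHS inferences, with the variable-preserving hypothesis used exactly once, to rule out the self-overlap case by forcing $s\sigma = u\sigma$ against the side condition, so that the only possible failure of quasi-determinism in $\RHS(R)$ is clause (3), a root pair repetition. Two minor remarks: your hedge that the mgu ``may be taken'' to identify $x$ with $\pi(x)$ is unnecessary, since for \emph{any} unifier $\sigma$ of $t$ and $t\pi$, comparing subterms at each variable position of $t$ yields $x\sigma = \pi(x)\sigma$ directly; and your closing counterexample $f(x,x) \to 0$ is precisely the example the paper itself invokes when noting that this rule has the root-stable RHS critical pair $f(x,x) \approx f(x',x')$, confirming that you have isolated the exact role of variable-preservation.
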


A theory $E$ is \emph{deterministic} if and only if it is quasi-deterministic
and non-subterm-collapsing.

\medskip{}

\noindent
A \emph{Lynch-Morawska term rewriting system} or \emph{LM-system} is a
convergent, almost-left-reduced and right-reduced term rewriting
system~$R$ which satisfies the following conditions:
\begin{itemize}
\item[(i)] $R$ is non-subterm-collapsing, \\[-18pt]
\item[(ii)]
$R$ is
forward-closed, and \\[-18pt]
\item[(iii)] $RHS(R)$ is quasi-deterministic. 
\end{itemize}

The goal of the remainder of this section is to show that, given an \emph{LM-system} $R$, there 
can be no overlaps between the left-hand sides of any rules in $R$ and 
that there can be no forward-overlaps. These notions are defined precisely
below. Further, we use those results to derive the equivalence of forward-closure
and saturation by paramodulation when $R$ is an \emph{LM-system}. 

These results show that \emph{LM-systems} are a highly restrictive subclass 
of term-rewriting systems. However, in a later section, we show that
there are important decision problems that remain undecidable when
restricted to \emph{LM-systems}. 

The first of these results, Lemma~\ref{ExactlyTwo} and its proof, are used
multiple times to prove other results. It concerns how two terms $s$ and $t$, 
such that $s$ is an innermost redex and $t$ is $\bar{\epsilon}$-irreducible, can 
be joined. It establishes that there are only two possible cases, and further, only
one of these cases can hold at a time. 

\begin{Lemma}
\label{ExactlyTwo}
Let $R$ be an $LM$-system, $s = f(s_1^{}, \ldots , s_m^{})$ an innermost redex
and
$t = g(t_1^{}, \ldots , t_n^{})$ an $\bar{\epsilon}$-irreducible term such that~$f \neq g$.
Then $s$ and $t$ are joinable modulo~$R$ if and only if \emph{exactly one} of the following 
conditions holds:
\begin{itemize}
\item[(a)] there is a unique rule $l \rightarrow r$ with root pair $(f, g)$ and
\( \displaystyle{s \mathop{\mylongrightarrow}_{l \rightarrow r}^{} t} \), $\; ~ ~$\emph{or}

\item[(b)] there are unique rules $l_1^{} \rightarrow r_1^{}$ and
  $l_2^{} \rightarrow r_2^{}$ with root pairs $(f, h)$ and $(g, h)$
  such that $\displaystyle{s
    \mathop{\xrightarrow{\hspace*{0.75cm}}}_{l_1 \rightarrow r_1}^{}
    \widehat{t}}$ and $\displaystyle{t
    \mathop{\xrightarrow{\hspace*{0.75cm}}}_{l_2 \rightarrow r_2}^{}
    \widehat{t}}$ for some term~$\widehat{t}$.
\end{itemize}

\end{Lemma}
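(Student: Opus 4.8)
The plan is to reduce the whole statement to the one-step-to-normal-form behaviour guaranteed by forward-closure, organised around a single case split. First I would recall that, since $R$ is convergent, $s$ and $t$ are joinable if and only if they share an $R$-normal form; writing $\widehat{s}$ and $\widehat{t}$ for these normal forms, joinability means $\widehat{s} = \widehat{t}$. Then I would invoke Theorem~\ref{thm-fc-irb}: because $s$ is an innermost redex it rewrites to $\widehat{s}$ in a single step, and since every proper subterm of $s$ is irreducible that step is necessarily at the root, via some rule $\rho_1 : l_1 \to r_1$ with $\mathrm{root}(l_1) = f$ and a normalized substitution $\theta$ satisfying $s = \theta(l_1)$ and $\widehat{s} = \theta(r_1)$. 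Because $R \subseteq \RHS(R)$ and $\RHS(R)$ is quasi-deterministic, no rule has a variable right-hand side, so $r_1$ is not a variable and $\mathrm{root}(\widehat{s}) = \mathrm{root}(r_1) =: h$.

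Next I would case-split on whether $t$ is reducible. If $t$ is irreducible then $\widehat{t} = t$, and joinability forces $h = \mathrm{root}(\widehat{s}) = \mathrm{root}(t) = g$, so $\rho_1$ has root pair $(f,g)$ and $s \to_{\rho_1} t$; this is case (a). If instead $t$ is reducible, then because $t$ is $\bar{\epsilon}$-irreducible it is itself an innermost redex, so Theorem~\ref{thm-fc-irb} again gives a single root step $t \to_{\rho_2} \widehat{t}$ via a rule $\rho_2 : l_2 \to r_2$ with $\mathrm{root}(l_2) = g$ and $\mathrm{root}(r_2) = \mathrm{root}(\widehat{t}) = \mathrm{root}(\widehat{s}) = h$. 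Thus $\rho_1$ and $\rho_2$ have root pairs $(f,h)$ and $(g,h)$ with $s \to_{\rho_1} \widehat{t}$ and $t \to_{\rho_2} \widehat{t}$; this is case (b).

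For uniqueness and exclusivity I would appeal again to quasi-determinism of $\RHS(R)$. The absence of root-pair repetitions, together with $R \subseteq \RHS(R)$, means at most one rule of $R$ carries any given root pair, so the rule(s) identified above are the unique rules with those pairs; here $h = \mathrm{root}(\widehat{s})$ is pinned down because the common normal form is unique. The no-root-stable condition additionally yields $h \neq f$ and $h \neq g$, confirming these are genuine root pairs. Mutual exclusivity is immediate: case (a) presupposes that $t = \widehat{s}$ is irreducible, whereas case (b) exhibits a rewrite step out of $t$, so $t$ cannot satisfy both. The converse direction is then trivial, since in case (a) we have $s \to t$ and in case (b) both $s$ and $t$ rewrite to the common term $\widehat{t}$, so in either case $s$ and $t$ are joinable.

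The hard part will be the careful bookkeeping of root symbols in the forward direction rather than any single deep idea. Specifically, I expect the main obstacle to be justifying each application of Theorem~\ref{thm-fc-irb}—in particular verifying that a reducible, $\bar{\epsilon}$-irreducible $t$ genuinely is an innermost redex so that its reduction collapses to one root step—and then matching the three quasi-determinism conditions to the precise places where they are used: non-variable right-hand sides so that $\mathrm{root}(r_1)$ and $\mathrm{root}(r_2)$ are defined and equal to $\mathrm{root}(\widehat{s})$, the no-root-pair-repetition condition for the uniqueness claims, and the no-root-stable condition to separate $h$ from $f$ and $g$.
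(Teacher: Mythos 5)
Your forward direction (``at least one of (a), (b) holds'') and the converse are correct and essentially identical to the paper's argument: case-split on whether $t$ is irreducible or an innermost redex, apply Theorem~\ref{thm-fc-irb} to get one-step root reductions to the common normal form, read off the root pairs using non-variable right-hand sides and the no-root-stable condition, and get uniqueness from the absence of root-pair repetitions. The genuine gap is in the \emph{exclusivity} clause, which is the substantive content of the lemma. You write that ``case (a) presupposes that $t = \widehat{s}$ is irreducible,'' but condition (a) as stated presupposes no such thing: it only asserts a one-step rewrite $s \to_{l \to r} t$ by a rule with root pair $(f,g)$. Theorem~\ref{thm-fc-irb} guarantees that \emph{some} one-step reduction of an innermost redex reaches the normal form, not that every one-step reduct of $s$ is normal; so when $t$ is itself an innermost redex (your second case, which yields (b)), nothing in your argument excludes the simultaneous existence of a rule with root pair $(f,g)$ rewriting $s$ directly to the reducible term $t$, i.e., (a) and (b) holding together.

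Ruling that out is where the paper does real work, and it uses exactly the two hypotheses your proposal never touches: quasi-determinism of $\RHS(R)$ (as opposed to merely of $R$) and right-reducedness. Assuming both (a) and (b) hold, rename $\rho_1 = l_1 \to r_1$ and $\rho_2 = l_2 \to r_2$ apart; since $\widehat{t}$ is a common instance of $r_1$ and $r_2$, they are unifiable with $\theta = \Mgu(r_1 =^? r_2)$, and the Right-Hand-Side Critical Pair inference places the equation $\theta(l_1) \approx \theta(l_2)$, with root pair $\{f, g\}$, into $\RHS(R)$. Because $\RHS(R)$ admits no root-pair repetitions, this equation must coincide with the rule $l \approx r$ of case (a); but then $r = \theta(l_2)$ is reducible by $l_2 \to r_2$, contradicting the assumption that $R$ is right-reduced. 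The fact that your proof invokes only the quasi-determinism of $R$ (via $R \subseteq \RHS(R)$) and never uses right-reducedness is the telltale sign that the hardest clause was not engaged; the root-symbol bookkeeping you anticipated as the main obstacle is in fact the easy half.
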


\begin{proof}

$(\Rightarrow)$ There are two cases to consider.
 \begin{itemize}
  \item[\emph{(i)}] $t = g(t_1, \ldots t_n)$ is in normal form. 
  \item[\emph{(ii)}] $t$ is an innermost redex. 
 \end{itemize}

\textit{Case (i)}. Since $s \downarrow_R t$ by assumption, $s$ must
reduce to $t$ in a single step (as $s$ is an innermost redex and
$t$ cannot be reduced modulo $R$ any further). Therefore
$\exists! \; \rho = l \rightarrow r \in R$ that will reduce $s$
to $t$, and this rule is unique as there can be no root-pair
repetitions.

\textit{Case (ii)}. Since $t$ is an innermost redex and $s$ is
$\bar{\epsilon}$-irreducible (and therefore also an innermost redex in this
case) they both must be reducible to their normal forms in one
step and since $s \downarrow_R t$ these normal forms must be
equal. Let $\widehat{t}$ be the normal form of $s$ and $t$. Since
there can be no root-stable equations $\widehat{t}(\epsilon) = h$ and $h
\not= f, g$. Thus, $\exists! \; \rho_{i} = l_i \rightarrow r_i
\in R$ for $i \in \{1, 2\}$ with root pairs $(f, h)$ and $(g, h)$
respectively such that $s \rightarrow_{\rho_{1}} \widehat{t}$ and 
$t \rightarrow_{\rho_{2}} \widehat{t}$.

It remains to show that both case~$(a)$ and case~$(b)$ cannot obtain
simultaneously. For the sake of deriving a contradiction, assume
that both case~$(a)$ and case~$(b)$ hold. The only way this could
occur is with case~$(ii)$ above. Without loss of generality assume
that $\mathcal{V}ar(\rho_1) \cap \mathcal{V}ar(\rho_2) =
\varnothing$. Since $s$ and $t$ are $\bar{\epsilon}$-irreducible we have that
$\sigma_1(r_1) = \widehat{t} = \sigma_2(r_2)$ where $\sigma_1 =
mgu(l_1 \, \lesssim_{}^? \, s)$ and $\sigma_2 = mgu(l_2 \,
\lesssim_{}^? \, t)$. And so by defining $\sigma := \sigma_1 \cup
\sigma_2$ we have that $\sigma(r_1) = \sigma(r_2)$, i.e.,
$\sigma$ is a unifier of $r_1$ and~$r_2$.

We can thus perform a RHS inference step using $\rho_1$ and $\rho_2$, i.e.,

\[\frac{ l_1 \rightarrow r_1 \; \qquad \; l_2 \rightarrow r_2}{\theta(l_1) \approx \theta(l_2)} \] 

\noindent
to get that $\theta(l_1) \approx \theta(l_2) \in RHS(R)$ where
$\theta = mgu(r_1 =^{?} r_2)$. This equation has root-pair $\{f,
g\}$. 
\ignore{
Now consider the rewrite rule given by $\theta(l_1)
\rightarrow \theta(l_2)$. Recall that $\sigma \in
\mathcal{U}(r_1, r_2)$, but $\theta$ is the mgu of $r_1$ and
$r_2$ and therefore we have that $\theta \precsim \sigma$ and
therefore we get $(\exists \delta \in \textbf{Sub})[ \sigma =
  \delta \theta]$. From before we know that $\sigma(l_1) = s$ and
thus $\rho\theta(l_1) = s$ which gives us that $\theta(l_1)$
matches with $s$. We can hence rewrite $s$ to $\rho\theta(l_2) =
t$, but by assumption $l \rightarrow r$ is unique, therefore
$\theta(l_1) \approx \theta(l_2) = l \approx r$, but then $r$ is
reducible by $l_2 \rightarrow r_2$ which contradicts the
assumption that $R$ is right-reduced.
}
But since $RHS (R)$ can have no root-pair repetition, it must be
that $\theta(l_1) \approx \theta(l_2) = l \approx r$, but then $r$ is
reducible by $l_2 \rightarrow r_2$ which contradicts the
assumption that $R$ is right-reduced.

$(\Leftarrow)$ If exactly one of the two conditions hold, then
$s$ and $t$ are joinable by definition.
\end{proof}

We present an interesting consequence of the above lemma. Namely,
if a term $t$ such that $t(\epsilon) = f$ has as normal form a
term $s$ such that $s(\epsilon) = g$ for $g$ differing from $f$,
then we can establish some information on the rules in $R$. 

\begin{corollary}
\label{NormalFormPair}
If $f( s_1^{} , \ldots , s_m^{} ) \; \rightarrow_R^! \; g( t_1^{} ,
\ldots , t_n^{} )$ where~$f \neq g$, then $(f, \, g)$ is a root pair
in~$R$.
\end{corollary}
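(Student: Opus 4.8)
The plan is to reduce the claim to a single application of Lemma~\ref{ExactlyTwo}. The only subtlety is that the given term $f(s_1^{}, \ldots, s_m^{})$ need not itself be an innermost redex, so the first step is to normalize its arguments. Let $\widehat{s}_i^{}$ denote the $R$-normal form of each $s_i^{}$, and set $s = f(\widehat{s}_1^{}, \ldots, \widehat{s}_m^{})$. By confluence, $s$ has the same normal form as $f(s_1^{}, \ldots, s_m^{})$, namely $t = g(t_1^{}, \ldots, t_n^{})$. Every proper subterm of $s$ is now irreducible, so $s$ is $\bar{\epsilon}$-irreducible, and $s \rightarrow_R^! t$.

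Next I would argue that $s$ is in fact an innermost redex. If $s$ were irreducible it would equal its own normal form $t$, but $s$ has root symbol $f$ while $t$ has root symbol $g \neq f$, a contradiction. Hence $s$ is reducible; being $\bar{\epsilon}$-irreducible, the only available redex position is the root, so $s$ is an instance of the left-hand side of some rule and therefore an innermost redex. On the other side, $t = g(t_1^{}, \ldots, t_n^{})$ is in normal form and so is trivially $\bar{\epsilon}$-irreducible. Since $s \downarrow_R t$ and $f \neq g$ by hypothesis, the hypotheses of Lemma~\ref{ExactlyTwo} are all met.

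Now the lemma tells us that exactly one of conditions~(a) and~(b) holds. I would rule out~(b) directly: it would supply a symbol $h \neq f, g$ together with a term $\widehat{t}$ satisfying $t \xrightarrow{\hspace*{0.75cm}}_{l_2^{} \rightarrow r_2^{}} \widehat{t}$, but $t$ is in normal form and hence admits no rewrite step at all, so~(b) is impossible. Therefore condition~(a) holds, yielding a rule $l \rightarrow r \in R$ with root pair $(f, g)$. By definition this means $(f, g)$ is a root pair in~$R$, which is exactly the conclusion.

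The main obstacle, and the only genuinely non-routine point, is the passage from the arbitrary term $f(s_1^{}, \ldots, s_m^{})$ to the innermost redex $s$: one must confirm both that normalizing the arguments preserves the overall normal form (via convergence) and that the resulting $\bar{\epsilon}$-irreducible term is truly reducible at the root, so that Lemma~\ref{ExactlyTwo} applies. Once that reduction is in place, discarding case~(b) is immediate because the target $t$ is already a normal form, and the root pair falls out of case~(a).
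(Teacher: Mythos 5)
Your proof is correct and follows essentially the same route as the paper: normalize the top-level arguments to obtain an $\bar{\epsilon}$-irreducible term $s$ with $s \rightarrow_R^! t$, observe that $s$ must be an innermost redex since $f \neq g$, and invoke case~(a) of Lemma~\ref{ExactlyTwo}. Your explicit verification that $s$ is genuinely reducible at the root, and your explicit elimination of case~(b) using the fact that $t$ is a normal form, are slightly more careful than the paper's wording but do not constitute a different argument.
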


\begin{proof}
 Let $s = f(s_1, \ldots, s_n)$ and $t = g(t_1, \ldots, t_m)$. Suppose
 that $s \rightarrow_{R}^{!} t$. Then by definition $s \downarrow_R
 t$. Since $t$ is a normal form it is also an $\bar{\epsilon}$-irreducible term
 modulo~$R$.
 
 Let $s' = f(s_1', \ldots, s_n')$ be the term obtained from $s$ by
 reducing all top-level subterms of $s$ to their normal forms modulo~
 $R$.  Thus we have the following situation: $s \rightarrow_{R}^{*} s'
 \rightarrow_{R}^{!} t$ as $s'$ must be an $\bar{\epsilon}$-irreducible term and
 hence also an innermost redex in this case. Therefore we can apply
 \textit{case~(a)} of Lemma~\ref{ExactlyTwo} to conclude that there must be a
 unique rule $\rho = l \rightarrow r \, \in \, R$ that reduces $s'$ to $t$
 with root pair~$(f, g)$.
\end{proof}

Given Lemma~\ref{ExactlyTwo}, we can immediately derive two results
that will be useful in proving the main result of this section.  

\begin{corollary}
\label{UniqueRule}
Suppose $l \rightarrow r \in R$ is a rule with root-pair $(f, g)$, $s
= f(s_1, \ldots, s_m)$ and $t = g(t_1, \ldots, t_n)$, and $s$ and $t$
are $\bar{\epsilon}$-irreducible. Then, $s \downarrow_{R} t$ if and only if
$\;$ \(
\displaystyle{s ~ \mathop{\xrightarrow{\hspace*{0.75cm}}}_{l^{}
    \rightarrow r^{}}^{} ~ t}. \)
\end{corollary}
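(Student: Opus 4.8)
The plan is to reduce the whole statement to Lemma~\ref{ExactlyTwo}. The backward direction is immediate: if $s \to_{l \rightarrow r} t$, then $s$ and $t$ are joinable by definition, so $s \downarrow_R t$. All the work is in the forward direction, where I assume $s \downarrow_R t$ and must manufacture the specific one-step reduction $s \to_{l \rightarrow r} t$.

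Before invoking Lemma~\ref{ExactlyTwo} I would record two preliminary facts that put $s$ and $t$ into its hypotheses. First, $f \neq g$: the rule $l \rightarrow r$ has root pair $(f,g)$ and lies in $R \subseteq RHS(R)$, and since $RHS(R)$ is quasi-deterministic it contains no root-stable equation, forcing the two root symbols of $l \rightarrow r$ to differ. Second, I claim $s$ is in fact an innermost redex, not merely $\bar{\epsilon}$-irreducible. If $s$ were in normal form, then $s \downarrow_R t$ would force $t \rightarrow_R^! s$; since $t$ has root $g \neq f$ it cannot equal $s$, so $t$ is reducible and hence an innermost redex, and Corollary~\ref{NormalFormPair} applied to $t \rightarrow_R^! s$ would make $(g,f)$ a root pair of $R$. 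Together with the given rule $l \rightarrow r$ of root pair $(f,g)$, this is a root-pair repetition in $RHS(R)$, contradicting quasi-determinism. Hence $s$ is an innermost redex and $t$ is $\bar{\epsilon}$-irreducible with $f \neq g$, which is exactly the setting of Lemma~\ref{ExactlyTwo}.

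Applying the lemma, exactly one of cases (a) and (b) holds. I would rule out (b) by rerunning the right-hand-side critical pair argument already present in the proof of Lemma~\ref{ExactlyTwo}: case (b) supplies rules $l_1 \rightarrow r_1$ and $l_2 \rightarrow r_2$ with $s \to \widehat{t}$ and $t \to \widehat{t}$, so $r_1$ and $r_2$ unify and the RHS inference produces an equation $\theta(l_1) \approx \theta(l_2) \in RHS(R)$ of root pair $\{f,g\}$. Since $l \rightarrow r \in RHS(R)$ already carries root pair $\{f,g\}$ and $RHS(R)$ admits no root-pair repetition, these two equations coincide; matching the (distinct) roots $f$ and $g$ forces $r = \theta(l_2)$, so $r$ is reducible by $l_2 \rightarrow r_2$, contradicting right-reducedness of $R$. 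Therefore case (a) holds: there is a unique rule of root pair $(f,g)$ reducing $s$ to $t$ in one step, and by absence of root-pair repetitions this rule must be $l \rightarrow r$ itself, giving $s \to_{l \rightarrow r} t$ as required.

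The main obstacle is the bookkeeping that justifies stepping outside the literal hypotheses of Lemma~\ref{ExactlyTwo}: the lemma demands that $s$ be an innermost redex, whereas the corollary only posits $\bar{\epsilon}$-irreducibility of both terms, so the one genuinely new step is the normal-form exclusion of the second paragraph. Everything after that is a controlled reuse of the machinery already built for Lemma~\ref{ExactlyTwo}, so I expect no further difficulty.
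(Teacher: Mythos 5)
Your proposal is correct and follows essentially the same route as the paper, whose entire proof is the remark that the result ``follows from Lemma~\ref{ExactlyTwo} and its proof''; your write-up simply makes that explicit, reusing the RHS-inference argument from the lemma's proof to exclude case~(b) and invoking no-root-pair-repetition to pin down the rule as $l \rightarrow r$. Your extra bookkeeping step (excluding that $s$ is in normal form via Corollary~\ref{NormalFormPair}, so that $s$ is genuinely an innermost redex and the lemma applies) is a legitimate and careful filling-in of a hypothesis the paper leaves implicit, and it is consistent with the paper's unordered reading of root pairs.
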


\begin{proof}
This result follows from Lemma~\ref{ExactlyTwo} and its proof. 
\end{proof}

\begin{corollary}
\label{UniqueRule2}
Let $R$ be an LM-System. Suppose $l \rightarrow r \in R$ is a rule
with root-pair $(f, g)$ and $s = f(s_1, \ldots, s_m)$ and $t = g(t_1,
\ldots, t_n)$ be terms that are joinable. Let $\widehat{s_1}, \ldots ,
\widehat{s_m}, \, \widehat{t_1}, \ldots, \widehat{t_n}$ be
respectively the normal forms of 
$s_1 , \ldots , s_m, \, t_1 , \ldots
, t_n$.  Then \[ f(\widehat{s_1}, \ldots , \widehat{s_m}) ~
\mathop{\xrightarrow{\hspace*{0.75cm}}}_{l^{} \rightarrow r^{}}^{} ~
g(\widehat{t_1}, \ldots , \widehat{t_n}) . \]
(Thus the normal form of $s$ and $t$ is an instance of the right-hand side~$r$.)
\end{corollary}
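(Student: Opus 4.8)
The plan is to reduce the statement to a direct application of Corollary~\ref{UniqueRule}. Write $\widehat{s} = f(\widehat{s_1}, \ldots, \widehat{s_m})$ and $\widehat{t} = g(\widehat{t_1}, \ldots, \widehat{t_n})$. First I would observe that both $\widehat{s}$ and $\widehat{t}$ are $\bar{\epsilon}$-irreducible: every proper subterm of either lies inside some $\widehat{s_i}$ or $\widehat{t_j}$, each of which is a normal form and hence irreducible. I would also record that $f \neq g$, since $(f,g)$ is the root pair of a rule of $R$ and $R$, being an $LM$-system, has no root-stable equations (quasi-determinism of $RHS(R)$).

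Next I would transfer joinability from the pair $s, t$ to the pair $\widehat{s}, \widehat{t}$. Reducing the top-level arguments to their normal forms gives $s \rightarrow_R^* \widehat{s}$ and $t \rightarrow_R^* \widehat{t}$. Since $s$ and $t$ are joinable and $R$ is convergent, they share a single normal form $n$; because $\widehat{s}$ and $\widehat{t}$ are reducts of $s$ and $t$ respectively, confluence forces each of them to reduce to $n$ as well. Hence $\widehat{s} \downarrow_R \widehat{t}$.

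Finally, $\widehat{s} = f(\widehat{s_1}, \ldots, \widehat{s_m})$ and $\widehat{t} = g(\widehat{t_1}, \ldots, \widehat{t_n})$ are $\bar{\epsilon}$-irreducible terms with the matching root symbols $f$ and $g$, the rule $l \rightarrow r$ has root pair $(f,g)$, and $\widehat{s} \downarrow_R \widehat{t}$. Applying the ($\Rightarrow$) direction of Corollary~\ref{UniqueRule} with $\widehat{s}, \widehat{t}$ in place of $s, t$ yields $\widehat{s} \mathop{\mylongrightarrow}_{l \rightarrow r}^{} \widehat{t}$, which is exactly the claim. The parenthetical remark then follows immediately: since this rewrite is at the root, $\widehat{s} = \theta(l)$ and $\widehat{t} = \theta(r)$ for the matching substitution $\theta$, so the joined term is an instance of the right-hand side~$r$.

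I expect no serious obstacle here; the content is essentially bookkeeping on top of Corollary~\ref{UniqueRule}. The only points that need care are the joinability transfer in the second step --- making sure that reducing the arguments independently on the two sides still lands on the common normal form, which is precisely where confluence of $R$ is invoked --- and verifying that the $\bar{\epsilon}$-irreducibility hypothesis of Corollary~\ref{UniqueRule} genuinely holds for $\widehat{s}$ and $\widehat{t}$ rather than merely for the original $s$ and $t$.
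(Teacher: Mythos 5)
Your proof is correct and follows essentially the same route as the paper's: pass to $\widehat{s} = f(\widehat{s_1},\ldots,\widehat{s_m})$ and $\widehat{t} = g(\widehat{t_1},\ldots,\widehat{t_n})$, note they are $\bar{\epsilon}$-irreducible since their arguments are normal forms, transfer joinability via confluence, and apply Corollary~\ref{UniqueRule}. Your version merely spells out the bookkeeping (the confluence argument for $\widehat{s} \downarrow_R \widehat{t}$ and the check that $f \neq g$) that the paper leaves implicit.
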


\begin{proof}
Since $s$ and $t$ are joinable, it must be the case that
$\widehat{s} = f(\widehat{s_1}, \ldots, \widehat{s_m})$ and
$\widehat{t} = g(\widehat{t_1}, \ldots, \widehat{t_n})$ are
joinable, but since each $\widehat{s_i}$ and $\widehat{t_i}$ are
in normal form, $\widehat{s}$ and $\widehat{t}$ must be
$\bar{\epsilon}$-irreducible, therefore we can apply Corollary~\ref{UniqueRule}.
\end{proof}

The above two corollaries, along with Lemma~\ref{ExactlyTwo}, allows
us to state the first of the results concerning the
non-overlapping property of LM-systems. The following establishes that
there can be no overlaps between left-hand sides of two rules occurring at
the root-position and that there can be no overlaps between a right-hand side
of a rule and a left-hand side of another rule at the root position. This is achieved
by showing that these terms cannot be unified. 

\begin{corollary}
\label{noRootOverlaps}
Let $R$ be an LM-System and let
$l_1^{} \rightarrow r_1^{}$ and $l_2^{} \rightarrow r_2^{}$ be 
\emph{distinct} rules in~$R$. Then
\begin{itemize}
\item[(a)] $l_1^{}$ and $l_2^{}$ are not unifiable, and

\item[(b)] $r_1^{}$ and $l_2^{}$ are not unifiable.
\end{itemize}
\end{corollary}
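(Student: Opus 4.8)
The plan is to reduce both parts to statements about \emph{root symbols} and then to play the root-pair bookkeeping of $RHS(R)$ against right-reducedness and forward-closure. First I would record the easy preliminaries: two terms are unifiable only if they share a root symbol, and since $R \subseteq RHS(R)$ and $RHS(R)$ is quasi-deterministic, $R$ itself has no root-stable rule, no two distinct rules with the same (unordered) root pair, and no variable-sided rule. Hence in (a) the two rules share a left root $f$ and have right roots $g_1 \neq g_2$ (both $\neq f$), while in (b) we have $\mathrm{root}(r_1) = \mathrm{root}(l_2) =: g$. Throughout I treat the root pair of a rule as the unordered pair of the root symbols of its sides, which is how the no-repetition clause reads on the symmetric equations of $RHS(R)$.

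For (a), let $\sigma = mgu(l_1, l_2)$ and $u = \sigma(l_1) = \sigma(l_2)$. Then $u$ rewrites at the root by both rules, to $v_1 = \sigma(r_1)$ (root $g_1$) and $v_2 = \sigma(r_2)$ (root $g_2$), and by confluence $v_1, v_2$ share a normal form $\widehat{u}$, say of root $h$. The clean subcase is $h = f$: then $v_1 \to_R^! \widehat{u}$ has distinct roots $g_1 \neq f$, so Corollary~\ref{NormalFormPair} yields a rule of root pair $\{g_1, f\}$; this equals the (unordered) root pair of $l_1 \to r_1$, and the two are distinct equations of $RHS(R)$ because the reverse rule $r_1 \to l_1$ cannot coexist with $l_1 \to r_1$ in a terminating system. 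That is a root-pair repetition, contradicting quasi-determinism. For $h \neq f$ I would normalize the proper subterms of $v_1$ and $v_2$ into $\bar{\epsilon}$-irreducible terms of roots $g_1, g_2$ that remain joinable; since their roots differ they cannot both be normal forms, so at least one is an innermost redex and Lemma~\ref{ExactlyTwo} applies. In each resulting case I would use Corollary~\ref{NormalFormPair} and Corollary~\ref{UniqueRule2} to exhibit a rule whose right-hand side shares a normal-form instance with $r_1$ (or $r_2$); unifiability of those right-hand sides then licenses a Right-Hand-Side critical-pair inference with $l_2 \to r_2$ (resp. $l_1 \to r_1$), producing an $RHS(R)$ equation of root pair $\{f, g_1\}$ (resp. $\{f, g_2\}$). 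This forces either a root-pair repetition or the reducibility of $r_1$ (resp. $r_2$), the latter contradicting right-reducedness. This is the same contradiction mechanism already used inside the proof of Lemma~\ref{ExactlyTwo}.

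For (b), the overlap $\mathrm{root}(r_1) = \mathrm{root}(l_2)$ is a forward-overlap at the root, so I would lean on forward-closure. Writing $\sigma = mgu(r_1, l_2)$, the term $\sigma(r_1) = \sigma(l_2)$ is reducible by $l_2 \to r_2$; hence a suitable innermost-redex instance $s$ of $\sigma(l_1)$ rewrites by $l_1 \to r_1$ to a term that is still reducible, and so is not yet in normal form. By Theorem~\ref{thm-fc-irb}, $s$ nonetheless reduces to its normal form in one step, by some rule $l_3 \to r_3$ sharing the left root of $l_1$. But then $s$ is a common instance of $l_1$ and $l_3$, so those left-hand sides are unifiable, and part (a) forces $l_3 \to r_3$ to be $l_1 \to r_1$ itself---contradicting that $l_1 \to r_1$ failed to reduce $s$ to normal form. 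Hence $r_1$ and $l_2$ are not unifiable.

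The main obstacle is twofold. First, the parts are entangled: the forward-closure argument for (b) appeals to (a), while the hardest branch of (a) wants an analogous no-overlap fact; I would break this by proving (a) and (b) \emph{simultaneously} by well-founded induction on the common instance under the reduction order $\prec$, so that every appeal to (a), to (b), and to redundancy is made on strictly smaller terms. Second, and more delicate, is the construction of the $\bar{\epsilon}$-irreducible and innermost-redex witnesses: an instance of a left-hand side produced by an mgu may have reducible proper subterms, so a common instance of $l_1, l_2$ need \emph{not} itself be an innermost redex, and after normalizing the arguments of $\sigma(r_1)$ one must still argue that the result is an instance of $r_1$ in order to unify right-hand sides and fire the critical-pair inference. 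Establishing that normalizing the arguments of an instance of a right-hand side keeps it an instance of that right-hand side is the technical heart; it amounts to ruling out a forward-overlap at a proper position of $r_1$, which is precisely why the induction and the no-forward-overlap reasoning must be carried together.
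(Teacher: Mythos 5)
There is a genuine gap, and it sits exactly where you located your ``technical heart.'' In part (b) you need ``a suitable innermost-redex instance $s$ of $\sigma(l_1)$,'' but nothing available at this point guarantees one exists: $\sigma = mgu(r_1 \eqq l_2)$ can instantiate $l_1$ so that a proper subterm of $\sigma(l_1)$ at a \emph{non-variable} position $q$ of $l_1$ becomes reducible (almost-left-reducedness only forbids $l_1|_q$ itself being an instance of a left-hand side, not $\sigma(l_1|_q)$), and you cannot repair this by normalizing $\sigma$, since replacing $\sigma$ by its normalized version destroys the syntactic identity $\sigma(r_1) = \sigma(l_2)$ on which your argument rests --- the one-step reduct of $s$ must remain an instance of $l_2$ in order to be reducible. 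Ruling out such reducible proper subterms is tantamount to the no-proper-overlap results ($NOSU\!P(R) = \emptyset$, Corollary~\ref{UnjoinableCriticalPairs}, and Lemma~\ref{NoForwardOverlaps}), which in the paper come \emph{after} this corollary. Your proposed repair --- a simultaneous well-founded induction ``on the common instance under $\prec$'' --- is left entirely schematic: the induction statement is never formulated, and the instance on which (a) is invoked inside (b) (the common instance $s$ of $l_1$ and $l_3$) is not comparable in any evident way to the instance you started from, since it involves a third rule and a freely chosen substitution. The $h \neq f$ branch of (a) has the same character: after Lemma~\ref{ExactlyTwo} fires, you gesture at ``a rule whose right-hand side shares a normal-form instance with $r_1$'' and an RHS inference ``with $l_2 \to r_2$,'' but an RHS inference requires the two right-hand sides to be unifiable, and $r_1$, $r_2$ have distinct root symbols $g_1 \neq g_2$; as written it is never clear which pair of rules is being unified, so the endgame does not go through.

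The irony is that Corollary~\ref{UniqueRule2}, which you cite, already dissolves all of this: it applies to \emph{arbitrary} joinable terms with the right root symbols --- no innermost-redex or $\bar{\epsilon}$-irreducible witnesses are needed, because the argument-normalization happens inside that corollary, and its conclusion is precisely that the common normal form is an instance of the right-hand side of the unique rule with that root pair. This is the paper's proof, and it is two lines per part: for (a), with $\theta = mgu(l_1 \eqq l_2)$, apply Corollary~\ref{UniqueRule2} to the joinable pairs $(\theta(l_1), \theta(r_1))$ and $(\theta(l_2), \theta(r_2))$; by convergence these share one normal form, which must be an instance of $r_1$ and of $r_2$ simultaneously --- impossible, since quasi-determinism forces $r_1(\epsilon) \neq r_2(\epsilon)$ once the left roots agree. (This also makes your case split on the root $h$ of the normal form unnecessary.) For (b), with $\beta = mgu(r_1 \eqq l_2)$, chain $\beta(l_1) \to \beta(r_1) \to \beta(r_2)$ and run the same root clash: no forward-closure, no Theorem~\ref{thm-fc-irb}, no appeal to (a), hence no circularity and no induction to set up. For what it is worth, your $h = f$ subcase of (a) via Corollary~\ref{NormalFormPair} is correct (though the rule with root pair $(g_1, f)$ need not literally be $r_1 \to l_1$; distinctness already follows from the root symbols alone), and your forward-closure mechanism for (b) is sound \emph{given} (a) and the witness --- but the witness is the unproven crux, and the direct route makes it moot.
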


\begin{proof}
Suppose that the rule $l_1 \rightarrow r_1$ has root pair $(f, g)$ and
the rule $l_2 \rightarrow r_2$ has root pair $(h, i)$. We show that
each case above leads to a contradiction.

Thus, for case (a), towards deriving such a contradiction suppose that
$\theta ~ = ~ mgu(l_1^{} \; =_{}^? \; l_2^{})$. Then, we have that
$\theta(l_1) \rightarrow \theta(r_1)$, and so $\theta(l_1)$ and
$\theta(r_1)$ are obviously joinable. Note that, since $l_1$ is
unifiable with $l_2$, $l_1(\epsilon) = l_2(\epsilon)$. Thus, $f = h$
above, and $g \not = i$ as the contrary would induce a root-pair
repetition in $R$.

By applying Corollary~\ref{UniqueRule2} to $\theta(l_1)$ and
$\theta(r_1)$ we can conclude that their normal form must be some
instance of $r_1$. Likewise, we can apply the same corollary on
$\theta(l_1)$ and $\theta(r_2)$, i.e. their normal form must be an
instance of $r_2$.

But since $R$ is convergent, $\theta(r_1)$ and $\theta(r_2)$ 
must be joinable and thus must have the same
normal form, but this is impossible as $r_1$ and $r_2$ have different
root symbols.

For case (b), suppose that $\beta ~ = ~ mgu(r_1^{} \; =_{}^? \; l_2^{})$. 
Thus $\beta (l_1^{}) \, \rightarrow \, \beta (r_1^{}) \, \rightarrow \,
\beta(r_2^{})$. Hence $\beta (l_1^{})$ and $\beta (r_1^{})$ are
joinable, and $\beta (l_1^{})$ and $\beta (r_2^{})$ are joinable. The
rest of the argument is the same as for the above case.
\end{proof}

Next, we work towards showing that the other possible 
overlaps cannot occur either. The next lemma, and its 
extension, are used towards this goal. The technical 
result is used in the proofs of various other lemmas 
and corollaries.

\begin{Lemma}
\label{CommutingSquare}

Let $R$ be an LM-System, and suppose $f(s_1, \ldots, s_m) \displaystyle{ ~ \mathop{\xrightarrow{\hspace*{0.75cm}}}_{l^{}
    \rightarrow r^{}}^{}} ~ g(t_1, \ldots, t_n)$. Then the following diagram commutes. 
    
 \begin{center}
  \begin{tikzpicture}
    \node (A) at (0, 0) {$f(s_1, \ldots, s_m)$};
    \node[right=of A] (B) {$g(t_1, \ldots, t_n)$};
    \node[below=of A] (C) {$f(s_1', \ldots, s_m')$};
    \node[below=of B] (D) {$g(t_1', \ldots, t_n')$};
    
    \draw[->] (A)--(B) node [below, pos=.5] {$l \rightarrow r$};
    \draw[->] (A) --(C) node [left, pos=.8] {$*$}; 
    \draw[->, dashed] (C) -- (D) node [below, pos=.5] {$l \rightarrow r$}; 
    \draw[->] (B) -- (D) node [right, pos=.8] {$*$};
  \end{tikzpicture}
  \end{center}
  
\noindent where $s_1', \ldots, s_m', t_1', \ldots, t_n'$ are the normal forms of $s_1, \ldots, t_n$ respectively. 
\end{Lemma}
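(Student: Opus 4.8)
The plan is to observe that the dashed bottom edge of the square is \emph{exactly} the conclusion of Corollary~\ref{UniqueRule2}, so that almost all of the work has already been done; what remains is to verify that its hypotheses are satisfied and that the two vertical edges are routine normalizations. No genuinely new argument is required.

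First I would unpack the given top edge. The step $f(s_1, \ldots, s_m) \xrightarrow{l \rightarrow r} g(t_1, \ldots, t_n)$ is a rewrite at the root, so there is a matcher $\sigma$ with $f(s_1, \ldots, s_m) = \sigma(l)$ and $g(t_1, \ldots, t_n) = \sigma(r)$. In particular $l(\epsilon) = f$; and since $R$ is an $LM$-system no right-hand side is a variable, so $r(\epsilon) = g$ and the rule $l \rightarrow r$ has root pair $(f, g)$. Because $R \subseteq \RHS(R)$ and $\RHS(R)$ is quasi-deterministic, no rule of $R$ is root-stable, whence $f \neq g$; this is the condition needed to legitimately invoke the chain Lemma~\ref{ExactlyTwo} $\to$ Corollary~\ref{UniqueRule} $\to$ Corollary~\ref{UniqueRule2}.

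Next I would note that $s = f(s_1, \ldots, s_m)$ and $t = g(t_1, \ldots, t_n)$ are joinable, since the single root step $s \rightarrow t$ already exhibits $t$ as a common reduct ($s \rightarrow^{*} t$ and $t \rightarrow^{*} t$). With $l \rightarrow r$ of root pair $(f, g)$ and $s, t$ joinable, Corollary~\ref{UniqueRule2} applies verbatim and yields $f(s_1', \ldots, s_m') \xrightarrow{l \rightarrow r} g(t_1', \ldots, t_n')$, where $s_i'$ and $t_j'$ are the normal forms of $s_i$ and $t_j$. This is precisely the dashed bottom edge of the diagram.

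Finally, the two vertical edges are immediate: reducing each top-level argument to its normal form gives $f(s_1, \ldots, s_m) \rightarrow^{*} f(s_1', \ldots, s_m')$ and $g(t_1, \ldots, t_n) \rightarrow^{*} g(t_1', \ldots, t_n')$, using only closure of $\rightarrow$ under context. Hence the route top-then-right and the route left-then-bottom both terminate at $g(t_1', \ldots, t_n')$, so the diagram commutes. I do not expect a real obstacle here; the only delicate point is confirming the hypotheses of Corollary~\ref{UniqueRule2}---in particular pinning down the root pair of $l \rightarrow r$ as $(f, g)$ with $f \neq g$---since everything else reduces to that corollary and to elementary properties of rewriting.
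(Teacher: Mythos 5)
Your proposal is correct, and it takes a genuinely shorter route than the paper does. The paper's own proof does not cite Corollary~\ref{UniqueRule2} at this point: it first normalizes the arguments on both sides, uses confluence to get $f(s_1',\ldots,s_m') \downarrow_R g(t_1',\ldots,t_n')$, and then reruns the case analysis of Lemma~\ref{ExactlyTwo} on these two $\bar{\epsilon}$-irreducible terms --- explicitly ruling out the reversed step $g(t_1',\ldots,t_n') \to f(s_1',\ldots,s_m')$ as a root-pair repetition, and ruling out the valley case (both reducing to a common $\widehat{t}$) by an RHS inference combined with right-reducedness, so that only a one-step reduction by a rule with root pair $(f,g)$ survives, which by non-repetition must be $l \to r$ itself. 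You observe instead that this entire case analysis is already packaged into Corollary~\ref{UniqueRule2}, which precedes this lemma in the development: once its hypotheses are checked ($l \to r$ has root pair $(f,g)$, and $s,t$ are joinable since $s \to t$ exhibits a common reduct), its conclusion is verbatim the dashed edge, and the vertical edges are closure of $\to$ under contexts. Your reuse is legitimate and avoids duplicating the machinery of Lemma~\ref{ExactlyTwo}; it also makes explicit two points the paper leaves tacit, namely that the given step must be read as a \emph{root} step (consistent with how the lemma is used as the base case of Lemma~\ref{CommutingLemma}, where root rewrites are counted) and that quasi-determinism forces the root pair $(f,g)$ with $f \neq g$ (non-variable sides plus no root-stable equations). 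What the paper's inlined argument buys in exchange is transparency: it displays exactly where right-reducedness and the ban on root-pair repetitions in $\RHS(R)$ enter, whereas your version delegates those uses to the citation; mathematically the two proofs rest on the same underlying uniqueness argument.
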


\begin{proof}
  We have that $f(s_1, \ldots, s_m) \to^*
 f(s_1', \ldots, s_m')$ and $f(s_1, \ldots, s_m) \to^+ g(t_1', \ldots,
 t_n')$, and since $R$ is confluent $s' = f(s_1', \ldots, s_m') \;
 \downarrow_R \; g(t_1', \ldots, t_n') = t'$. Since both $s'$ and $t'$
 are $\bar{\epsilon}$-irreducible, they must be joinable in a single step. Let
$\widehat{t}$ be their normal form. There are
 three cases corresponding to Lemma~\ref{ExactlyTwo}. Cases $1$ and $2$ below
 correspond to case $(a)$ of Lemma~\ref{ExactlyTwo} and case $3$ corresponds to
 case $(b)$ of the Lemma~\ref{ExactlyTwo}. We have:
 
 \begin{itemize}
  \item[1.] $s' \rightarrow t'$ by a unique rule in $R$ with root-pair $(f, g)$,
  \item[2.] $t' \rightarrow s'$, by a unique rule in $R$ with root-pair $(g, f)$, 
  \item[3.] $s' \rightarrow \widehat{t}$ and $ t' \rightarrow \widehat{t}$, by two unique rules with root-pairs $(f, h)$ and $(g, h)$ for some $h$. 
 \end{itemize}

 However, case 2 leads to a contradiction as it would imply that there
 exists a rule in $R$ with root pair $(g, f)$ which would be a root
 pair repetition in~$E$.
 
 Suppose case 3 were true. Let $\rho_i = l_i \rightarrow r_i \in R$
 for $i \in \{1, 2\}$ be the unique rules with root pairs $(f, h)$ and
 $(g, h)$ respectively that reduce $s'$ and $t'$ to $\widehat{t}$. Without
 loss of generality assume that $\mathcal{V}ar(\rho_1) \cap
 \mathcal{V}ar(\rho_2) = \varnothing$. Then, $r_1$ and $r_2$ are
 unifiable and therefore we can perform a RHS inference step to get
 $\theta(l_1) \approx \theta(l_2)$. However, there is a root-rewrite
 step along the path from $s$ to $t$. Let $l \rightarrow r \in R$ be
 the rule that induces the root-rewrite step. Thus $\theta(l_1)
 \approx \theta(l_2) = l \approx r$ as $l \rightarrow r$ must be
 unique. This implies, however, that $r$ is reducible by $l_2
 \rightarrow r_2$ would contradicts the assumption that $R$ is
 right-reduced.

 We are then only left with case 1. This rule is unique by Lemma~\ref{ExactlyTwo}
 and since $l \rightarrow r$ has root pair $(f, g)$ these rules must
 be the same.
\end{proof}

We now extend the previous result by induction.

\begin{Lemma} 
\label{CommutingLemma}
Let $R$ be an LM-System and suppose $f(s_1, \ldots, s_m)
\rightarrow_{R}^{+} g(t_1, \ldots, t_n)$ are terms such that 
$f \not = g$. Then the following diagram commutes:
  
  \begin{center}
  \begin{tikzpicture}
    \node (A) at (0, 0) {$f(s_1, \ldots, s_m)$};
    \node[right=of A] (B) {$g(t_1, \ldots, t_n)$};
    \node[below=of A] (C) {$f(s_1', \ldots, s_m')$};
    \node[below=of B] (D) {$g(t_1', \ldots, t_n')$};
    
    \draw[->] (A)--(B) node [above, pos=1] {$+$};
    \draw[->] (A) --(C) node [left, pos=.8] {$*$}; 
    \draw[->, dashed] (C) -- (D) node [above, pos=1] {$+$}; 
    \draw[->] (B) -- (D) node [right, pos=.8] {$*$};
  \end{tikzpicture}
  \end{center}
  
\noindent
where $s_i' = s_i\downarrow_R$ and $t_j' = t_j\downarrow_R$ for $1 \leq i \leq n$, $1 \leq j \leq m$.  
\end{Lemma}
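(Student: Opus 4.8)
The plan is to induct on the number $N$ of \emph{root} rewrite steps occurring in the given reduction $f(s_1,\ldots,s_m) \to_R^+ g(t_1,\ldots,t_n)$. To make the induction close, I would first prove the slightly more general statement that \emph{drops} the hypothesis $f \neq g$: for any reduction $f(s_1,\ldots,s_m)\to_R^+ g(t_1,\ldots,t_n)$ (now also allowing $f=g$), the normalized terms satisfy $f(s_1',\ldots,s_m')\to_R^* g(t_1',\ldots,t_n')$, where $s_i' = s_i\!\downarrow_R$ and $t_j' = t_j\!\downarrow_R$. Lemma~\ref{CommutingLemma} is then recovered as the special case $f\neq g$, since in that case $f(s_1',\ldots,s_m')$ and $g(t_1',\ldots,t_n')$ carry distinct root symbols, so the $\to_R^*$ reduction is necessarily nonempty, i.e.\ $\to_R^+$.

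The base case $N=0$ is immediate: every step lies strictly below the root, so $f=g$, $m=n$, and the reduction decomposes argumentwise as $s_i \to_R^* t_i$. Since $R$ is convergent, $s_i$ and $t_i$ share a normal form, hence $f(s_1',\ldots,s_m') = g(t_1',\ldots,t_n')$ and the empty reduction suffices. For the inductive step I would split the reduction at its \emph{first} root step,
\[ f(s_1,\ldots,s_m) \,\to_{\bar{\epsilon}}^*\, f(a_1,\ldots,a_m) \,\xrightarrow{\,l\rightarrow r\,}\, \phi(c_1,\ldots,c_p) \,\to_R^+\, g(t_1,\ldots,t_n), \]
where the prefix consists solely of non-root steps and the tail contains exactly $N-1$ root steps. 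Because an $LM$-system has $\RHS(R)$ quasi-deterministic and $R \subseteq \RHS(R)$, no rule of $R$ is root-stable, so $\phi := r(\epsilon) \neq f$; thus the single root step has distinct roots and Lemma~\ref{CommutingSquare} applies to it, giving $f(a_1',\ldots,a_m') \xrightarrow{\,l\rightarrow r\,} \phi(c_1',\ldots,c_p')$ on the normalized arguments. Since the non-root prefix rewrites only inside the arguments we have $s_i \to_R^* a_i$, so $s_i' = a_i'$ by uniqueness of normal forms, and this first step reads $f(s_1',\ldots,s_m') \to_R \phi(c_1',\ldots,c_p')$. Applying the generalized induction hypothesis to the tail $\phi(c_1,\ldots,c_p)\to_R^+ g(t_1,\ldots,t_n)$ (which has only $N-1$ root steps) yields $\phi(c_1',\ldots,c_p')\to_R^* g(t_1',\ldots,t_n')$, and concatenating the two segments gives $f(s_1',\ldots,s_m')\to_R^+ g(t_1',\ldots,t_n')$.

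The main obstacle, and the reason for generalizing the statement, is that the root symbol $\phi$ produced by the first root step may coincide with $g$ (or $f$ may reappear as an intermediate root later in the reduction), so the tail need \emph{not} satisfy the distinct-root hypothesis of the lemma as originally stated. Decoupling the ``nonempty reduction'' conclusion from the distinct-root assumption is exactly what allows the induction on root steps to close, with Lemma~\ref{CommutingSquare} serving as the engine for each individual root step. The remaining work is bookkeeping: checking that non-root steps preserve the normal forms of the arguments (so the primed subterms are well defined and the two spliced segments agree on the intermediate primed term $\phi(c_1',\ldots,c_p')$), and confirming that the isolated root step genuinely matches the hypothesis of Lemma~\ref{CommutingSquare}.
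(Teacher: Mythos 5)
Your proposal is correct, and at bottom it runs the same induction as the paper: induct on the number of root rewrite steps, with Lemma~\ref{CommutingSquare} serving as the engine for each individual root step. The packaging, however, is genuinely different, and better. The paper proves the lemma exactly as stated (with $f \neq g$): it chains commuting squares left to right through the intermediate root symbols $h_0, \ldots, h_k$, fills the initial squares by the base case and induction hypothesis, and closes the final square by ``a similar argument'' to Lemma~\ref{CommutingSquare}. As written, that induction hypothesis only applies to segments whose endpoint root symbols are \emph{distinct}, and nothing available at this point of the paper rules out an intermediate root symbol recurring or coinciding with $g$ (Corollary~\ref{NoReversals}, which would forbid this, comes later and depends on the present lemma, so it cannot be invoked without circularity). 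You identify precisely this obstacle and repair it by strengthening the statement: drop $f \neq g$, weaken the conclusion to $\rightarrow_R^*$, and peel off the \emph{first} root step, which automatically has distinct root symbols since no rule of an LM-system is root-stable (and $r$ cannot be a variable, by quasi-determinism, so $\phi = r(\epsilon)$ is well defined); the original lemma is then recovered because terms with distinct roots cannot be equal, forcing the bottom reduction to be $\rightarrow_R^+$. This buys a cleanly closing induction at essentially no extra cost, and in effect patches a gap the paper glosses over. One small point to tighten in a final writeup: your generalized statement hypothesizes $\rightarrow_R^+$, but the tail after the first root step may be the \emph{empty} reduction, in which case the induction hypothesis does not literally apply; either state the generalization with $\rightarrow_R^*$ in the hypothesis as well, or observe that an empty tail gives $\phi(c_1', \ldots, c_p') = g(t_1', \ldots, t_n')$ outright. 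This is a triviality and does not affect correctness.
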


\begin{proof}
The proof proceeds by induction on the number of rewrites occurring at
the root along the chain $s = f(s_1\ldots, s_m)$ to $t =
g(t_1, \ldots, t_n)$. Let $q$ be the number of root rewrite steps
occurring as stated above. \\
 
 \noindent
 \textbf{Base Step:} The base step, where $q = 1$, corresponds to
 Lemma~\ref{CommutingSquare}. \\

\noindent
\textbf{Inductive Step:} Assume that the result holds for $q =
k$. We show that the result is also true for $q = k+1$. Since
there can be no root-pair repetitions in~$R$, we must have that
there is a sequence of root pairs starting with $f$ and ending
with $g$ corresponding to the rules used in the reductions. In
the diagram below, the first square commutes by the base case,
and the rest of the chains can be filled in to create commuting
squares by the induction hypothesis up to the $k+1$ root
rewrite. That is:

\begin{center}
 \begin{tikzpicture}
    \node (F) at (0, 0) {$f(s_1, \ldots, s_m)$};
    \node (H0) [right=of F] {$h_{0}(u_1, \ldots, u_s)$};
    \node (DOTS) [right=of H0] {$\cdots$};
    \node (Hk) [right=of DOTS] {$h_{k}(u_1, \ldots, u_j)$};
    \node (G) [right=of Hk] {$g(t_1, \ldots, t_n)$};
    \node (FP) [below=of F] {$f(s_1', \ldots, s_m')$};
    \node (H0P) [below=of H0] {$h_{0}(u_1', \ldots, u_s')$};
    \node (LDOTS) [right=of H0P] {$\ldots$};
    \node (Hkp) [below=of Hk] {$h_{k}(u_1', \ldots, u_j')$};
    \node (Gp) [below=of G] {$g(t_1', \ldots, t_n')$};
    
    \draw[->] (F)--(H0) node [above, midway] {$+$};
    \draw[->] (H0)--(DOTS) node [above, midway] {$*$};
    \draw[->] (DOTS)--(Hk) node [above, midway] {$*$};
    \draw[->] (Hk)--(G) node [above, midway] {$+$};
    \draw[->] (FP)--(H0P) node [above, midway] {$+$};
    \draw[->] (H0P)--(LDOTS) node [above, midway] {$*$};
    \draw[->] (LDOTS)--(Hkp) node [above, midway] {$*$};
    \draw[->] (F)--(FP) node [left, midway] {$*$};
    \draw[->] (H0)--(H0P) node [left, midway] {$*$};
    \draw[->] (Hk)--(Hkp) node [left, midway] {$*$};
    \draw[->] (G)--(Gp) node [left, midway] {$*$};
 \end{tikzpicture}
\end{center}

\noindent
Therefore, we only need to fill in the final square. However, a
similar argument as for the proof of Lemma~\ref{CommutingSquare}
applies to the terms $h_k(u_1, \ldots, u_j) \to^+ g(t_1', \ldots,
t_n')$ and $h_k(u_1, \ldots, u_j) \to^* h_k(u_1', \ldots, u_j')$.
\end{proof}

The following corollary establishes that given an LM-system $R$, then
if $l \rightarrow r \in $ is a any rule with root-pair $(f, g)$ no term 
with root-symbol $f$ can be reduced modulo $R$ to a term with $g$
as its root-symbol.

\begin{corollary}
\label{NoReversals}
Let $f( s_1^{} , \ldots , s_m^{} ) \; \rightarrow \; g( t_1^{} ,
\ldots , t_n^{} )$ be a rule in~$R$ where~$f \neq g$. Then no term
with $g$ as its root symbol can be reduced modulo~$R$ to a term with
$f$ at its root.
\end{corollary}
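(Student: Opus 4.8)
The plan is to argue by contradiction and to compress the hypothetical reduction down to a single root step between two $\bar{\epsilon}$-irreducible terms, at which point Lemma~\ref{ExactlyTwo} becomes applicable. Write the given rule as $l \to r$ with root pair $(f,g)$, and suppose, for contradiction, that some term $g(u_1, \ldots, u_n)$ rooted at $g$ satisfies $g(u_1, \ldots, u_n) \to_R^{+} f(w_1, \ldots, w_m)$. First I would invoke Lemma~\ref{CommutingLemma} (with the roles of $f$ and $g$ interchanged, which is legitimate since $f \neq g$) to normalize the arguments and obtain $g(u_1', \ldots, u_n') \to_R^{+} f(w_1', \ldots, w_m')$, where $u_i' = u_i{\downarrow_R}$ and $w_j' = w_j{\downarrow_R}$. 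Both endpoints are now $\bar{\epsilon}$-irreducible; crucially, the surviving bottom arrow is still a genuine $+$-reduction, so $g(u_1', \ldots, u_n')$ is reducible and therefore an innermost redex, and the two terms are joinable (indeed $g(u_1', \ldots, u_n') \to_R^{*} f(w_1', \ldots, w_m')$).

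Next I would apply Lemma~\ref{ExactlyTwo} with $s = g(u_1', \ldots, u_n')$ (an innermost redex rooted at $g$) and $t = f(w_1', \ldots, w_m')$ (an $\bar{\epsilon}$-irreducible term rooted at $f \neq g$). Exactly one of its two conditions holds. In case~(a) there is a rule with root pair $(g,f)$ that rewrites $s$ directly to $t$. But the given rule $l \to r$ has root pair $(f,g)$, and viewed as equations in $\RHS(R)$ (recall $R \subseteq \RHS(R)$) these two distinct rules carry the same unordered pair of root symbols $\{f,g\}$; this is a root-pair repetition, contradicting the quasi-determinism of $\RHS(R)$ required by condition~(iii).

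In case~(b) there are rules $l_1 \to r_1$ and $l_2 \to r_2$ with root pairs $(g,h)$ and $(f,h)$ whose right-hand sides share a common instance $\widehat{t}$, so $r_1$ and $r_2$ are unifiable and an $\RHS$ inference step yields $\theta(l_1) \approx \theta(l_2) \in \RHS(R)$, with $\theta = \Mgu(r_1 \eqq r_2)$, an equation whose root pair is again $\{f,g\}$. By the absence of root-pair repetitions this equation must coincide with $l \approx r$; matching the $g$-rooted sides forces $r = \theta(l_1)$. But then $r$ is an instance of $l_1$ and hence reducible by $l_1 \to r_1$, contradicting the fact that $R$ is right-reduced. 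This is precisely the contradiction already used in case~3 of the proof of Lemma~\ref{CommutingSquare}. Since both cases are impossible, no term rooted at $g$ can reduce to a term rooted at $f$.

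The main obstacle lies in the first paragraph rather than in the case analysis: one must ensure that after normalizing the arguments a genuine root rewrite still remains, i.e. that $g(u_1', \ldots, u_n')$ is actually an innermost redex and not already in normal form. This is exactly what appealing to Lemma~\ref{CommutingLemma} secures, since it guarantees the surviving bottom reduction is a $+$-step from a $g$-rooted term to an $f$-rooted one; it rules out the degenerate scenario in which the entire reduction collapses into the arguments and leaves the root symbol untouched. Once this structural reduction is in place, the rest is a direct application of Lemma~\ref{ExactlyTwo} together with the now-familiar root-pair-repetition and right-reducedness arguments.
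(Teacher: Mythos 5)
Your proposal is correct and takes essentially the same route as the paper's own proof: contradiction, normalization of the arguments via Lemma~\ref{CommutingLemma}, then Lemma~\ref{ExactlyTwo}, with case~(a) eliminated by a root-pair repetition and case~(b) by an RHS inference step whose conclusion must equal $l \approx r$, contradicting right-reducedness. (Your observation that it is $l_1 \to r_1$, the $g$-rooted rule, that reduces $r$ is in fact a slightly more careful reading than the paper's ``reducible by $l_2 \rightarrow r_2$,'' but the contradiction is identical either way.)
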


\begin{proof}
The proof is by contradiction. Suppose that there exists a
reduction chain $g(s_1, \ldots, s_m) \to^{+} f(t_1, \ldots, t_n)$
such that $g \not = f$. Applying Lemma~\ref{CommutingLemma} we obtain a
reduction chain $s' = g(s_1', \ldots, s_m') \to^+ f(t_1',
\ldots, t_n')= t'$ where $s_i' = s_i\downarrow_R$ and $t_j' =
t_j\downarrow_R$ for $1 \leq i \leq m$ and $1 \leq j \leq
n$. Thus, $s'$ and $t'$ are joinable, and $s'$ must be an
innermost redex as it is not in normal form, therefore we can
apply Lemma~\ref{ExactlyTwo} and get two cases.
 
Case ($a$) leads to a contradiction as $(f, g)$ is already a root
pair in $R$ by assumption, thus, $\{f, g\}$ is a root pair in~$E$
and therefore there cannot be a root pair $(g, f)$ as this would
contradict the assumption that $R$ is an $LM$-system.
 
Therefore, since exactly one of the two cases must obtain,
case~($b$) must be true. Thus there exist unique rules
$\rho_i = l_i \to r_i$ for $i \in \{1, 2\}$ such that $s'
\to_{\rho_1} \widehat{t}$ and $t' \to_{\rho_2} \widehat{t}$ with root
pairs $(g, h)$ and $(f, h)$ respectively. However, again we have
that $r_i$ for $i \in \{1, 2\}$ have a common instance and hence
are unifiable, and so we can perform an RHS inference to get
$\theta(l_1) \approx \theta(l_2) \in RHS(R)$ where $\theta =
mgu(r_1 =^? r_2)$. Since we assumed that $l \to r = f(s_1,
\ldots, s_m) \to g(t_1, \ldots, t_m)$ is a rule in $R$ this
inference step must yield that $\theta(l_1) \approx \theta(l_2) =
l \approx r$ as it has root pair $\{f, g\}$. This leads to a
contradiction however, as it implies that $r$ is reducible by
$l_2 \rightarrow r_2$ which contradicts the assumption that $R$
is right-reduced.
\end{proof}

We next establish two corollaries that give us information
about where reductions can take place. 

\begin{corollary}
\label{InnermostReductions} Let $R$ be an LM-System, then
$s = f( s_1^{} , \ldots , s_n^{} ) \; \rightarrow_R^* \; 
f( t_1^{} , \ldots , t_n^{} ) = t \,$
if and only if \[ \forall i \in \{1, \ldots, n\}: [s_i^{} \; \rightarrow_R^* \; t_i^{}] . \]
\end{corollary}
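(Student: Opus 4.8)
The plan is to prove the two directions separately, with the reverse direction being immediate and the forward direction resting on the structural results established above.

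For the $(\Leftarrow)$ direction I would simply invoke closure of $\rightarrow_R$ under contexts: if $s_i \rightarrow_R^* t_i$ for each $i$, then reducing the arguments one at a time inside the common context $f(\,\cdot\,, \ldots, \,\cdot\,)$ yields $f(s_1, \ldots, s_n) \rightarrow_R^* f(t_1, \ldots, t_n)$. This is a standard term-rewriting fact and uses none of the $LM$-hypotheses.

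For the $(\Rightarrow)$ direction, the key claim is that \emph{no} rewrite in the given sequence $f(s_1, \ldots, s_n) \rightarrow_R^* f(t_1, \ldots, t_n)$ takes place at the root. Granting this claim, every step occurs strictly below the root and hence inside exactly one argument subtree; projecting the sequence onto each argument position then yields $s_i \rightarrow_R^* t_i$ for every $i$, as required. To establish the claim I would argue by contradiction. Suppose some step is a root rewrite, and consider the \emph{first} such step, applied to an intermediate term $f(u_1, \ldots, u_n)$ (its root is still $f$ because no earlier step was at the root). It uses a rule $l \rightarrow r \in R$ with $l(\epsilon) = f$; since $R \subseteq RHS(R)$ and $RHS(R)$ is quasi-deterministic, $R$ contains no root-stable rule, so $r(\epsilon) = g$ for some $g \neq f$. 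Let $w$ be the term produced by this step, so $w(\epsilon) = g$. The remainder of the sequence gives $w \rightarrow_R^* f(t_1, \ldots, t_n)$, i.e.\ a term with root $g$ reduces to a term with root $f$. But $l \rightarrow r$ has root pair $(f, g)$, so Corollary~\ref{NoReversals} forbids precisely this. The contradiction proves the claim.

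The main obstacle is the forward direction, and within it the ruling-out of root rewrites; once Corollary~\ref{NoReversals} (which itself rests on Lemma~\ref{CommutingLemma} and Lemma~\ref{ExactlyTwo}) is invoked, the argument is short, so the genuine work has already been done upstream. The only mild bookkeeping to be careful about is the projection step---verifying that a reduction whose steps all lie below the root decomposes into independent per-argument reductions---but this is routine and appeals to no special property of $R$.
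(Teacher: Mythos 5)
Your proof is correct and follows essentially the same route as the paper: both hinge on observing that a root rewrite would use a rule with root pair $(f,g)$, $g \neq f$ (no root-stable rules), forcing a $g$-rooted term to reduce to the $f$-rooted term $t$, contradicting Corollary~\ref{NoReversals}. Your version is in fact slightly more explicit than the paper's, since you isolate the \emph{first} root step and spell out the projection of a below-root reduction onto the argument positions, a step the paper's proof uses implicitly.
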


\begin{proof}
The ``if'' part is obvious. For the ``only if'' part,
suppose $s \; \rightarrow_R^* \; t$ as above and 
$s_j \not\rightarrow_R^{*} t_j$ 
for some~$1 \leq j \leq n$. Then some rewrite step in the sequence
from $f(s_1, \ldots, s_n)$ to $f(t_1, \ldots, t_n)$ must have occurred
at the root. Thus there must be a rule with
(directed) root pair $(f, h)$ for some $h \not = f$. However, $t$ has
root symbol~$f$. Thus, a term with $h$ as the root symbol must be
reducible to a term with $f$ at the root symbol, but this contradicts
Corollary~\ref{NoReversals}.
\end{proof}

\begin{defn} Let $s_1^{}, \, s_2^{}$ be terms. A position $p \in \pos(s_1^{}) \cup \pos(s_2^{})$
is said to be an \emph{outermost distinguishing position} between $s_1^{}$
and $s_2^{}$ if and only if $s_1^{} (p) \neq s_2^{} (p)$ and
$s_1^{} (p') = s_2^{} (p')$ for all proper prefixes~$p'$ of~$p$. The set of all
outermost distinguishing positions between two terms $s$ and~$t$
is denoted by $ODP(s, t)$. 
\end{defn}

Note that $ODP(s, s) = \emptyset$.

\begin{corollary}
\label{ODPReductions}
Let $R$ be an LM-System. Then $s \xrightarrow{*} t \; \,
\text{if and only if} \; \,
\forall p \in ODP(s, t): \left[s|_{p} \xrightarrow{*} t|_p\right]$.
\end{corollary}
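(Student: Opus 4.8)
The plan is to prove the biconditional by combining two elementary structural facts about $ODP(s,t)$ with repeated applications of Corollary~\ref{InnermostReductions}. The first fact I would record is that the positions in $ODP(s,t)$ are pairwise incomparable: if some $p$ were a proper prefix of some $q$ with both in $ODP(s,t)$, then the defining clause for $q$ would force $s(p) = t(p)$, directly contradicting the defining clause for $p$. The second fact is that $s$ and $t$ agree outside the subtrees rooted at the outermost distinguishing positions, i.e. $s(r) = t(r)$ for every position $r$ that is neither at nor below any $p \in ODP(s,t)$. This follows by a minimality argument: if $s(r) \neq t(r)$, then the shortest prefix $r'$ of $r$ with $s(r') \neq t(r')$ has all proper prefixes agreeing, so $r' \in ODP(s,t)$ and $r$ lies at or below it.

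For the ``if'' direction, suppose $s|_p \xrightarrow{*} t|_p$ for every $p \in ODP(s,t)$. By the second fact above, $t$ is precisely the term obtained from $s$ by simultaneously replacing, for each of the finitely many $p \in ODP(s,t)$, the subterm $s|_p$ by $t|_p$. Since rewriting is closed under contexts and, by incomparability, these positions index pairwise-disjoint subtrees, I can carry out each of the reductions $s|_p \xrightarrow{*} t|_p$ inside $s$ independently, which assembles into $s \xrightarrow{*} t$.

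For the ``only if'' direction, fix $p \in ODP(s,t)$. The case $p = \epsilon$ is immediate, since then $s|_p = s$ and $t|_p = t$; so assume $p = i_1 \cdots i_d$ with $d \geq 1$, and set $q_j = i_1 \cdots i_j$ for $0 \leq j \leq d$, so that $q_0 = \epsilon$ and $q_d = p$. For each $0 \leq j \leq d-1$ the position $q_j$ is a proper prefix of $p$, whence $s(q_j) = t(q_j)$ and the terms $s|_{q_j}$ and $t|_{q_j}$ carry the same root symbol. I would then prove by induction on $j$ that $s|_{q_j} \xrightarrow{*} t|_{q_j}$: the base case $j = 0$ is the hypothesis $s \xrightarrow{*} t$, and in the inductive step I apply Corollary~\ref{InnermostReductions} to the reduction $s|_{q_j} \xrightarrow{*} t|_{q_j}$ between two terms with a common root symbol, peeling off one level to obtain $s|_{q_{j+1}} = (s|_{q_j})|_{i_{j+1}} \xrightarrow{*} (t|_{q_j})|_{i_{j+1}} = t|_{q_{j+1}}$. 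Taking $j = d$ yields $s|_p \xrightarrow{*} t|_p$, as required.

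I expect the main obstacle to lie entirely in the bookkeeping of the ``if'' direction --- pinning down rigorously that $s$ and $t$ differ only inside the disjoint $ODP$-subtrees --- rather than in the ``only if'' direction, which collapses to a clean walk down the path to $p$ driven solely by Corollary~\ref{InnermostReductions}. One minor point worth verifying is that every proper prefix of an outermost distinguishing position carries a function symbol (never a variable) in both terms; this is automatic, since a variable position has no descendants, and it guarantees that each application of Corollary~\ref{InnermostReductions} is legitimate.
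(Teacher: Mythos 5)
Your proposal is correct and takes essentially the same route as the paper: both directions hinge on Corollary~\ref{InnermostReductions}, with the ``only if'' direction descending one function-symbol level at a time toward $p$, using the fact that all proper prefixes of an outermost distinguishing position carry the same symbol in $s$ and $t$. The paper organizes this descent as a lexicographic induction on the triple $(|s|,|t|,|p|)$ whereas you induct directly on depth along the path to $p$, and you spell out the ``if'' direction (incomparability of $ODP$ positions, closure of rewriting under contexts) that the paper dismisses as straightforward --- cosmetic differences only.
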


\begin{proof}
The ``if" direction is straightforward. We prove the ``only if"
direction by induction on triples (w.r.t.\ the induced lexicographic
xorder) $(|s|, |t|, |p|)$ where $s, t$ are terms and $p$ is a position
such that $s \rightarrow_{R}^{*} t$ and $p \in ODP(s, t)$. \\

\noindent
\textbf{Basis.} The ``least" (i.e., lowest in the ordering) 
such triple is $(1, 1, 0)$, corresponding
to terms such as $s = a$ and $t = b$, i.e., constants. Suppose that 
$s \rightarrow^{*} t$. Then, $ODP(s, t) = \{\epsilon\}$ and therefore
$s|_p = s \rightarrow^{*} t = t|_p$, which is exactly the statement of
the corollary. Thus, we can conclude that the base case holds. \\

\noindent
\textbf{Inductive Step.} Assume that the result is true for all
triples $C \prec (|s'|, |t'|, |p'|)$. We show that the result holds
for the triple $(|s'|, |t'|, |p'|)$ itself. Suppose 
$s' \rightarrow^{*} t'$ and $p' \in ODP(s', t')$. Again, since the
result clearly holds for $p' = \epsilon$ we assume that $p' \not =
\epsilon$. Then $s' = f(s_1, \ldots, s_n)$ and 
$t' = f(t_1, \ldots, t_n)$ for some~$f$ and 
terms~$s_1, \ldots , s_n, t_1, \ldots , t_n$.
By Corollary~\ref{InnermostReductions} it must be the case that
$\forall i \in \{1, \ldots, n\} : \left[ \vphantom{b^b} s_i \rightarrow^{*} t_i
\right]$. Since
$|s_i| < |s'|$ and $|t_i| < |t'|$ we invoke the induction hypothesis
to conclude that 
$(\forall i \in \{1, \ldots, n\})(\forall q \in
ODP(s_i, t_i))\left[ \vphantom{b^b} s_i|_q \rightarrow^{*} t_i|_q \right]$. 
But, we then have that
$p' = i \cdot q$ for some $i \in \{1, \ldots, n\}$ and some $q \in
ODP(s_i, t_i)$, and so we have that $s'|_{i \cdot q}^{} = s|_{p'}
\rightarrow^{*} t'|_{i \cdot q}^{} = t'|_{p'}$.

We can therefore conclude that the result must hold for all triples. 
\end{proof}

\begin{defn} (\emph{non-overlay superpositions}) Let $R$ be a rewrite-system, then \[
NOSU\!P(R) ~ := ~ \left\{ \, \sigma( l_1^{} [l_2^{}]_p^{} ) ~ \; \big| \; ~ 
\vphantom{c_c^c} p \in \fpos( l_1^{} ) \smallsetminus \{\epsilon \} ~ \text{and} ~
\sigma = \Mgu( l_1^{} |_p^{} \, =^? \, l_2^{} ) \,, l_1 \rightarrow r_1,\; l_2 \rightarrow r_2 \in R\; \right\} \]
\end{defn}

The previous results of this section are now brought together to prove
the following main results about LM-systems concerning the status 
of overlaps. Namely, we show that there are no non-overlay superpositions
and no forward-overlaps.

We first establish that there are no superpositions occurring 
between the left-hand sides of two distinct rules in $R$. Formally,
this amounts to showing that $NOSU\!P(R) = \emptyset$. The main idea
of the proof is that we show that such superpositions would induce
critical pairs, and then these critical pairs cannot be joinable, which 
would contradict the confluence of our system ($R$ is convergent by definition). 

\begin{corollary}
\label{UnjoinableCriticalPairs}
$NOSU\!P(R) = \emptyset \,$ for all LM-systems~$R$.
\end{corollary}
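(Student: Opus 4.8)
The plan is to assume $NOSU\!P(R) \neq \emptyset$ and derive a contradiction from confluence together with the structural corollaries already proved. So suppose there are rules $l_1 \to r_1$ and $l_2 \to r_2$ in $R$, a position $p \in \fpos(l_1) \smallsetminus \{\epsilon\}$, and $\sigma = \Mgu(l_1|_p \,=^?\, l_2)$. Write $(f, g)$ for the root pair of $l_1 \to r_1$ and $(c, d)$ for that of $l_2 \to r_2$; since $R$ is an LM-system, neither rule is root-stable, so $f \neq g$ and $c \neq d$, and no rule has a variable side. Because $l_1|_p$ is non-variable (as $p \in \fpos(l_1)$) and unifies with $l_2$, we have $\mathrm{root}(l_1|_p) = \mathrm{root}(l_2) = c$.

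Next I form the critical pair arising from this superposition. The term $u = \sigma(l_1)$ is reducible at the root by $l_1 \to r_1$, giving $a := \sigma(r_1)$ with root $g$, and reducible at $p$ by $l_2 \to r_2$ (since $u|_p = \sigma(l_1|_p) = \sigma(l_2)$), giving $b := \sigma(l_1)[\sigma(r_2)]_p$ with root $f$ (as $p \neq \epsilon$). As $a$ and $b$ are both reducts of $u$ and $R$ is convergent, $a \downarrow_R b$. I then apply Corollary~\ref{UniqueRule2} to the joinable pair $b$ (root $f$) and $a$ (root $g$) using the rule $l_1 \to r_1$ with root pair $(f, g)$: normalizing the top-level arguments yields $b^{\circ} := f(\widehat{b_1}, \ldots, \widehat{b_m}) \xrightarrow{\; l_1 \to r_1 \;} g(\widehat{a_1}, \ldots, \widehat{a_n})$. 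In particular $b^{\circ}$ is a normalized instance of the left-hand side, $b^{\circ} = \tau(l_1)$ for the matching substitution $\tau$, and every proper subterm of $b^{\circ}$ is in normal form.

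The heart of the argument is to examine position $p$ inside $b^{\circ}$. On the one hand $b^{\circ} = \tau(l_1)$ gives $b^{\circ}|_p = \tau(l_1|_p)$, whose root is $c$. On the other hand I claim $b^{\circ}|_p$ is precisely the normal form of $b|_p = \sigma(r_2)$. To see this, observe that $b$ and $b^{\circ} = \tau(l_1)$ carry the same function symbols at every proper prefix of $p$ (both follow the skeleton of $l_1$ there, since $b$ differs from $\sigma(l_1)$ only at $p$); hence, descending the spine to $p$ and repeatedly invoking Corollary~\ref{InnermostReductions}, no rewrite in $b \to^* b^{\circ}$ occurs strictly above $p$, so $b|_p \to^* b^{\circ}|_p$ with $b^{\circ}|_p$ in normal form. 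Thus the normal form of $\sigma(r_2)$ has root $c$. But $\sigma(r_2)$ has root $d \neq c$, so Corollary~\ref{NormalFormPair} forces $(d, c)$ to be a root pair of $R$; together with the root pair $(c, d)$ of $l_2 \to r_2$ this is a root-pair repetition, contradicting quasi-determinism of $RHS(R)$. Hence $NOSU\!P(R) = \emptyset$.

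I expect the main obstacle to be the positional bookkeeping in the last paragraph: justifying that normalizing $b$ to $b^{\circ}$ leaves position $p$ intact (equivalently, that the normal form of $\sigma(r_2)$ genuinely appears at $p$ in $b^{\circ}$), which needs a careful induction on the length of $p$ along the spine from the root down to $p$, with Corollary~\ref{InnermostReductions} used at each level to rule out a root rewrite above $p$. An alternative to invoking Corollary~\ref{UniqueRule2} would be to reduce $a$ and $b$ to $\bar{\epsilon}$-irreducible forms and apply Lemma~\ref{ExactlyTwo} directly: the case where $b$'s form rewrites to $a$'s form by the root pair $(f,g)$ is exactly the one requiring this positional analysis, whereas the remaining cases close quickly via a root-pair repetition on $\{f,g\}$ or a violation of right-reducedness, as in the proofs of Corollary~\ref{NoReversals} and Lemma~\ref{CommutingSquare}.
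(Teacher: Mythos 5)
Your proof is correct and follows essentially the same route as the paper: form the critical pair from the superposition, apply Corollary~\ref{UniqueRule2} to see that its normal form is a normalized instance $\tau(l_1)$, and then track position~$p$ (your spine-descent via Corollary~\ref{InnermostReductions} is exactly the content of Corollary~\ref{ODPReductions}, which you could cite directly) to conclude that $\sigma(r_2)$, with root $d$, reduces to a normal form with root $c$. The only cosmetic difference is at the final step, where the paper contradicts Corollary~\ref{NoReversals} while you go through Corollary~\ref{NormalFormPair} to obtain the root pair $(d,c)$ and contradict the prohibition on root-pair repetitions---an equally valid finish, since that is how Corollary~\ref{NoReversals} is itself proved.
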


\begin{proof}
We prove the following: 
if $R$ is an LM-System, $l_1 \to r_1$ and $l_2 \to r_2$ rules
in~$R$, $p \in \fpos(l_1^{})$ a non-root position, and $\sigma(
l_1^{} [l_2^{}]_p^{})$ a superposition, then the critical pair
$\langle \sigma ( l_1^{} [ r_2^{} ]_p^{} ), \; \sigma ( r_1^{} )
\rangle$ is not joinable modulo~$R$.

Assume towards deriving a contradiction that $s = \sigma ( l_1^{}
[ r_2^{} ]_p^{} )$ and $t = \sigma ( r_1^{} )$ are joinable
modulo $R$. Since $l_1^{} \to r_1^{} \in R$ there must be distinct
function symbols $f$ and~$g$ such that 
$s(\epsilon) = f$ and $t(\epsilon) = g$ as $R$ is an LM-System,
i.e., $l_1^{} \to r_1^{}$ must have root-pair $(f, g)$ for 
$f \not = g$.

Then, it must be that $s = f(s_1, \ldots, s_m)$ and $t = g(t_1,
\ldots, t_n)$. Since $s$ and $t$ are assumed to be joinable,
$l_1^{} \to r_1^{}$ is a rule in $R$ with root pair $(f, g)$ we
can apply Corollary~\ref{UniqueRule2} to get that:

\[ f(\widehat{s_1}, \ldots , \widehat{s_m}) ~ \mathop{\xrightarrow{\hspace*{0.75cm}}}_{l_1^{} \rightarrow r_1^{}}^{} ~ 
g(\widehat{t_1}, \ldots , \widehat{t_n}) \]

\noindent
where $\widehat{s_1}, \ldots, \widehat{s_m}, \, \widehat{t_1},
\ldots, \widehat{t_n}$ are the $R$-normal forms of $s_1, \ldots, s_m, \,
t_1, \ldots,
t_n$ resp. However, since $f(\widehat{s_1}, \ldots ,
\widehat{s_m})$ is $\bar{\epsilon}$-irreducible, the rule must be applied at
the root, which gives us that $f(\widehat{s_1}, \ldots ,
\widehat{s_m}) = \theta(l_1^{})$ for some substitution
$\theta$. Putting this all together we then get the following
reduction: \[ \sigma ( l_1^{} [ r_2^{} ]_p^{} ) \, \longrightarrow^{*}_{R}  \, \theta(l_1) \] Thus, 
Corollary~\ref{ODPReductions} applies and $\forall q \in
ODP(\sigma ( l_1^{} [ r_2^{} ]_p^{} ), \; \theta(l_1)) : \,
\left[ \vphantom{b_b^b} \sigma (
  l_1^{} [ r_2^{} ]_p^{} )|_q \rightarrow^{*}
  \theta(l_1)|_q \right]$. 
Since $p$ belongs to $ODP(\sigma ( l_1^{} [ r_2^{} ]_p^{} ), \; \theta(l_1))$,
it follows that  
$\sigma(r_2^{}) \rightarrow^{*} \theta(l_1)|_p = \theta(l_1|_p)$.
But since $l_1|_p$ is unifiable with $l_2$, we have
$l_1 |_p^{} (\epsilon) = l_1 (p) = l_2 (\epsilon)$
and this contradicts
Corollary~\ref{NoReversals}.
\end{proof}

We now turn to the case of forward-overlaps as
defined in the section on forward-closure. 

\begin{Lemma}
\label{NoForwardOverlaps}
Let $R$ be an LM-System, then $R \rightsquigarrow R = \varnothing$. 
\end{Lemma}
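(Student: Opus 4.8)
The plan is to argue by contradiction, using only two ingredients already at hand: the containment $R \rightsquigarrow R \subseteq R$, which comes essentially for free from forward-closedness, and the prohibition of root-pair repetitions in $\RHS(R)$. So rather than proving redundancy directly, I would show that the mere membership of a composite rule in $R \rightsquigarrow R$ is contradictory.

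First I would record the containment. Since $R$ is forward-closed, $FC(R)=R$; because $R = FC_0(R) \subseteq FC_1(R) \subseteq FC(R) = R$, we get $FC_1(R)=R$, and as $FC_1(R) = R \cup (R \rightsquigarrow R)$ this gives $R \rightsquigarrow R \subseteq R$. Hence it suffices to show that no composite rule $\rho_c = (l_1 \to r_1) \rightsquigarrow_p (l_2 \to r_2) = \sigma(l_1 \to r_1[r_2]_p)$, where $\sigma = \Mgu(r_1|_p =^? l_2)$ and $p \in \fpos(r_1)$, can lie in $R$. I would then split on the overlap position. If $p = \epsilon$, the overlap requires $r_1$ and $l_2$ to be unifiable; for distinct rules this is impossible by Corollary~\ref{noRootOverlaps}(b), and when $l_1 \to r_1 = l_2 \to r_2$ it is impossible because $r_1$ and $l_1$ carry different root symbols (the rule is not root-stable). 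So the mgu fails at the root and we may assume $p \neq \epsilon$.

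For $p \neq \epsilon$ the argument is pure root-symbol bookkeeping. Write the root pairs as $(f,g)$ for $l_1 \to r_1$ and $(h,k)$ for $l_2 \to r_2$; quasi-determinism of $\RHS(R)$ gives $f \neq g$ and $h \neq k$. Since $p$ is a proper \emph{function} position of $r_1$, unifiability of $r_1|_p$ with $l_2$ forces the two roots to agree, so $r_1(p) = l_2(\epsilon) = h$, whereas $\sigma(r_1[r_2]_p)(p) = r_2(\epsilon) = k$. Now the composite rule has root pair $(f,g)$ as well: its left side $\sigma(l_1)$ has root $f$, and its right side $\sigma(r_1[r_2]_p)$ still has root $g$ precisely because $p \neq \epsilon$. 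Thus $\rho_c$ and $l_1 \to r_1$ are two rules of $R \subseteq \RHS(R)$ sharing the root pair $(f,g)$. As $\RHS(R)$ admits no root-pair repetition, they must be one and the same equation, which forces $\sigma(r_1[r_2]_p)$ to coincide (up to renaming) with $r_1$. But a variable renaming preserves the function symbol at every position, and these two terms differ at $p$, carrying $k$ versus $h$ with $h \neq k$ --- a contradiction. Hence no such $\rho_c$ exists and $R \rightsquigarrow R = \varnothing$.

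The step I expect to be the main obstacle is the root-symbol bookkeeping at $p$: one must check that $p \in \fpos(r_1)$ genuinely makes $r_1(p) = h$ (so that replacing the subterm really changes the symbol from $h$ to $k$), and that renaming cannot reconcile the resulting mismatch. Everything else is a one-line appeal to forward-closedness for the containment and to the no-root-pair-repetition clause of quasi-determinism. I would close by noting that, beyond the containment $R \rightsquigarrow R \subseteq R$, the redundancy clause in the definition of $\rightsquigarrow$ is never used, so the contradiction shows the set is empty outright rather than merely redundant.
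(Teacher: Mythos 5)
Your proof is correct, but it takes a genuinely different route from the paper's. The paper derives a contradiction from the mere \emph{constructibility} of a forward overlap: assuming $\sigma = \Mgu(r_1|_p \eqq l_2)$ exists for a non-root $p \in \fpos(r_1)$, it applies Corollary~\ref{UniqueRule2} to conclude that both $\sigma(r_1)$ and $\sigma(r_1[r_2]_p)$ normalize to a common instance $\beta(r_1)$, observes that $p$ is an outermost distinguishing position between $\sigma(r_1[r_2]_p)$ and $\beta(r_1)$, extracts $\sigma(r_2) \rightarrow^+ \beta(r_1)|_p$ via Corollary~\ref{ODPReductions}, and contradicts Corollary~\ref{NoReversals} --- i.e., it leans on the whole commuting-diagram machinery of Section~4. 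You avoid all of that: from forward-closedness you get the containment $R \rightsquigarrow R \subseteq FC_1(R) \subseteq FC(R) = R$ (which is valid under the paper's definitions), and then pure root-symbol bookkeeping plus the no-root-pair-repetition clause of quasi-determinism forces any composite rule to \emph{be} $l_1 \to r_1$, which is refuted by the symbol clash $h \neq k$ at position $p$ (and your argument even covers self-overlaps at non-root positions uniformly, since then $h = f$, $k = g$ and the clash persists). Your handling of $p = \epsilon$ also fills a small gap the paper glosses over: Corollary~\ref{noRootOverlaps}(b) is stated only for \emph{distinct} rules, and you dispatch the self-overlap case separately via non-root-stability. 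The trade-off is one of strength: the paper's argument shows no forward overlap is even constructible --- the mgu cannot exist --- which is the form implicitly invoked later (in the non-overlapping lemma and Corollary~\ref{SaturatedByParamod}), whereas your argument only shows that any constructible composite must be redundant in $R$, i.e., exactly the stated emptiness of $R \rightsquigarrow R$ but nothing more. So your proof is shorter and more elementary for the lemma as literally stated, while the paper's buys the stronger non-constructibility fact that its subsequent results quietly rely on.
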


\begin{proof}
Recall: $R \rightsquigarrow R = \left \{(l_1 \rightarrow r_1)
\rightsquigarrow_p (l_2 \rightarrow r_2) \; \big| \; l_1 \rightarrow r_1, \,
l_2 \rightarrow r_2 \in R \; \wedge \; p \in \fpos(r_1) \vphantom{b^b} \right\}$ which are
not redundant in $R$. The proof proceeds by contradiction. Suppose
that the above set is non-empty. By
Corollary~\ref{noRootOverlaps}, no forward overlap can occur at $p =
\epsilon$.

Thus, there exists at least two rules, $l_1 \rightarrow r_1$ and $l_2
\rightarrow r_2$ in $R$, such that $p \in \fpos(r_1)$ is a non-root
position, and $\sigma ~ = ~ mgu(r_1|_p =^? l_2)$ exists. Suppose that
$l_2^{} \rightarrow r_2^{}$ has root-pair $(f, g)$. Forward
closure gives us the rule $\sigma(l_1^{}) \rightarrow \sigma(r_1[r_2]_p)$.
By~Corollary~\ref{UniqueRule2} the normal form of
$\sigma(l_1^{})$ and $\sigma(r_1^{})$
must be an instance of~$r_1^{}$, i.e., $\beta (r_1^{})$ for some
substitution~$\beta$. The normal form of $\sigma(r_1[r_2]_p)$
must also be the same~$\beta (r_1^{})$.
But note that $p \in ODP( \sigma(r_1[r_2]_p) ,
\, \beta (r_1^{}) )$ since $r_1^{} (p) = l_2^{} (\epsilon) = f \neq
r_2^{} (\epsilon) = g$. Thus 
$\sigma (r_2^{}) \, \rightarrow_{}^+ \, \beta (r_1^{})|_p^{}$. But, as mentioned,
the root symbol of~$r_1^{}|_p^{}$ (and hence that of~$\beta (r_1^{})|_p^{}$)
is the same as the root symbol of~$l_2^{}$. This contradicts
Corollary~\ref{NoReversals}.
\end{proof}
We can now state the following lemma, which follows easily from the above
results concerning overlaps. First we introduce the following definition: 
\begin{defn} A term-rewriting system $R$ is said to be \emph{non-overlapping} if and only if there are
no left-hand side superpositions and no forward-overlaps.
\end{defn}
\begin{Lemma}
Every LM-system is non-overlapping.
\end{Lemma}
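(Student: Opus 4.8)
The plan is to observe that this lemma is an immediate consequence of the three structural results already established, so that almost all of the work has been done. By the definition just stated, for an LM-system $R$ I must verify two things: that $R$ has no left-hand side superpositions, and that $R$ has no forward-overlaps. The second requirement is precisely the content of Lemma~\ref{NoForwardOverlaps}, which asserts $R \rightsquigarrow R = \varnothing$, so nothing further is needed there.

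For the first requirement I would decompose the notion of a left-hand side superposition according to whether the overlap occurs at the root or at a proper subterm position. A superposition of $l_2$ into $l_1$ at a position $p \in \fpos(l_1)$ is governed by $\sigma = \Mgu(l_1|_p =^? l_2)$. When $p \neq \epsilon$, the collection of all such superpositions is exactly $NOSU\!P(R)$, which Corollary~\ref{UnjoinableCriticalPairs} shows is empty. When $p = \epsilon$ the overlap is an overlay of $l_1|_\epsilon = l_1$ against $l_2$; for two distinct rules this would require $l_1$ and $l_2$ to be unifiable, which Corollary~\ref{noRootOverlaps}(a) forbids. The only remaining root overlay, that of a rule against itself, is the trivial overlay producing a trivial critical pair and is excluded by convention. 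Together these cases cover every left-hand side superposition, so $R$ has none.

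I would then simply assemble these observations: since $R$ has no non-root left-hand side superpositions (Corollary~\ref{UnjoinableCriticalPairs}), no root overlaps between the left-hand sides of distinct rules (Corollary~\ref{noRootOverlaps}(a)), and no forward-overlaps (Lemma~\ref{NoForwardOverlaps}), $R$ is non-overlapping by definition.

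The proof has essentially no obstacle of its own; the substance was discharged in the earlier corollaries and lemma. The only point demanding care is bookkeeping: confirming that the informal phrase \emph{``no left-hand side superpositions''} has been exhausted by the root and non-root cases, so that nothing falls through the cracks between Corollary~\ref{noRootOverlaps} and Corollary~\ref{UnjoinableCriticalPairs}. In particular one should be explicit that the $p = \epsilon$ case is handled by the unifiability clause of Corollary~\ref{noRootOverlaps} rather than by $NOSU\!P(R)$, since the latter is defined only for non-root positions.
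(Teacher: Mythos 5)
Your proof is correct and follows essentially the same route as the paper's: Corollary~\ref{noRootOverlaps}(a) for root overlaps between distinct left-hand sides, Corollary~\ref{UnjoinableCriticalPairs} (i.e.\ $NOSU\!P(R) = \emptyset$) for non-root left-hand side superpositions, and Lemma~\ref{NoForwardOverlaps} for forward-overlaps. Your explicit bookkeeping---noting that $NOSU\!P(R)$ excludes the root position, that the $p = \epsilon$ case is covered by the non-unifiability clause, and that the trivial self-overlay is excluded by convention---is slightly more careful than the paper's proof, which leaves those points implicit, but the substance is identical.
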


\begin{proof}
Let $R$ be an LM-system. Suppose $l_i \rightarrow r_i \in R$ for $i \in \{1, 2\}$ are
distinct rules. Then, there can be no overlaps between $l_1$ and $l_2$ at position $p = \epsilon$ by
lemma~\ref{noRootOverlaps}. Lemma~\ref{UnjoinableCriticalPairs} establishes that $l_1$ and $l_2$ cannot
overlap at position $p \not = \epsilon$. Finally, by Lemma~\ref{NoForwardOverlaps}, no overlaps can occur between
$r_1$ and $l_2$.  
\end{proof}

Finally, using the results derived above about \emph{LM-systems}, we show
that every \emph{LM-system} is saturated by paramodulation. It is clear that 
every rewrite system saturated by paramodulation is also forward-closed.
The next result establishes that for \emph{LM-systems}, these two concepts are equivalent. 
Specifically, an \emph{LM-system} is trivially saturated by paramodulation as there can be no 
overlaps into the left-hand side of an equation nor the right-hand side of an equation. 

\begin{corollary}
\label{SaturatedByParamod}
Every LM-System is saturated by paramodulation.
\end{corollary}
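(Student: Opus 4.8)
The plan is to show that the saturation condition holds essentially vacuously: no paramodulation inference among the equations of an $LM$-system $R$ can yield a non-redundant conclusion, because every such inference would require unifying the left-hand side of one rule with a non-variable subterm of a side of another rule, and all such unifiers have already been ruled out by the overlap results of this section.

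First I would unfold the paramodulation inference rule. An inference superposes a rule $s \to t \in R$ (contributing its left-hand side $s$) into a function position $p \in \fpos(u)$ of a side $u$ of an equation $u \approx v$ arising from a rule of $R$, and requires that $\sigma = \Mgu(s =^? s')$ exist, where $s' = u|_p$. Writing the source rule as $l_2 \to r_2$ (so $s = l_2$) and the target rule as $l_1 \to r_1$, I note that the two sides of the target equation are exactly $l_1$ and $r_1$; since equations are symmetric, letting $u$ range over both $l_1$ and $r_1$ covers superposition into either side. Thus $u|_p$ is a non-variable subterm of either $l_1$ or $r_1$, which splits the analysis into two cases, each further split by whether $p = \epsilon$.

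Then I would dispatch the four cases using the labeled results. If $u = l_1$ and $p = \epsilon$, the inference demands $\Mgu(l_1 =^? l_2)$; for distinct rules this unifier does not exist by Corollary~\ref{noRootOverlaps}(a), and when the two rules coincide the unifier is the identity and the conclusion is the tautology $r_1 \approx r_1$, which is redundant. If $u = l_1$ and $p \neq \epsilon$, an existing unifier $\Mgu(l_1|_p =^? l_2)$ would place a term in $NOSU\!P(R)$, contradicting Corollary~\ref{UnjoinableCriticalPairs}. If $u = r_1$, then for any $p \in \fpos(r_1)$ (including the root) the inference demands $\Mgu(r_1|_p =^? l_2)$, which is precisely the condition for the forward overlap $(l_1 \to r_1) \rightsquigarrow_p (l_2 \to r_2)$ to be defined; by Lemma~\ref{NoForwardOverlaps} we have $R \rightsquigarrow R = \varnothing$, so no such unifier exists (the $p = \epsilon$ subcase also following directly from Corollary~\ref{noRootOverlaps}(b)). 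Hence the only constructible inferences are the trivial self-overlaps at the root of a left-hand side, whose conclusions are tautologies, and every other putative inference fails to exist for want of a unifier.

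I would then conclude that every paramodulation inference among equations of $R$ is redundant, so $R$ is saturated by paramodulation by definition. The step needing the most care is the bookkeeping of the degenerate cases — superposing a rule into itself, and superposition at the root — to confirm that these produce only tautologies rather than genuine non-redundant equations; once the four overlap results are invoked, the remainder is a routine case enumeration with no further computation.
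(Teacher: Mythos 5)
Your proposal is correct and follows essentially the same route as the paper: split on whether the superposition is into the left-hand side or the right-hand side of a rule, then invoke Corollary~\ref{UnjoinableCriticalPairs} and Lemma~\ref{NoForwardOverlaps} respectively to rule the inference out. You are in fact somewhat more careful than the paper's own proof, which silently leaves the root-position cases (covered by Corollary~\ref{noRootOverlaps}) and the tautologous self-overlap to the reader.
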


\begin{proof}
Let $R$ be an LM-System and $E$ be the set of equations obtained
from~$R$. Suppose $u[s']_p \approx v$ in $E$ and $s ~ \rightarrow
~ t$ in $R$ induces a paramodulation inference. There are two
cases depending on whether $u[s']_p$ is the lhs or rhs of some
rule in~$R$. If it is the lhs, then this would contradict
Corollary~\ref{UnjoinableCriticalPairs}. If $u[s']_p$ is the rhs
of some rule in $R$, then this would contradict
Lemma~\ref{NoForwardOverlaps}.

Thus, each case leads to a contradiction, and so no
paramodulation inference steps can be performed.
\end{proof}

\section{The Cap Problem Modulo LM-Systems}

In this section we prove that although \emph{LM-systems} are a
restrictive subclass of term-rewriting systems there are still
important problems that are undecidable when restricted to
\emph{LM-systems}. Specifically, we show that the \emph{cap problem}\footnote{Also
known as the \emph{deduction problem}}, which
has important applications in cryptographic protocol analysis, is
undecidable even when the rewrite system~$R$ is an \emph{LM-system}.

The cap problem is defined as follows: 
\noindent \\

\underline{\underline{\textbf{Instance:}}} ~ A LM-System $R$, a set
$S$ of ground terms representing the intruder knowledge, and a ground
term~$M$.

\underline{\underline{\textbf{Question:}}} Does there exist a cap
term $C(\diamond_1, \ldots, \diamond_n)$ such that $C[\diamond_1:=
  s_{i_1}, \ldots, \diamond_n := s_{i_n}] \, \rightarrow^{*}_{R} \,
M$? \\

We show that the above problem is undecidable by a many-one
reduction from the halting problem for reversible deterministic
2-counter Minsky machines (which are known to be equivalent to
Turing machines). The construction below is extremely similar to
the one given in~\cite{NotesOnBSM}. Originally, the construction
was used to show the undecidablility of the subterm-collapse
problem for LM-Systems. Here it is modified slightly to account
for the cap problem, however the majority of the rules remain
unchanged.

A reversible deterministic 2-counter Minsky machine (henceforth a
Minsky machine) is described as a tuple $N = (Q, \delta, q_0, q_L)$
where $Q$ is a finite non-empty set of states, $q_0, q_L \in Q$ are
the initial and final states respectively and $\delta$ is the transition
relation. The elements of the transition relation $\delta$ are
represented as 4-tuples of the following 
form: \[ \left[q_i, j, k, q_{i'} \right] \text{  or  } \left[q_i, j, d, q_{i'} \right] \]

\noindent
where $q_i, q_i' \in Q, \; j \in \{1, 2\}, \; k \in \{Z, P\}, \;
d \in \{0, +, -\}$.  Tuples of the first form represent that the
machine is in state $q_i$, checks if counter $j$ is zero $(Z)$ or
positive $(P)$ and transitions to state $q_{i'}$. Tuples of the
second form represent that the machine is in state $q_i$, and
either decrements $(-)$, increments $(+)$, or does nothing $(0)$
to counter $j$ and transitions to state~$q_{i'}$.

Each configuration of machine $N$ is written as $(q_i, C_1, C_2)$
where $q_i \in Q$ is the current state of the machine, and $C_1,
C_2$ are the values of the counters.  We encode such
configurations as terms. The initial and final configurations of
$N$ are encoded by: $c(q_0, s^k(0), s^p(0), 0)$ and 
$c(q_L, s^{k'}(0), s^{p'}(0), s^n(0))$ respectively. The fourth
argument of~$c$ corresponds to the number of steps the machine has
taken.

We need the fact that $N$ is deterministic and reversible in the sequel to 
establish the results that the construction provided actually produces
an LM-System. Namely, we need the following fact: For every pair of
tuples in $\delta$, $[q_{i_1}, j_1, x_1, q_{i'_1}]$ and $[q_{i_2}, j_2, 
x_2, q_{i'_2}]$ we have that 

\[ (i_1 = i_2) \vee (i'_1 = i'_2) \Rightarrow (j_1 = j_2 \wedge \{x_1, x_2\} = \{Z, P\}) \]

This means that $N$ can leave or enter the same state on two different transitions only 
when the same counter is being checked and different checks are being performed on that
counter.

Let $N$ be a Minsky machine. We construct a term-rewriting system
$R_N$ over the signature \[ \Sigma \, = \, \bigcup\limits_{i = 1}^{L} \{f_i,
f'_i, q_i \} \cup \left\{c, s, 0, g, g', e \right\} \] and show that the
resulting TRS is an LM-System such that if $N$ starts in the initial
configuration and halts in a final configuration then there exists a
cap term $C$ such that the ground term $c(e, 0, 0, 0)$ (playing the
role of $M$ in the description of the problem above) can be
deduced. We begin the construction by initializing $R_N$ with the
following rules:
\begin{align*}
	f_L(c(q_L, s^{k'}(0), s^{p'}(0), z)) \rightarrow g(c(e, 0, 0, z)) \\[+4pt]
	g'(g(c(e, 0, 0, s(z)) \rightarrow c(e, 0, 0, z)
\end{align*}

We then add the following rules to $R_N$, 
each of which encodes a possible move of the machine. 
That is, each rule represents an element of $\delta$. 

\begin{enumerate}
    \item[(a1)] $\, [ \, q_i, 1, P, q_j \, ]\colon R_N \, := \,
        R_N \cup \{ \, f_i(c(q_i, s(x), y, z)) \to c(q_j, s(x), y, s(z)) \, \}$

    \item[(a2)] $\, [ \, q_i, 2, P, q_j \, ]\colon R_N \, := \,
        R_N \cup \{ \, f_i(c(q_i, x, s(y), z)) \to c(q_j, x, s(y), s(z)) \, \}$
        \vspace{2ex}

    \item[(b1)] $\, [ \, q_i, 1, Z, q_j \, ]\colon R_N \, := \,
        R_N \cup \{ \, f'_i(c(q_i, 0, y, z)) \to c(q_j, 0, y, s(z)) \, \}$

    \item[(b2)] $\, [ \, q_i, 2, Z, q_j \, ]\colon R_N \, := \,
        R_N \cup \{ \, f'_i(c(q_i, x, 0, z)) \to c(q_j, x, 0, s(z)) \, \}$
        \vspace{2ex}

    \item[(c1)] $\, [ \, q_i, 1, +, q_j \, ]\colon R_N \, := \,
        R_N \cup \{ \, f_i(c(q_i, x, y, z)) \to c(q_j, s(x), y, s(z)) \, \}$

    \item[(c2)] $\, [ \, q_i, 2, +, q_j \, ]\colon R_N \, := \,
        R_N \cup \{ \, f_i(c(q_i, x, y, z)) \to c(q_j, x, s(y), s(z)) \, \}$
        \vspace{2ex}

    \item[(d1)] $\, [ \, q_i, 1, 0, q_j \, ]\colon R_N \, := \,
        R_N \cup \{ \, f_i(c(q_i, x, y, z)) \to c(q_j, x, y, s(z)) \, \}$

    \item[(d2)] $\, [ \, q_i, 2, 0, q_j \, ]\colon R_N \, := \,
        R_N \cup \{ \, f_i(c(q_i, x, y, z)) \to c(q_j, x, y, s(z)) \, \}$
        \vspace{2ex}

    \item[(e1)] $\, [ \, q_i, 1, -, q_j \, ]\colon R_N \, := \,
        R_N \cup \{ \, f_i(c(q_i, s(x), y, z)) \to c(q_j, x, y, s(z)) \, \}$

    \item[(e2)] $\, [ \, q_i, 2, -, q_j \, ]\colon R_N \, := \,
        R_N \cup \{ \, f_i(c(q_i, x, s(y), z)) \to c(q_j, x, y, s(z)) \, \}$
\end{enumerate}
\vspace{1ex}

We now state the following theorem, and hold off on providing a proof until 
various claims have been shown to hold. The result will then following as an easy corollary.

\begin{theorem}
	\label{CapUndec}
	The cap problem modulo LM-Systems is undecidable. 
\end{theorem}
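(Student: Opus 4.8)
The plan is to prove Theorem~\ref{CapUndec} by a many-one reduction from the halting problem for reversible deterministic $2$-counter Minsky machines, which is undecidable. Given such a machine $N$, I take the rewrite system $R_N$ constructed above, together with the cap-problem instance in which the intruder knowledge $S$ consists of (the encoding of) the initial configuration $c(q_0, s^k(0), s^p(0), 0)$ and the target term is $M = c(e,0,0,0)$. Since the map $N \mapsto (R_N, S, M)$ is plainly computable, it suffices to establish two claims, which are the ``various claims'' deferred above: first, that $R_N$ is genuinely an \emph{LM-system}; and second, that $N$ halts from its initial configuration if and only if some cap term built over $S$ rewrites to $M$. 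The theorem then follows as the promised easy corollary.

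I would argue the simulation correctness in both directions. For the forward direction, a halting run of $N$ of length $n$ is mirrored by composing, from the inside out, the rules \emph{(a1)--(e2)} associated with the transitions taken: applying the corresponding $f_i$ and $f'_i$ symbols to the encoded initial configuration drives the state to $q_L$ and the step counter (the fourth argument of $c$) up to $s^n(0)$; the $f_L$-rule then rewrites the result to the $g$-headed term $g(c(e,0,0,s^n(0)))$, after which a cascade of $g$/$g'$ applications peels the step counter back down to reach $c(e,0,0,0) = M$. The cap term is exactly the context obtained from this composition, with its single hole filled by the configuration in $S$. For the converse, I would exploit the highly constrained shape of reductions in $R_N$: every left-hand side is headed by one of $f_i$, $f'_i$, $f_L$, or $g'$ while every right-hand side is headed by $c$ or $g$, so (once $R_N$ is known to be an LM-system) Corollary~\ref{NoReversals} forbids ``backward'' root changes; combined with the \emph{determinism and reversibility} of $N$, any reduction of a cap term to $M$ must retrace a unique legal computation of $N$, so the existence of such a cap term forces $N$ to actually halt.

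The technical heart of the argument, and the step I expect to be the main obstacle, is verifying that $R_N$ satisfies \emph{all} of the LM-conditions. Termination I would obtain from the measure counting the number of $f_i$, $f'_i$, $f_L$, and $g'$ symbols in a term, since no right-hand side contains any such symbol and each rule deletes at least one of them; local confluence follows because the left-hand sides (with zero-checks routed through the primed symbols $f'_i$) admit no critical pairs, and convergence then follows. Non-subterm-collapsing-ness is forced by the step-counter component, as a configuration and any successor differ in their fourth argument. Forward-closure I would check through Theorem~\ref{thm-fc-irb}, confirming that every innermost redex of $R_N$ reaches its normal form in one step (each contraction yields a $c$- or $g$-headed term, which is irreducible when its subterms are). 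The most delicate condition is that $\RHS(R_N)$ be quasi-deterministic: this is precisely where the reversibility/determinism constraint $(i_1 = i_2) \vee (i'_1 = i'_2) \Rightarrow (j_1 = j_2 \wedge \{x_1,x_2\} = \{Z,P\})$ on $\delta$ is essential, for it is this property that prevents two distinct transition rules from sharing a root pair and, after forming right-hand-side critical pairs, keeps $\RHS(R_N)$ free of root-pair repetitions. Establishing this last point rigorously is the crux on which the entire reduction rests.
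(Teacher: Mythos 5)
Your proposal is correct and follows essentially the same route as the paper: the identical many-one reduction from halting of reversible deterministic $2$-counter Minsky machines via the system $R_N$ with $S = \{c(q_0, s^k(0), s^p(0), 0)\}$ and $M = c(e,0,0,0)$, decomposed into the same subclaims (convergence, forward-closure, quasi-determinism of $\RHS(R_N)$ resting on the reversibility/determinism constraint, non-subterm-collapsing, and the cap-iff-halt equivalence). Your minor variations---a symbol-counting measure instead of the paper's recursive path ordering for termination, and checking forward-closure via Theorem~\ref{thm-fc-irb} rather than directly observing that no left-hand-side root symbol occurs in any right-hand side---are equivalent verifications of the same conditions and do not change the argument.
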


We first begin by showing that given a reversible deterministic 2-counter Minsky machine, $N$,
the rewrite system constructed above, $R_N$, is convergent.  

\begin{claim}
\label{RNConvergence}
Let $N$ be a Minsky machine, then the TRS, $R_N^{}$, is convergent. 
\end{claim}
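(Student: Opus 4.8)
The plan is to establish the two halves of convergence separately, termination and confluence, and then combine them. First I would note that $R_N$ is left-linear: each machine-move left-hand side $f_i(c(q_i,\ldots))$ or $f'_i(c(q_i,\ldots))$ contains the three distinct variables $x,y,z$ once each, while the rule for $f_L$ and the rule for $g'$ each contain only the single variable $z$. This linearity lets me derive confluence from the mere absence of critical pairs, after which Newman's Lemma (or orthogonality directly) closes the argument; termination is then needed to conclude convergence in either formulation.

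For termination I would exhibit an additive weight that strictly decreases at every step. Assign weight $2$ to each ``outer'' symbol $f_1,\ldots,f_L,f'_1,\ldots,f'_L,g'$, weight $1$ to $g$, and weight $0$ to every other symbol and to all variables; let $\mu(t)$ be the sum of these weights over all positions of $t$. Each machine-move rule deletes one weight-$2$ root symbol and introduces no weighted symbol, so $\mu$ drops by $2$; the rule for $f_L$ trades a weight-$2$ symbol for a single occurrence of $g$, so $\mu$ drops by $1$; and the rule for $g'$ removes both a $g'$ and a $g$, so $\mu$ drops by $3$. The delicate rule is exactly the one for $f_L$, which produces a $g$ on its right-hand side: weighting the $f$-symbols strictly above $g$ is what keeps $\mu$ decreasing despite the later consumption of that $g$ by the $g'$ rule. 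Since every rule is variable-preserving with each variable occurring exactly once on both sides, $\mu(\theta l)-\mu(\theta r)=\mu(l)-\mu(r)>0$ for every substitution $\theta$, and additivity of $\mu$ over contexts preserves the decrease inside any context; hence $R_N$ terminates.

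For confluence I would show that $R_N$ has no critical pairs, so that local confluence holds vacuously. A critical pair can only come from an overlap of two left-hand sides, at the root or at a proper function position. No non-root overlap is possible, because every proper subterm of a left-hand side has root symbol among $c,s,q_i,0,e$, none of which is a left-hand-side root (the left-hand-side roots are exactly the $f_i$, $f'_i$, $f_L$, and $g'$). For root overlaps I would argue that no two distinct rules share a root symbol: $f_L$ and $g'$ each head exactly one rule ($q_L$ is the halting state and has no outgoing transition, so $f_L$ heads only the terminal rule), and this is where the \emph{determinism} of $N$ enters for the remaining symbols. Rules headed by $f_i$ arise from transitions out of $q_i$ doing a $P$-test or a $+,-,0$ operation, while rules headed by $f'_i$ arise from $Z$-tests out of $q_i$; by the stated determinism condition, two transitions sharing source state $q_i$ must test the same counter with $\{x_1,x_2\}=\{Z,P\}$, so at most one is of the $f_i$-kind and at most one of the $f'_i$-kind. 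Thus every $f_i$ and every $f'_i$ heads at most one rule, all left-hand-side roots are pairwise distinct, and there are no root overlaps.

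Having shown $R_N$ is left-linear with no critical pairs, it is locally confluent (indeed orthogonal), and with termination Newman's Lemma yields confluence, so $R_N$ is convergent. I expect the main obstacle to be the termination measure for the $f_L$/$g'$ interaction — choosing the weights so that the $g$-producing rule still strictly decreases $\mu$ — together with the bookkeeping that turns the determinism hypothesis into the statement that no two rules share a root symbol.
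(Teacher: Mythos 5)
Your proof is correct, but it reaches convergence by a genuinely different route from the paper on the termination side. The paper proves termination in one line by a recursive path ordering induced by the precedence $f_i \succ f'_i \succ f_L \succ g \succ g' \succ c \succ s \succ q_j \succ e \succ 0$, whereas you build an explicit additive weight $\mu$ (weight $2$ on $f_1,\ldots,f_L,f'_i,g'$, weight $1$ on $g$, weight $0$ elsewhere) and verify the strict decrease rule by rule; since every rule has each variable occurring exactly once on both sides, the decrease is stable under substitutions and contexts, so your argument is complete and, arguably, more elementary and self-contained than the RPO appeal --- your observation that the $f_L$-rule, the only $g$-producing rule, forces $w(f_L)>w(g)$ is exactly the delicate point, and your weights handle it. On confluence the two proofs coincide in substance: both rule out non-root overlaps by inspecting the possible root symbols of proper subterms of left-hand sides, and both use the determinism/reversibility condition to show that a state index can head at most one $f_i$-rule and at most one $f'_i$-rule, so no two left-hand sides share a root; you are somewhat more careful than the paper in spelling out why a $P$-test and a $\{0,+,-\}$-move cannot share a source state, and in flagging the (implicit in the paper) assumption that $q_L$ has no outgoing transitions, which is genuinely needed to keep $f_L$ heading a single rule. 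Your extra observation of left-linearity gives you orthogonality, hence confluence even without termination --- slightly more than the paper's Newman-style argument needs. Two harmless slips worth fixing: the left-hand side of the $g'$-rule has the proper subterm $g(c(e,0,0,s(z)))$, so $g$ must be added to your list $c,s,q_i,0,e$ of proper-subterm roots (the conclusion survives, since $g$ heads no rule); and the $Z$-test left-hand sides $f'_i(c(q_i,0,y,z))$ contain only the two variables $y,z$, not three --- left-linearity and variable-preservation still hold, which is all your argument uses.
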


\begin{proof}
Define $\succ$ on $\Sigma$ as follows: $f_i \succ f'_i \succ f_L \succ
g \succ g' \succ c \succ s \succ q_j \succ e \succ 0$.  Then
termination of $R_N$ is apparent by applying a recursive path ordering
induced by $\succ$. We show that $R_N$ is confluent by showing that
$R_N$ has no critical pairs. By construction, there are clearly no
superpositions that can occur at positions other than than
$\epsilon$. However, by the determinism of $N$, if the index $i$
occurs more than once as the subscript of the lhs of a rule with an
$f$ as its root symbol, then it could only ever occur again as the
index of a term with an $f'_{i}$ as the root symbol of the lhs. Thus,
there can be no critical pairs.
\end{proof}

The next claims establish that $R_N$ is also forward closed 
and that $RHS(R_N)$ is quasi-deterministic. Each of which is 
a condition of a rewriting system to be an LM-System. 

\begin{claim}
\label{RNFowardClosed}
Let $N$ be a Minsky machine, then $R_N$ is forward-closed. 
\end{claim}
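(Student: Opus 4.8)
The plan is to invoke Theorem~\ref{thm-fc-irb}. Since $R_N$ is convergent by Claim~\ref{RNConvergence}, it suffices to show that every innermost redex of $R_N$ can be reduced to its $R_N$-normal form in a single step. So I would let $t$ be an arbitrary innermost redex and argue that the one-step reduct of $t$ is already in normal form.

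The key structural observation, which I would isolate as the crux of the argument, is a clean separation of the signature into \emph{active} symbols $\{f_i, f'_i, f_L, g'\}$ and \emph{inactive} symbols $\{c, s, g, q_j, e, 0\}$. Inspecting the two initial rules together with all the rule schemes (a1)--(e2), one sees that the left-hand side of every rule in $R_N$ is rooted by an active symbol, whereas every right-hand side is built \emph{entirely} from inactive symbols and variables: the $f_L$-rule produces $g(c(e,0,0,z))$, the $g'$-rule produces $c(e,0,0,z)$, and each machine-move rule produces a $c$-rooted term such as $c(q_j, s(x), y, s(z))$. In particular no right-hand side contains any active symbol.

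From here the argument runs as follows. Write $t = \sigma(l)$ for the rule $l \to r$ it matches. Because the root of $l$ is an active symbol, every variable of $l$ occurs at a non-root position, so each $\sigma(x)$ is a proper subterm of $t$ and hence irreducible; thus $\sigma$ is normalized. Applying the rule once yields $\sigma(r)$, and I would show $\sigma(r)$ is irreducible by a short case analysis on the position $p'$ of a hypothetical redex inside it. If $p'$ lies on the non-variable skeleton of $r$, then $\sigma(r)|_{p'}$ is rooted by an inactive symbol $r(p')$, which can match no left-hand side; if $p'$ lies at or below a variable position of $r$, then $\sigma(r)|_{p'}$ is a subterm of some $\sigma(x)$, which is irreducible since $\sigma$ is normalized. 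Either way no redex exists, so $\sigma(r)$ is in normal form and $t \to_{R_N} \sigma(r)$ is a one-step reduction to the normal form of $t$.

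I expect the only real work to be the bookkeeping in the structural observation: one must check that \emph{all} rule forms, including the two initial rules and every variant (a1)--(e2), have right-hand sides free of active symbols, and confirm that inactive symbols such as $s$ and $c$ never head any left-hand side. This verification is routine given the explicit list of rules, and once it is in place the reduction to Theorem~\ref{thm-fc-irb} is immediate.
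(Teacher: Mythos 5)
Your proof is correct, but it takes a genuinely different route from the paper's. The paper argues directly from the definition of forward closure: since no root symbol of a left-hand side of any rule in $R_N$ occurs in any right-hand side (your active/inactive split, in other words), the unification $\Mgu(r_1|_p =^{?} l_2)$ required by the $\rightsquigarrow_p$ operation fails on a root-symbol clash at every $p \in \fpos(r_1)$, so $R_N \rightsquigarrow R_N$ generates nothing and $FC(R_N) = R_N$ holds immediately by definition. You instead route through the semantic characterization of Theorem~\ref{thm-fc-irb}, verifying that every innermost redex $\sigma(l)$ reduces in one step to the irreducible term $\sigma(r)$. Both arguments hinge on exactly the same structural fact about the rules; the paper's version is shorter and purely syntactic, needing no machinery beyond the closure construction itself, while yours is more operational and more fully spelled out --- your case split on positions of $\sigma(r)$ (skeleton positions rooted by inactive symbols versus positions at or below a variable, where irreducibility follows from $\sigma$ being normalized) makes explicit what the paper compresses into ``there is no way to unify a subterm of the rhs of any rule with the lhs of any other rule'' --- at the cost of invoking Claim~\ref{RNConvergence} and Theorem~\ref{thm-fc-irb} as prerequisites. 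Your bookkeeping is accurate: the lhs roots are exactly $f_i, f'_i, f_L, g'$, none of which occurs in any right-hand side (note $g$, which does occur in a rhs, heads no lhs --- only $g'$ does), and every lhs variable sits strictly below the root, so the matching substitution is indeed normalized.
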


\begin{proof}
No root-symbol of the lhs of any rule in $R_N$ occurs in the rhs of any other rule. 
That is, there is no way to unify a subterm of the rhs of any rule with the lhs of any other rule. 
Thus, there can be no forward-overlaps and therefore $R_N$ is forward closed. 
\end{proof}

\begin{claim} 
\label{RNQDet}
Let $N$ be a Minsky machine, then $RHS(R_N)$ is quasi-deterministic. 
\end{claim}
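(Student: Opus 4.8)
The plan is to reduce the statement to a root-pair analysis via Lemma~\ref{lemma-quasi}. First I would check that $R_N$ is \emph{variable-preserving}: inspecting the two initial rules and each schema (a1)--(e2), every rule $l \to r$ has the same variables occurring on both sides (each of $x, y, z$, or just $z$), so $\var(l) = \var(r)$. Together with convergence (Claim~\ref{RNConvergence}), this will let me apply Lemma~\ref{lemma-quasi}, once I have shown that $R_N$ itself is quasi-deterministic: the lemma then says $RHS(R_N)$ can fail to be quasi-deterministic \emph{only} by having a root pair repetition, so it suffices to rule out such repetitions in $RHS(R_N)$.

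To see that $R_N$ is quasi-deterministic, conditions (1) and (2) are immediate from the shape of the rules. No side is a variable, since every left-hand side is rooted at some $f_i$, $f'_i$, $f_L$ or $g'$ and every right-hand side at $c$ or $g$. No rule is root-stable, because the set of left-hand-side root symbols $\{f_i, f'_i, f_L, g'\}$ is disjoint from the set of right-hand-side root symbols $\{c, g\}$. Condition (3), the absence of root pair repetitions, is the crux and is exactly where the determinism and reversibility of $N$ enter. Every machine-move rule has right-hand-side root $c$, so two distinct such rules share a root pair iff they share a left-hand-side root, i.e.\ two transitions leave the same state $q_i$ through the same symbol $f_i$ (or $f'_i$). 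The reversibility hypothesis forces any two transitions leaving a common state to form a $\{Z, P\}$ pair on the same counter; hence one is a $P$-test (schema (a), using $f_i$) and the other a $Z$-test (schema (b), using $f'_i$), so they cannot share a left-hand-side root. The initial rules carry the distinct root pairs $(f_L, g)$ and $(g', c)$, and since $q_L$ is the halting state it has no outgoing transitions, so $f_L$ occurs as a left-hand-side root only in the first initial rule. Thus $R_N$ has no root pair repetition and is quasi-deterministic.

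It then remains to show that passing to $RHS(R_N)$ introduces no new root pairs, equivalently that there are no nontrivial RHS critical pairs. An RHS critical pair requires unifying the right-hand sides of two rules. The only right-hand side rooted at $g$ is that of the first initial rule, so no distinct pair unifies there; all other right-hand sides are rooted at $c$, with first argument a state constant ($e$ for the second initial rule, a target state $q_j$ for each machine move). The constant $e$ unifies with no $q_j$, and two machine moves with a common target $q_j$ are, by reversibility again, a $\{Z, P\}$ pair on one counter whose right-hand sides disagree at that counter's argument ($s(\cdot)$ against $0$) and so do not unify. Consequently no new equations arise, $RHS(R_N)$ has exactly the root pairs of $R_N$, and Lemma~\ref{lemma-quasi} yields that $RHS(R_N)$ is quasi-deterministic. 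The main obstacle throughout is condition (3): it is the only place that genuinely requires the determinism-plus-reversibility structure of $N$, and the delicate point is getting correct the case analysis of which transitions can simultaneously leave, or enter, a single state.
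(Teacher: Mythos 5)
Your proof is correct and takes essentially the same route as the paper's: you show that $R_N$ itself is quasi-deterministic (with reversibility forcing any two transitions leaving a common state to be a $\{Z,P\}$ pair, hence an $f_i$/$f'_i$ split of left-hand-side roots) and that no RHS critical pairs exist (the $s(\cdot)$-versus-$0$ clash for the two rules entering a common target state), which is precisely the paper's argument --- indeed you are slightly more careful, since you explicitly handle the two initial rules and correctly credit the common-target case to reversibility. Your detour through Lemma~\ref{lemma-quasi} (and the accompanying variable-preservation check) is sound but redundant: having shown there are no RHS critical pair inferences at all, you get $RHS(R_N) = R_N$, and quasi-determinism transfers immediately without the lemma.
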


\begin{proof}
We first show that there are no RHS overlaps in $R_N$. Suppose $l_i
\rightarrow r_i \in R_N$ for $i \in \{1, 2\}$ induces a RHS overlap.
By construction of $R_N$, it must be that $r_1(\epsilon) = c =
r_2(\epsilon)$ and $r_1(1) = q_i = r_2(1)$. Since $N$ is
deterministic, the only way this could occur is between a rule from
set $(a)$ and a rule from set $(b)$, but then $r_1$ and $r_2$ are not
unifiable as there would be a function clash between ``$s$" and
``0". Thus, there are no RHS overlaps.

It then suffices to show that $R_N$ itself is
quasi-deterministic. Clearly, no rule contains a variable as its
left-hand side or its right-hand side, and no rule is root-stable. It
remains to show that there are no root-pair repetitions. Suppose $l_i
\rightarrow r_i \in R_N$ for $i \in \{1, 2 \}$ induces a root-pair
repetition. Suppose $(h, c)$ is the repeated root-pair.  This implies
that there are some pair of 4-tuples $x, y \in \delta$ such that the
first coordinates of $x$ and $y$ where the same. Let $x = [q_\alpha, j_1, m,
  q_p]$ and $y = [q_\alpha, j_2, n, q_k]$ be such tuples.  Since $N$ is
deterministic and reversible, it must be that $j_1 = j_2$ and $\{m,
n\} = \{Z, P\}$, but then $l_1 \rightarrow r_1$ would have root-pair
$(f_\alpha, c)$ and $l_2 \rightarrow r_2$ would have root pair~$(f'_\alpha, c)$.

Thus, assuming there is a root-pair repetition contradictions the
definition of the construction of $R_N$ from $\delta$. Therefore,
there are no root-pair repetitions in $R_N$, and therefore $RHS(R_N)$
is quasi-deterministic.
\end{proof}

Finally, the claim below, along with the claims above, establish that
$R_N$ is an LM-System. Namely, we prove that $R_N$ is non-subterm 
collapsing, and thus we can conclude that $RHS(R_N)$ is deterministic. 
Then, we establish that $N$ halts if and only if there exists a cap-term 
that allows the deduction of $M$ modulo $R_N$. Thus, putting it all
together we establish a many-one reduction and provide a proof of 
Theorem~\ref{CapUndec}. 

\begin{claim}
\label{RNNonSubterm}
Let $N$ be a Minsky machine, then $R_N$ is non-subterm-collapsing. 
\end{claim}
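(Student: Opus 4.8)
The plan is to exploit convergence to turn the statement into a claim about normal forms, and then reduce an alleged subterm collapse to a purely structural impossibility about stacks of unary symbols. Since $R_N$ is convergent (Claim~\ref{RNConvergence}), two terms are $R_N$-equal iff they share a normal form, and a subterm collapse is, after instantiating variables by a constant, witnessed by a \emph{ground} pair $t \sqsubsetneq u$ with $t\downarrow_{R_N} = u\downarrow_{R_N}$. Writing $w := t\downarrow_{R_N}$ and reducing inside the subterm occurrence, I would replace $u$ by the term $u'$ obtained by normalizing that occurrence; then $w$ is a proper subterm of $u'$, $w$ is irreducible, and $u' \to_{R_N}^* w$. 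Thus it suffices to prove: no irreducible $w$ can be wrapped in a nontrivial one-hole context $C$ (so $C[w] \neq w$) with $C[w] \to_{R_N}^* w$.

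First I would pin down the shape of $C$. Every rule of $R_N$ is variable-preserving and contains exactly one occurrence of $c$ on each side, so $\#c(\cdot)$, the number of occurrences of the symbol $c$, is invariant under $=_{R_N}$. From $C[w] =_{R_N} w$ we get that $C$ is $c$-free. Since $c$ is the only symbol of arity $\geq 2$ in $\Sigma$, a $c$-free one-hole context cannot branch and cannot contain a nullary symbol alongside the hole; hence $C$ must be a nonempty stack of unary symbols, $C = F_1(F_2(\cdots F_k(\Box)))$ with each $F_j \in \{f_i, f'_i, f_L, g, g', s\}$ and $k \geq 1$. Consequently every $c$ in $C[w]$ lies inside $w$, and any redex in any reduct of $C[w]$ must have its root among the stacked symbols, so reductions can only occur at the interface between the stack and the (evolving) topmost $c$, peeling the stack from the bottom upward.

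The heart of the argument is to track two features of that topmost $c$ along $C[w] \to_{R_N}^* w$: its first argument (the state slot, drawn from $\{e\} \cup \{q_i\}$) and its fourth argument (the step counter). I would record three structural facts. (i) A stacked $s$ can never be consumed: in every rule an $s$ that is rewritten away sits \emph{inside} a $c$, never wrapping one, so a stacked $s$ is never part of a redex and moreover renders the symbol above it inapplicable (its argument is $s$-rooted, not $c$-rooted); hence if $C$ contained an $s$ the stack could never be fully removed. (ii) Every $f_i$/$f'_i$ step increments the step counter by one while $g'$ is the only rule that decrements it, so restoring the counter to its value in $w$ forces the number of $f$/$f'$ steps to equal the number of $g'$ steps. (iii) The state slot becomes $e$ exactly when $f_L$ fires, no rule turns an $e$-state $c$ back into a $q_i$-state $c$, and $g'$ requires the state to be $e$. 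Combining these: if $w$'s top state is some $q_i \neq e$, any use of $g'$ would leave the state permanently at $e$, contradicting the return to $q_i$, so no $g'$ step occurs, whence by (ii) no $f$/$f'$ step occurs and (since $g$'s are produced only by $f_L$ and consumed only by $g'$) no $f_L$ step occurs either, forcing $k = 0$; if instead $w$'s top state is $e$, then no $f$/$f'$/$f_L$ step can ever fire, the stack consists only of $g$ and $g'$ (no $s$, by (i)), and each $g'$ step strictly decreases the step counter, so restoring it again forces $k = 0$. Either way $C$ is empty, contradicting $k \geq 1$; therefore $R_N$ is non-subterm-collapsing.

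I expect the main obstacle to be making the ``peeling from the bottom'' analysis rigorous when $w$ itself contains several occurrences of $c$ and when the outermost symbol of $C$ is $s$ or $g$, so that the root of $C[w]$ is never a redex root and must already coincide with the root of $w$. These cases are handled by focusing the bookkeeping on the single $c$ at the stack/$w$ interface and by an induction on $k$ that strips a non-reducible outermost $s$ or $g$; but assembling the counting facts (ii)--(iii) into a clean contradiction for every arrangement of the stacked symbols is where the care is needed.
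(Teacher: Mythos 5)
Your proposal is correct and follows essentially the same route as the paper's own proof: both reduce subterm collapse to a term rewriting to a proper subterm of itself and derive the contradiction from the two key facts that every rule except the $f_L$-rule merely adds or removes one $s$ in the fourth argument of the unique $c$, while the $f_L$-rule irreversibly replaces $q_L$ by $e$. Your write-up is in fact more rigorous than the paper's terse sketch --- the convergence-based passage to $C[w] \to_{R_N}^{*} w$ with $w$ irreducible, the $\#c$-count invariance forcing $C$ to be a $c$-free unary stack, and the explicit counter/state bookkeeping all make precise steps the paper leaves implicit.
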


\begin{proof}
Suppose $t \rightarrow_{R_N}^{+} t'$ such that $t'$ is a proper
subterm of~$t$.  By construction of $R_N$,~ $t|_p(\epsilon) = c$ for
some position~$p$. Since rules $(a1) - (e2)$ and the second rule in
the initialization of $R_N$ produce a ``$c$" term with the last argument
with an extra $s$ added or removed, the only rule that could have been
used to produce the collapse was the first rule of the initialization
of~$R_N$.  But in this case, the $q_L$ is replaced with an $e$ and
thus the resulting term in the reduction could not be a proper subterm
of~$t$. Thus we have a contradiction and can conclude that $R_N$ is
non-subterm-collapsing.
\end{proof}

\begin{claim}
\label{RNLMSystem}
 Let $N$ be a Minsky machine. Then the TRS $R_N^{}$ is an LM-System.
\end{claim}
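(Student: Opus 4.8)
The plan is to assemble Claim~\ref{RNLMSystem} directly from the
preceding claims, since the definition of an \emph{LM-system} is a
conjunction of conditions, and each conjunct has essentially already
been verified for $R_N^{}$. First I would recall the definition: an
\emph{LM-system} is a convergent, almost-left-reduced and
right-reduced rewrite system satisfying (i) non-subterm-collapsing,
(ii) forward-closed, and (iii) $RHS(R)$ quasi-deterministic. The
strategy is therefore to walk through this list and cite the
appropriate claim for each item.

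Concretely, I would proceed as follows. Convergence of $R_N^{}$ is
Claim~\ref{RNConvergence}. Forward-closure is
Claim~\ref{RNFowardClosed}. The quasi-determinism of $RHS(R_N^{})$ is
Claim~\ref{RNQDet}, which handles condition~(iii). Condition~(i),
non-subterm-collapsing, is Claim~\ref{RNNonSubterm}; combined with
Claim~\ref{RNQDet} this also yields that $RHS(R_N^{})$ is
\emph{deterministic}, matching the remark in the paragraph preceding
the claim. So the bulk of the proof is simply invoking these four
claims in sequence.

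The one genuine gap I expect is the pair of structural side-conditions
built into the definition of \emph{LM-system} that are \emph{not}
among the numbered items (i)--(iii): namely that $R_N^{}$ be
\emph{right-reduced} and \emph{almost-left-reduced}. These are not
covered by any of the earlier claims, so I would need to argue them
here. For right-reducedness I would observe that every right-hand side
of a rule in $R_N^{}$ is already in normal form: the right-hand sides
are terms rooted at $c$, $g$, or $0$ (from the initialization and from
rule sets $(a1)$--$(e2)$), and no left-hand side of any rule in
$R_N^{}$ matches a subterm of these right-hand sides (the left-hand
sides are all rooted at $f_i$, $f'_i$, $f_L$, or $g'$, and by
construction these root symbols do not occur inside any right-hand
side), so $R_N^{}\!\downarrow = R_N^{}$. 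For almost-left-reducedness I
would argue that no proper subterm of any left-hand side is an
instance of another left-hand side: each left-hand side has the shape
$h(c(q_i, \ldots))$, and the only proper subterms are the $c$-term and
its arguments, none of which is rooted at a left-hand side symbol
($f_i$, $f'_i$, $f_L$, $g'$), so the condition of
Lemma~\ref{semileftreduced} never applies and $R_N^{}$ is vacuously
almost-left-reduced.

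The main obstacle, then, is not any deep argument but rather being
careful to check these two "bookkeeping" conditions that the
definition folds in but the numbered claims omit; once they are
dispatched by the simple observation that the left-hand-side root
symbols are disjoint from everything appearing inside right-hand sides
and inside proper subterms of left-hand sides, the theorem follows by
conjoining all the cited claims. I would close by stating that since
$R_N^{}$ is convergent (Claim~\ref{RNConvergence}),
almost-left-reduced and right-reduced (as just shown),
non-subterm-collapsing (Claim~\ref{RNNonSubterm}), forward-closed
(Claim~\ref{RNFowardClosed}), and has $RHS(R_N^{})$ quasi-deterministic
(Claim~\ref{RNQDet}), all defining conditions are met and hence
$R_N^{}$ is an \emph{LM-system}.
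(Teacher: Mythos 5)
Your proposal is correct and follows the same route as the paper: the paper's own proof is exactly the one-paragraph assembly of Claims~\ref{RNConvergence}, \ref{RNFowardClosed}, \ref{RNQDet}, and~\ref{RNNonSubterm}. In fact you go beyond the paper, which silently omits the two definitional side-conditions of an LM-system (right-reduced and almost-left-reduced); your verification of these is sound, since the left-hand-side root symbols $f_i, f'_i, f_L, g'$ occur in no right-hand side (so every right-hand side is irreducible and $R_N\!\downarrow = R_N$) and no proper subterm of any left-hand side is rooted at one of those symbols (so Lemma~\ref{semileftreduced} applies vacuously) --- modulo the trivial slip that no right-hand side is rooted at $0$ (they are rooted at $g$ or $c$), your treatment is actually more complete than the paper's.
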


\begin{proof}
Claim~\ref{RNConvergence} shows that $R_N$ is convergent and
claim~\ref{RNFowardClosed} shows that $R_N$ is
forward-closed. Claims~\ref{RNQDet} and \ref{RNNonSubterm} show that
$RHS(R_N)$ is deterministic. Therefore, $R_N$ is a LM-System.
\end{proof}

\begin{claim}
\label{CapiffHalt}
Let $N$ be a Minsky machine. Then starting in the initial configuration
\( \left( \vphantom{b_b^b} q_0^{}, k, p \right) \)
$N$ halts in
\( \left( \vphantom{b_b^b} q_L^{}, k', p' \right) \)
iff there exists a cap term $C(\diamond)$ such that 
if $M = c(e, 0, 0, 0)$ and $S = \{c(q_0, s^k(0), s^p(0), 0) \}$ 
then $C[ \diamond := c(q_0, s^k(0), s^p(0), 0)] \, \rightarrow_{R_N}^{*} \, M$. 
\end{claim}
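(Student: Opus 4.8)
The plan is to prove the two implications separately around a single invariant: the configuration $(q_i, C_1, C_2)$ that $N$ reaches after exactly $m$ steps is encoded by the term $c(q_i, s^{C_1}(0), s^{C_2}(0), s^m(0))$, with the fourth argument serving as a step counter. First I would verify, by inspecting each rule schema (a1)--(e2), that one transition of $N$ out of state $q_i$ is simulated by exactly one rule application: prefixing the configuration term with the appropriate symbol $F \in \{f_i, f'_i\}$ rewrites $F(c(q_i, s^{C_1}(0), s^{C_2}(0), s^{m}(0)))$ to $c(q_j, s^{C_1'}(0), s^{C_2'}(0), s^{m+1}(0))$, where $(q_j, C_1', C_2')$ is the successor configuration. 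The choice between $f_i$ (positivity test, increment, decrement, or no-op) and $f'_i$ (zero test) is forced by the current counter value and, since $N$ is deterministic, is unique at each step.

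For the ``only if'' direction, assume $N$ halts in $(q_L, k', p')$ after $n$ steps, and let $F_1, \ldots, F_n$ be the simulation symbols of the successive transitions. I would take the cap term that applies, from the inside out, $F_1, \ldots, F_n$, then $f_L$, and finally $n$ copies of $g'$, each but the first preceded by a $g$. By the invariant (an easy induction on the step count), reducing $F_n(\cdots F_1(c(q_0, s^k(0), s^p(0), 0))\cdots)$ innermost-first yields $c(q_L, s^{k'}(0), s^{p'}(0), s^n(0))$. The rule $f_L(c(q_L, s^{k'}(0), s^{p'}(0), z)) \to g(c(e,0,0,z))$ then gives $g(c(e,0,0,s^n(0)))$, and since each application of $g'(g(c(e,0,0,s(z)))) \to c(e,0,0,z)$ removes one $s$ from the step counter while consuming a $g$, the $n$ interleaved $g'$/$g$ applications drive the counter from $s^n(0)$ down to $0$, producing $M = c(e,0,0,0)$.

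For the ``if'' direction I would analyze an arbitrary reduction $C[\diamond := c(q_0, s^k(0), s^p(0), 0)] \to^*_{R_N} M$. The constant $e$ appears in $M$, occurs in no element of $S$, and is created by no rule other than the $f_L$-rule; hence $f_L$ must fire, which forces a subterm $c(q_L, s^{k'}(0), s^{p'}(0), w)$ to occur during the reduction. Tracing the state symbol backwards, every $q_i$ can be produced only by a transition rule, and the sole source of $q_0$ is the substituted initial configuration; because $N$ is deterministic, each such rule application is the unique one available and corresponds to the genuine next move of $N$. This reconstructs a halting run $q_0 \to \cdots \to q_L$, whose terminal configuration term carries exactly the counter values $k'$ and $p'$. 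The step-counter bookkeeping, together with convergence of $R_N$ (Claim~\ref{RNConvergence}, giving a unique normal form), then forces the number of $g'$-steps to equal the number of simulated transitions, so the reconstructed run has length exactly $n$.

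The main obstacle is this ``if'' direction, where I must exclude reductions that reach $M$ without faithfully simulating $N$. Two features of the construction make this possible. First, under the conventions of the cap problem the caps are assembled from $S$ using only the (unary) operation symbols $f_i, f'_i, f_L, g, g'$, so the data symbols $c, s, 0, q_i, e$ enter a term only through $S$ or as rule right-hand sides; consequently every state occurrence is traceable to a real transition and the cap itself is a single linear stack over one copy of the initial configuration. Second, determinism of $N$ pins the reduction path down uniquely, while the step counter in the fourth argument---validated against convergence---certifies that the run length matches $n$ and that no shortcut reaches $M$. Combining the two directions establishes the biconditional.
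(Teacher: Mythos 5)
Your proof follows essentially the same route as the paper's: the same cap term $(g' \circ g)^{n-1}(g'(f_L(f^*_n(\cdots f^*_1(\diamond)\cdots))))$ for the forward direction, with the step counter in the fourth argument of $c$ driving the $g'/g$ unwinding, and for the converse the same tracing of the unique $e$-introducing rule $f_L(c(q_L, s^{k'}(0), s^{p'}(0), z)) \rightarrow g(c(e,0,0,z))$ backwards through the (deterministic, hence uniquely determined) transition-rule applications to the substituted initial configuration. Your explicit stipulation that caps are assembled only from the unary symbols $f_i, f'_i, f_L, g, g'$ --- without which a hole-free cap such as $c(e,0,0,0)$ would trivialize the ``if'' direction --- makes precise a convention the paper leaves implicit, and is a tightening of the same argument rather than a departure from it.
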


\begin{proof}
$(\Rightarrow)$ Suppose $N$ halts, then it does so in a finite number
  of steps. Let the number of steps $N$ takes to be $n \in
  \mathbb{N}$. The proof then proceeds similiarly to that of Lemma 4.4
  in~\cite{NotesOnBSM}. That is, let $(\tau_i)_{i = 1}^{n}$ be the
  sequence of transitions that $N$ passes through on its computation
  run. Then, $\tau_1 = [q_0, j, d, q_i]$ and $\tau_n = [q_{i'}, j, d,
    q_L]$. Define $f^{*}_{j}$ as follows:

$$
f^*_i = 
\begin{cases}
f_{j}^{'} & \mbox{ if } d = Z \\
f_j & \mbox{otherwise}
\end{cases}
$$

We can then construct the cap term $C(\diamond)$ as follows: 

\[ C(\diamond) = (g' \circ g)^{n-1} ( g'(f_L((f^*_n \circ \dots \circ f^*_1)(\diamond)))) \]

Since the rules of $R_N$ simulate the sequence of configurations of $N$ and $N$ halts, we can set 
$C[\diamond := c(q_0, s^k(0), s^p(0), 0)]$ and use the rules of $R_N$ to reduce this term to \[ C(\diamond) \rightarrow_{R_N}^{+} (g' \circ g)^{n-1} (g'(g(c(e, 0, 0, s^n(0))) \] which can then be reduced using the second rule of $R_N$ to get 
the term $(g' \circ g)^{n-1}(c(e, 0, 0, s^{n-1}(0)))$. This can then 
be reduced by further applications of the same rule to get the term 
$g'(g(c(e, 0, 0, s(0))))$. At this point, one more application of the second
rule will yield $c(e, 0, 0, 0) = M$. Thus, $M$ can be deduced. 

$(\Leftarrow)$ Suppose that there exists a cap $C(\diamond)$ such that
$C[\diamond := c(q_0, s^k(0), s^p(0), 0)] \rightarrow_{R_N}^{*} M$.
Let $t$ be such a cap term. Thus, the above says that there is a
reduction chain starting with $t$ and ending in the term $M$. That is,
$t \rightarrow t_1 \rightarrow t_2 \rightarrow \dots \rightarrow
t_{\kappa} = c(e, 0, 0, 0)$.  Since the only rule that can introduce
an ``$e$" term is the first rule, there must be a sub-chain of
reductions $t \rightarrow t_1 \rightarrow \dots \rightarrow t_i$ such
that $t_i = \sigma(f_L(c(q_L, s^{k'}(0), s^{p'}(0), z))$.

By design of the rewrite system $R_N$ from $N$, one can see that the
chain above corresponds to a sequence of configurations of $N$ that
eventually leads to the configuration $(q_L, k', p')$.  Therefore,
$N$, when started in $(q_0, k, p)$, halts in configuration~$(q_L, k',
p')$.
\end{proof}
We can now state the proof of Theorem~\ref{CapUndec}
\begin{proof}
Given a Minsky machine $N$, let $G$ be the function such that $G(N) \rightarrow R_N$, i.e. the function that
takes as input a Minsky machine and produces the corresponding system $R_N$ as described above. Then $G$ is clearly
a total recursive function. Claim~\ref{CapiffHalt} says then that $G$ is in 
fact a many-one reduction. Thus, since the halting problem for Minsky 
machines is undecidable, then so is the cap problem modulo LM-Systems. 
\end{proof}

\ignore{

}

\bibliographystyle{plain}
\bibliography{ref2}

\section{Appendix}

\begin{Lemma}
It is not the case that the equational theory of every $LM$-system
is {\em finite\/} (i.e., all congruence classes are finite).
\label{inf-cong}
\end{Lemma}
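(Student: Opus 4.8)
The plan is to refute the universal statement by exhibiting a single LM-system whose theory has an infinite congruence class. One structural remark guides the construction: an erased variable always produces a subterm-collapse (instantiate the erased variable by a copy of the right-hand side, making an instance of the right-hand side a proper subterm of the corresponding instance of the left-hand side, which are provably equal). Hence an LM-system is necessarily non-erasing, so a naive ``forgetting'' rule such as $f(x,y)\to g(x)$ is unavailable, and the infinite class must instead arise from a non-erasing family that reduces by ``peeling'' a counter. Accordingly, I would take the minimal system $R' = \{\, h(g(s(z))) \to g(z)\,\}$ over the signature $\{h,g,s,0\}$ (with $h,g,s$ unary and $0$ a constant). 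The key observation is that, for every $n\geq 0$, the term $h^{n}(g(s^{n}(0)))$ rewrites to $g(0)$: the innermost redex $h(g(s(s^{n-1}(0))))$ reduces to $g(s^{n-1}(0))$, peeling one $h$ and one $s$ at once and regenerating the pattern inside the remaining $h^{n-1}(\cdot)$. These terms are pairwise distinct, so the congruence class of $g(0)$ is infinite.

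Next I would verify that $R'$ is an LM-system. Termination is immediate since every step deletes one $h$ and one $s$, strictly decreasing the number of symbols; the left-hand side is linear and its only non-variable proper subterms $g(s(z))$ and $s(z)$ are $g$- and $s$-rooted, hence cannot unify with the $h$-rooted left-hand side, so there are no overlaps and the single rule is confluent, giving convergence. The system is trivially right-reduced (the right-hand side $g(z)$ is already normal) and almost-left-reduced (no proper subterm of the left-hand side is an instance of the left-hand side). Forward-closure follows from Theorem~\ref{thm-fc-irb}: an innermost redex has the form $h(g(s(T)))$ with $T$ normal, it rewrites to $g(T)$, and because $h\neq g$ the term $g(T)$ is not an instance of the left-hand side and $T$ is already normal, so $g(T)$ is in normal form and the redex reaches its normal form in one step. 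Finally $RHS(R')$ is quasi-deterministic: the right-hand-side inference of the rule with (a renamed copy of) itself has identical conclusions on both sides, so it contributes nothing, and a single rule has no variable sides, is not root-stable, and has no root-pair repetition.

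The one genuinely delicate point -- and the main obstacle -- is \emph{non-subterm-collapsing}. Reading the tower terms as strings over $\{h,g,s\}$ terminated by $0$, a proper subterm is exactly a proper suffix, and the rule is the length-reducing string rule $hgs\to g$. I would prove non-collapse from two invariants: the number of occurrences of $g$ is preserved by every step, and the leftmost $s$ of a string can never be deleted, because the $s$ removed by a step is the third letter of an occurrence of $hgs$ and so must be preceded by $hg$. Suppose some string $u$ equals a proper suffix $v$, say $u=\mu v$ with $\mu\neq\epsilon$; preservation of $g$-count forces $\mu\in\{h,s\}^{+}$, and a short case analysis on the first letter of $\mu$ shows a surviving residue (a leading $s$ is immortal, and a leading $h$ not immediately followed by $gs$ is blocked), so the normal forms of $u$ and $v$ differ -- contradiction. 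This establishes that no term is equal to a proper subterm, completing the check that $R'$ is an LM-system with an infinite congruence class.

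If one prefers to avoid re-proving non-collapse from scratch, the same conclusion is available directly from the paper's own construction: for any Minsky machine $N$, Claim~\ref{RNLMSystem} certifies that $R_N$ is an LM-system (its non-subterm-collapsing property being exactly Claim~\ref{RNNonSubterm}). Using only the initialization rule $g'(g(c(e,0,0,s(z))))\to c(e,0,0,z)$, the infinitely many distinct terms $(g'\circ g)^{n}(c(e,0,0,s^{n}(0)))$ all reduce to $c(e,0,0,0)$, so the class of $c(e,0,0,0)$ is infinite. Either route proves the lemma; the entire difficulty is concentrated in the non-subterm-collapsing verification, which is settled by the $g$-count/leftmost-$s$ invariants for $R'$ or inherited from Claim~\ref{RNNonSubterm} for $R_N$.
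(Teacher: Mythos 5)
Your construction is, up to a renaming of symbols, exactly the paper's own proof: the paper exhibits the single rule $f(g(h(x))) \rightarrow g(x)$ and observes that the congruence class of $g(y)$ is infinite (via $f^{n}(g(h^{n}(y))) \rightarrow^{*} g(y)$), asserting LM-ness with no further verification; your $h(g(s(z))) \rightarrow g(z)$ is the same system, and your fallback route through the paper's Minsky-machine system $R_N$ (Claim~\ref{RNLMSystem} together with the rule $g'(g(c(e,0,0,s(z)))) \rightarrow c(e,0,0,z)$) is a correct and fully rigorous alternative, so the lemma is established either way. Your verifications of convergence, right- and almost-left-reducedness, forward closure via Theorem~\ref{thm-fc-irb}, and quasi-determinism of $RHS(R')$ are all sound.

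However, the two invariants carrying your hand-made non-subterm-collapse argument --- the point you yourself flag as the main obstacle --- are false as stated. First, the leftmost $s$ \emph{can} be deleted: $h(g(s(0))) \rightarrow g(0)$ removes it; your justification shows only that a deleted $s$ is preceded by $hg$, which the leftmost $s$ may well be. What is true (and what you actually need in that branch) is the weaker fact that an $s$ in the \emph{first position} of the string is immortal. Second, a leading $h$ not immediately followed by $gs$ is not permanently blocked: in $h(h(g(s(s(0)))))$ an inner step yields $h(g(s(0)))$ and then $g(0)$, so the entire prefix $\mu = hh$ is consumed and there is no ``surviving residue'' of $\mu$ at all; the normal forms of $\mu t$ and $t = g(s(s(0)))$ differ not because $\mu$ survives but because firing $\mu$'s $h$'s consumes extra $s$'s from the suffix. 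A correct repair runs as follows: the $g$-count is invariant and each step cancels the $h$ immediately left of some occurrence of $g$ against the $s$ immediately right of the \emph{same} occurrence, so writing a string as $B_0\,g\,B_1\,g \cdots g\,B_k$ with $B_i \in \{h,s\}^{*}$, normal forms are obtained by cancelling trailing-$h$ letters of $B_{i-1}$ against leading-$s$ letters of $B_i$ at each $g$; since $\mu t$ and $t$ agree on all blocks except that block~$0$ is $\mu B_0$ versus $B_0$, equality of normal forms forces (reading the blocks from the rightmost $g$ leftward) equal cancellation counts at every $g$, hence equal lengths of the leading blocks, contradicting $|\mu| \geq 1$. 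With that patch --- or by simply resting on your second route --- the proof is complete.
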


\begin{proof}
The system \[ f(g(h(x))) \rightarrow g(x) \] is an $LM$-system, but the
congruence class of $g(y)$ is clearly infinite.
\end{proof}

\begin{Lemma}
Every $LM$-system is free over the signature.
\label{free-sig}
\end{Lemma}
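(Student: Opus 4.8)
I read ``free over the signature'' as the \emph{decomposition} (injectivity) property of the algebra of terms modulo~$R$: for every $f \in \Sigma$ of arity~$n$ one has $f(s_1, \ldots, s_n) =_R f(t_1, \ldots, t_n)$ if and only if $s_i =_R t_i$ for all~$i$ (together with the fact that distinct free generators stay distinct, which is immediate here, since no rule has a variable as either side and right-hand sides are never variables, so a function-rooted term can never reduce to a bare variable). The ``if'' direction is just closure of $=_R$ under the operations of~$\Sigma$; all the content lies in the ``only if'' direction, which I would reduce entirely to the structural corollaries already proved for LM-systems.

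For the ``only if'' direction, let $\widehat{s_i}$ and $\widehat{t_i}$ be the $R$-normal forms of $s_i$ and~$t_i$. Then $f(\widehat{s_1}, \ldots, \widehat{s_n})$ and $f(\widehat{t_1}, \ldots, \widehat{t_n})$ are both $\bar{\epsilon}$-irreducible, and since $R$ is convergent the hypothesis forces them to be joinable, say with common normal form~$\widehat{w}$. I would split on the root of~$\widehat{w}$. If $\widehat{w}$ has root~$f$, then since the arguments of $f(\widehat{s_1}, \ldots, \widehat{s_n})$ are already in normal form, any rewrite would occur at the root and (no rule being root-stable) would change the root away from~$f$; by Corollary~\ref{NoReversals} there is then no way back to a term rooted at~$f$, so neither term admits a root step and both coincide with~$\widehat{w}$. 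Hence $\widehat{s_i} = \widehat{t_i}$ and $s_i =_R t_i$ for every~$i$.

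The substantive case is when $\widehat{w}$ has root $g \neq f$. Corollary~\ref{NormalFormPair} then supplies a rule $l \rightarrow r$ with root pair $(f, g)$, and Corollary~\ref{UniqueRule2} (through Lemma~\ref{CommutingLemma}) forces \emph{both} $f(\widehat{s_1}, \ldots, \widehat{s_n})$ and $f(\widehat{t_1}, \ldots, \widehat{t_n})$ to reduce to~$\widehat{w}$ in a \emph{single} step by this same rule. Recording these steps as root matches $f(\widehat{s_1}, \ldots, \widehat{s_n}) = \mu(l)$ and $f(\widehat{t_1}, \ldots, \widehat{t_n}) = \nu(l)$ with $\mu(r) = \widehat{w} = \nu(r)$, I want $\mu(l) = \nu(l)$, i.e.\ $\widehat{s_i} = \widehat{t_i}$. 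Since $\mu$ and $\nu$ agree on $\var(r)$ (as $\mu(r) = \nu(r)$), this follows provided $\var(l) = \var(r)$, so the whole lemma hinges on variable-preservation.

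\textbf{Key sub-lemma (variable-preservation).} I claim every LM-system is variable-preserving, and I would prove this by contradiction. Suppose a rule $l \rightarrow r$ had a variable $x \in \var(l) \setminus \var(r)$, and form the right-hand-side critical pair of $l \rightarrow r$ with a renamed copy $l' \rightarrow r'$ of itself. The unifier $\sigma = mgu(r \eqq r')$ only identifies the variables common to $r$ and~$r'$ and leaves $x$ and its rename~$x'$ untouched, so $l\sigma$ and $l'\sigma$ differ at the position of $x$ versus~$x'$, and the inference yields $l\sigma \approx l'\sigma \in RHS(R)$. But this equation carries the same root symbol on both sides, i.e.\ it is root-stable, contradicting the quasi-determinism of~$RHS(R)$. (That a rule may legitimately overlap a renamed copy of itself is exactly the phenomenon already noted for $f(x, x) \rightarrow 0$.) This step is where I expect the real difficulty to sit: the two reduction cases above are routine applications of the earlier corollaries, whereas excluding erasing rules is the one place genuinely new work is needed, and it is precisely what makes the head-rewriting case close.
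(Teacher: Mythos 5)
Your proof is correct, and its skeleton---normalize the arguments, observe that both terms are then $\bar{\epsilon}$-irreducible, use forward-closure to force one-step joins at the root, and invoke the absence of root-stable rules and of root-pair repetitions---is essentially the paper's. The genuine difference is in the second case. The paper's proof considers (i) one term rewriting to the other in a single root step, excluded because the rule would be root-stable, and (ii) both terms rewriting in one step to a common normal form, which it dismisses on the grounds that ``no two rules can have the same root symbols at their sides.'' That dismissal only excludes two \emph{distinct} rules; it is silent on the subcase where the \emph{same} rule $l \rightarrow r$ matches both terms, with $\mu(l) \neq \nu(l)$ but $\mu(r) = \nu(r)$, which is possible precisely when $l \rightarrow r$ erases a variable. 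You isolate exactly this subcase and close it with a variable-preservation sub-lemma, proved via the right-hand-side self-overlap of $l \rightarrow r$ with a renamed copy of itself: the mgu of $r$ with its rename leaves the erased variables apart, so the inference is non-trivial ($l\sigma \neq l'\sigma$) and yields a root-stable equation in $RHS(R)$, contradicting quasi-determinism. This argument is sound---the paper's own $f(x,x) \rightarrow 0$ example licenses self-overlaps---and it patches a real lacuna in the paper's terse proof rather than merely restating it. Two smaller remarks: variable preservation also follows even more cheaply from non-subterm-collapse alone---if $x \in \var(l) \smallsetminus \var(r)$, the substitution $\theta = \{x \mapsto r\}$ gives $\theta(l) \rightarrow_R^{} r$ with $r$ a proper subterm of $\theta(l)$, since the position of $x$ in $l$ is a proper position (no left-hand side is a variable)---so you could bypass the $RHS$ machinery entirely; and in your root-$f$ case the appeal to Corollary~\ref{NoReversals} is heavier than needed, since by Theorem~\ref{thm-fc-irb} an innermost redex reaches its normal form in one root step whose rule would have root pair $(f, h)$ with $h \neq f$, directly contradicting that the common normal form is $f$-rooted.
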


\begin{proof}
Let $R$ be an $LM$-system and $f \in \Sigma^{(n)}$.
Let $s~=~f( s_1^{} , \ldots , s_n^{} )$ and $t = f( t_1^{} , \ldots , t_n^{} )$
be $\bar{\epsilon}$-irreducible terms that are joinable modulo~$R$. (Thus
$s_1^{} , \ldots , s_n^{}, t_1^{} , \ldots , t_n^{}$ are in normal form.)
Then either $t$ is in normal form and $s \to t$, or $s \to \widehat{t}$ and $t \to \widehat{t}$
where $\widehat{t}$ is the normal form of $s$ and $t$. (Note that since
$R$ is forward-closed, reduction to a normal form will take only one step.)
But the former is impossible because no rule in~$R$ can have
the same root symbol on both the left-hand side
and the right-hand side. The latter is ruled out because
no two rules can have the same root symbols at their
sides, i.e., root-pair repetitions are not allowed.
\end{proof}

\begin{Lemma}
\label{UniqueEquation}
Let $R$ be an $LM$-system and $s = f(s_1^{}, \ldots , s_m^{})$ and $t
= g(t_1^{}, \ldots , t_n^{})$ be $\bar{\epsilon}$-irreducible terms such
that~$f \neq g$.  Then $s$ and $t$ are joinable modulo~$R$ if and only
if there is a unique \emph{equation} $e_1^{} \approx e_2^{} \; \in \;
RHS(R)$ with root pair $(f, g)$ such that \( \displaystyle{s ~
  \mathop{\xrightarrow{\hspace*{0.75cm}}}_{e_1^{} \rightarrow e_2^{}}^{} ~ t}. \) 
\end{Lemma}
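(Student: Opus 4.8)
The plan is to read this lemma as a repackaging of Lemma~\ref{ExactlyTwo}: its two cases~(a) and~(b) will be folded into the single assertion that one $RHS(R)$-equation rewrites $s$ directly to $t$, and uniqueness will come from the quasi-determinism of $RHS(R)$. The ($\Leftarrow$) direction I would dispatch first, since it is immediate once one observes that every member of $RHS(R)$ is an equational consequence of $R$: an original rule $l\to r$ gives $l =_E r$, while the conclusion $s\sigma\approx u\sigma$ of a RHS-critical-pair inference from $s\to t$ and $u\to v$ satisfies $s\sigma =_E t\sigma = v\sigma =_E u\sigma$ because $\sigma=\Mgu(v,t)$. Hence if $s\to_{e_1\to e_2} t$ with $e_1\approx e_2\in RHS(R)$, say $s=\delta(e_1)$ and $t=\delta(e_2)$, then $s=\delta(e_1) =_E \delta(e_2)=t$, and convergence of $R$ yields $s\downarrow_R t$.

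For the ($\Rightarrow$) direction I would begin by noting that $f\neq g$ forces $s\neq t$; since both are $\bar\epsilon$-irreducible and joinable, at least one of them is reducible and therefore an innermost redex (its only possible redex position is the root). I then split on which one. If $s$ is an innermost redex, I apply Lemma~\ref{ExactlyTwo} to the pair $(s,t)$. In case~(a) the witnessing rule $l\to r$ already lies in $R\subseteq RHS(R)$, has root pair $(f,g)$, and satisfies $s\to_{l\to r} t$, so I take $e_1\approx e_2:=l\approx r$. In case~(b) I reproduce the RHS-inference construction from the proof of Lemma~\ref{ExactlyTwo}: the rules $l_1\to r_1$ and $l_2\to r_2$ reducing $s$ and $t$ to a common $\widehat t$ have unifiable right-hand sides, so with $\theta=\Mgu(r_1=^? r_2)$ the equation $\theta(l_1)\approx\theta(l_2)\in RHS(R)$ has root pair $(f,g)$. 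Writing $\sigma_1,\sigma_2$ for the matchers of $l_1,l_2$ against $s,t$ and $\sigma=\sigma_1\cup\sigma_2$ (variables of the two rules made disjoint), $\sigma$ unifies $r_1$ and $r_2$, so $\sigma=\delta\circ\theta$ for some $\delta$; then $\delta(\theta(l_1))=\sigma(l_1)=s$ and $\delta(\theta(l_2))=\sigma(l_2)=t$, giving $s\to_{\theta(l_1)\to\theta(l_2)} t$ as required.

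It remains to treat the branch where $s$ is in normal form, so $s$ is the common normal form of $s$ and $t$ and $t\to_R^+ s$. As $t$ is an innermost redex of the forward-closed system $R$, a single rule $l\to r\in R$ realizes this, with $t=\theta(l)$ and $s=\theta(r)$; this rule carries root symbols $f$ and $g$. Using that equations are symmetric, I read $l\approx r$ oriented as $e_1:=r$ (root $f$), $e_2:=l$ (root $g$), obtaining an $RHS(R)$-equation of root pair $(f,g)$ with $s=\theta(e_1)\to\theta(e_2)=t$. In all cases I have produced an equation of root pair $(f,g)$ rewriting $s$ to $t$. For uniqueness I invoke condition~(iii): $RHS(R)$ is quasi-deterministic and in particular admits no root-pair repetition, so it contains at most one equation whose two sides carry the root symbols $f$ and $g$. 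Since any equation rewriting the root-$f$ term $s$ to the root-$g$ term $t$ must act at the root and hence have root pair $(f,g)$, the equation above is the unique such one.

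I expect the fiddly step to be the substitution bookkeeping in case~(b)---verifying that $\sigma=\delta\circ\theta$ really yields $\delta(\theta(l_1))=s$ and $\delta(\theta(l_2))=t$---together with the orientation subtlety in the ``$s$ in normal form'' branch, where the required root pair $(f,g)$ is recovered only by exploiting the symmetry of $\approx$. Neither is deep: the former is exactly the computation already carried out inside the proof of Lemma~\ref{ExactlyTwo}, and the latter is a matter of presentation, so the bulk of the argument is structural bookkeeping on top of Lemma~\ref{ExactlyTwo} and the quasi-determinism of $RHS(R)$.
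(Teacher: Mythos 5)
Your proposal is correct and takes essentially the same route as the paper, whose entire proof of this lemma is the remark that it follows from Lemma~\ref{ExactlyTwo} and its proof: your case~(a)/(b) analysis, the RHS-inference construction with $\theta = \Mgu(r_1 =^? r_2)$ and the bookkeeping $\sigma = \delta\circ\theta$ giving $s \to_{\theta(l_1)\to\theta(l_2)} t$, and uniqueness via the absence of root-pair repetitions in $RHS(R)$ are precisely the intended argument. Your explicit treatment of the symmetric branch where $s$ is the normal form (reading the rule's equation in the reverse orientation, with root pairs taken as unordered, exactly as the paper itself does in Corollary~\ref{NoReversals}) and of the ($\Leftarrow$) direction via convergence merely spells out details the paper leaves implicit.
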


\begin{proof}
The result follows from Lemma~\ref{ExactlyTwo} and its proof.
\end{proof}


\end{document}